\newtheorem{theorem}{Theorem}[section] 
\newtheorem{lemma}[theorem]{Lemma}
\newtheorem{corollary}[theorem]{Corollary}
\newtheorem{proposition}[theorem]{Proposition}
\newtheorem{openproblem}[theorem]{Open Problem}
\theoremstyle{definition}
\newtheorem{definition}[theorem]{Definition}
\newtheorem{example}[theorem]{Example}
\newtheorem{remark}[theorem]{Remark}
\newtheorem{notation}[theorem]{Notation}
\newcommand{\C}{\mathbb{C}}
\newcommand{\N}{\mathbb{N}}
\newcommand{\R}{\mathbb{R}}
\newcommand{\Z}{\mathbb{Z}}
\newcommand{\bfPhi}{\mathbf{\Phi}}
\newcommand{\calD}{\mathcal{D}} 
\newcommand{\calL}{\mathcal{L}}
\newcommand{\calM}{\mathcal{M}} 
\newcommand{\diam}{\textnormal{diam}} 
\newcommand{\mink}{\mathscr{M}}
\newcommand{\vol}{\textnormal{vol}}
\begin{document}

\title[Box-Counting Fractal Strings and Zeta Functions]{Box-Counting Fractal Strings, Zeta Functions, and Equivalent Forms of Minkowski Dimension}

\author[M.~L.~Lapidus]{Michel~L.~Lapidus}
\address{Department of Mathematics\\ 
University of California\\ Riverside, California 92521-0135 USA}
\email{lapidus@math.ucr.edu}
\thanks{The work of the first author (M.~L.~Lapidus) was partially supported by the US National Science Foundation (NSF) under the research grant DMS--1107750, as well as by the Institut des Hautes Etudes Scientifiques (IHES)}

\author[J.~A.~Rock]{John~A.~Rock}
\address{Department of Mathematics and Statistics\\ California State Polytechnic University\\ Pomona, California 91768 USA}
\email{jarock@csupomona.edu}

\author[D.~\v Zubrini\'c]{Darko \v Zubrini\'c}
\address{Department of Applied Mathematics\\ Faculty of Electrical Engineering and Computing\\ University of Zagreb\\ Unska 3, 10000 Zagreb, Croatia}
\email{darko.zubrinic@fer.hr}
\thanks{The work of the third author (D.~\v Zubrini\'c) was supported by the Ministry of Science of the Republic of Croatia under grant no.\ 036-0361621-1291}

\subjclass[2010]{Primary: 11M41, 28A12, 28A75, 28A80. Secondary: 28B15, 37F35, 40A05, 40A10}
\date{July 18, 2012}


\keywords{Fractal string, geometric zeta function, box-counting fractal string, box-counting zeta function, distance zeta function, tube zeta function, similarity dimension, box-counting dimension, Minkowski dimension, Minkowski content, complex dimensions, Cantor set, Cantor string, counting function, self-similar set.}

\begin{abstract}
We discuss a number of techniques for determining the Minkowski dimension of bounded subsets of some Euclidean space of any dimension, including: the box-counting dimension and equivalent definitions based on various box-counting functions; the similarity dimension via the Moran equation (at least in the case of self-similar sets); the order of the (box-)counting function; the classic result on compact subsets of the real line due to Besicovitch and Taylor, as adapted to the theory of fractal strings; and the abscissae of convergence of new classes of zeta functions. Specifically, we define box-counting zeta functions of infinite bounded subsets of Euclidean space and discuss results from \cite{LapRaZu} pertaining to distance and tube zeta functions. Appealing to an analysis of these zeta functions allows for the development of theories of complex dimensions for bounded sets in Euclidean space, extending techniques and results regarding (ordinary) fractal strings obtained by the first author and van Frankenhuijsen.
\end{abstract}

\maketitle


\section{Introduction}
\label{sec:Introduction}

Motivated by the theory of complex dimensions of fractals strings (the main theme of \cite{LapvF6}), we introduce box-counting fractal strings and box-counting zeta functions which, along with the distance and tube zeta functions of \cite{LapRaZu}, provide possible foundations for the pursuit of theories of complex dimensions for arbitrary bounded sets in Euclidean space of any dimension. We also summarize a variety of well-known techniques for determining the box-counting dimension, or equivalently the Minkowski dimension, of such sets. Thus, while new results are presented in this paper, it is partially expository and also partially tutorial.

Our main result establishes line (iv) of the following theorem. (See also Theorem \ref{thm:MainResult} below, along with the relevant definitions provided in this paper.) The other lines have been established elsewhere in the literature, as cited accordingly throughout the paper.

\begin{theorem}
\label{thm:Summary}
Let $A$ be a bounded infinite subset of $\R^m$ \emph{(}equipped with the usual metric\emph{)}. Then the following quantities are equal\emph{:}\footnote{The fact that $A$ is infinite is only required for part (iv) of the theorem; see Remark \ref{rmk:FiniteBoxCountingFractalString}.}
\begin{enumerate}
\setlength{\itemsep}{0in}
\item the \emph{upper box-counting dimension} of $A$\emph{;}
\item the \emph{upper Minkowski dimension} of $A$\emph{;}
\item the \emph{asymptotic order of growth of the counting function} of the box-counting fractal string $\calL_B$\emph{;}
\item the \emph{abscissa of convergence} of the box-counting zeta function $\zeta_B$\emph{;}
\item the \emph{abscissa of convergence} of the distance zeta function $\zeta_d$.
\end{enumerate}
\end{theorem}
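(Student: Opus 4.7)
The plan is to arrange the five quantities into a cycle of equalities, relegating the classical steps (i)$=$(ii), (i)$=$(iii), and (ii)$=$(v) to the cited literature and concentrating effort on the new equality (iii)$=$(iv), which the introduction singles out as the essential novelty.

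For (i)$=$(ii), I would invoke the standard reformulation of the upper Minkowski content via box covers: writing $N_B(A,\varepsilon)$ for the number of $\varepsilon$-boxes in the standard grid of $\R^m$ that meet $A$, and $A_\varepsilon$ for the $\varepsilon$-tube of $A$, one has $\varepsilon^m N_B(A,\varepsilon) \asymp \vol(A_\varepsilon)$ uniformly in $\varepsilon\in(0,1]$, so the two $\limsup$ definitions yield the same value. The equality (ii)$=$(v) is precisely the main theorem of \cite{LapRaZu} for the distance zeta function, which I would cite directly. The equality (i)$=$(iii) will be essentially built into the construction of $\calL_B$: one arranges the box-counting fractal string so that its counting function $N_{\calL_B}$ coincides, up to bounded multiplicative factors, with the function $\varepsilon\mapsto N_B(A,\varepsilon)$ sampled along a geometric scale $\varepsilon_n = b^{-n}$, whence the asymptotic order of growth of $N_{\calL_B}$ equals the upper box-counting dimension of $A$ by definition.

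The heart of the argument is (iii)$=$(iv). Here I would apply the fractal-string form of the Besicovitch--Taylor identity recorded in \cite{LapvF6}: for any fractal string $\calL=\{l_j\}$, the abscissa of convergence of its geometric zeta function $\sum_j l_j^s$ coincides with the asymptotic order of growth of its counting function $N_\calL(x)=\#\{j:1/l_j\le x\}$. Since $\zeta_B$ is by definition the geometric zeta function of $\calL_B$, this yields (iii)$=$(iv) in one stroke, provided $\calL_B$ is a legitimate fractal string whose lengths are summable for $\Re s$ sufficiently large. That summability follows from the bound $N_B(A,\varepsilon_n)=O(\varepsilon_n^{-d})$ valid for every $d$ strictly larger than the upper box-counting dimension of $A$, while the hypothesis that $A$ is infinite guarantees that $\calL_B$ actually contains infinitely many nonzero lengths, preventing $\zeta_B$ from being entire.

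The main obstacle I anticipate is the passage between the continuous scale $\varepsilon\to 0^+$ implicit in the definition of the upper box-counting dimension and the discrete geometric scale $\varepsilon_n=b^{-n}$ used to define $\calL_B$. This is resolved by monotonicity of $N_B(A,\cdot)$ together with the elementary fact that the $\limsup$ of $\log N_B(A,\varepsilon)/\log(1/\varepsilon)$ along any geometric subsequence agrees with its $\limsup$ over the continuum. A secondary subtlety, flagged in the footnote, is the finite case: if $A$ were finite then $\calL_B$ would be finite, $\zeta_B$ would be entire, and its abscissa of convergence would collapse to $-\infty$, so the infiniteness hypothesis is genuinely needed only for (iv). Once these technicalities are handled, the cycle of equalities closes and the theorem follows.
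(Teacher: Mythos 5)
Your high-level plan matches the paper's: defer (i)$=$(ii) to classical references, defer (ii)$=$(v) to \cite{LapRaZu}, and derive (iii)$=$(iv) from the fractal-string version of Besicovitch--Taylor (in the paper this is Proposition~\ref{prop:CountingFunctionDFractalStringD}, proved via Lemma~\ref{lem:GeometricZetaFunctionIntegralTransformCountingFunction}, together with Remark~\ref{rmk:DimensionEqualsOrder}). But your description of the object $\calL_B$ is not the one the theorem is about, and this affects both the bridge (i)$=$(iii) and the ``main obstacle'' you flag. The paper does \emph{not} build $\calL_B$ by sampling $N_B(A,\cdot)$ along a geometric scale $\varepsilon_n=b^{-n}$. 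Instead (Definition~\ref{def:BoxCountingFractalString}), the distinct scales $l_n$ are precisely the reciprocals of the jump points of the step function $N_B(A,\cdot)$, with multiplicities equal to the jump sizes $M_{n+1}-M_n$. The payoff is Lemma~\ref{lem:CountingFunctionCorrelation}: $N_{\calL_B}(x)=N_B(A,x)$ \emph{exactly} for every $x$ off the countable set $\{l_n^{-1}\}$ --- not merely ``up to bounded multiplicative factors'' along a geometric subsequence. With this in hand, (iii)$=$(i) is immediate from the $O(x^\alpha)$-characterization of $\overline{\dim}_B A$ in Remark~\ref{rmk:EquivalentUpperBoxDimension}, since the two nondecreasing functions agree off a countable set. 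The continuous-versus-geometric-scale issue you anticipate (and would resolve via Proposition~\ref{prop:GeometricSequenceSufficient}) simply does not arise for the paper's $\calL_B$; that proposition is what the paper uses for the \emph{tessellation} fractal string $\calL(A,U,\lambda)$ of Section~4.3, whose scales $l_n=\lambda^n$ really are geometric. You appear to have conflated those two constructions.

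There is also a point your proposal silently assumes that the paper must and does verify: that the sequence $(l_n)$ built from the jump points actually constitutes a fractal string in the sense of Definition~\ref{def:FractalString} (a nonincreasing null sequence with each scale of finite multiplicity). This is the content of Proposition~\ref{prop:BoxCountingJumps} --- the jump points form a strictly increasing unbounded sequence because $A$ is bounded and infinite, and the range of $N_B(A,\cdot)$ is countably infinite. In a geometric-sampling construction this worry is trivial, which is another sign the two constructions are being confused. So the gap is concrete: you have sketched a correct proof of Theorem~\ref{thm:TessellationDimension} (for the tessellation string), not of the clause of the theorem that refers to $\calL_B$. To repair it, replace the geometric-sampling description with the jump-point definition, replace the ``bounded multiplicative factors along a geometric subsequence'' step with Lemma~\ref{lem:CountingFunctionCorrelation}, and supply (or cite) Proposition~\ref{prop:BoxCountingJumps} to justify that $\calL_B$ is a legitimate fractal string; the Besicovitch--Taylor step and the two classical citations then carry the rest as you intended.
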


A summary of the remaining sections of this paper is as follows:
 
In Section \ref{sec:ClassicNotionsOfDimension}, we discuss classical notions of dimension such as \emph{similarity dimension} (or \emph{exponent}), \emph{box-counting dimension}, and \emph{Minkowski dimension} as well as their properties. (See \cite{BesTa,Falc,HaPo,Hut,LapLuvF1,LapLuvF2,LapPe3,LapPe2,LapPeWin,LapPo1,LapRaZu,
LapvF6,MeSa,Mor,Tr2,Tri,Zu,ZuZup}.)

In Section \ref{sec:FractalStringsAndZetaFunctions}, we summarize but a few of the interesting results on fractal strings and counting functions regarding, among other things, geometric zeta functions, complex dimensions, the order of a counting function, and connections with Minkowski measurability. (See \cite{BesTa,LapPo1,LapvF6,Lev,Tri}.) The material in Sections \ref{sec:ClassicNotionsOfDimension} and \ref{sec:FractalStringsAndZetaFunctions} motivates the results presented in Sections \ref{sec:BoxCountingFractalStringsAndZetaFunctions} and \ref{sec:DistanceAndTubeZetaFunctions}.

In Section \ref{sec:BoxCountingFractalStringsAndZetaFunctions}, we introduce \emph{box-counting fractal strings} and \emph{box-counting zeta functions} and, in particular, we show that the abscissa of convergence of the box-counting zeta function of a bounded infinite set is the upper box-counting dimension of the set. These topics are the focus of \cite{LapRoZu}.

In Section \ref{sec:DistanceAndTubeZetaFunctions}, we share recent results from \cite{LapRaZu} on \emph{distance, tube} and \emph{relative zeta functions}, including connections between the corresponding complex dimensions and Minkowski content and measurability. 

In Section \ref{sec:SummaryOfResultsAndOpenProblems}, Theorem \ref{thm:Summary} is restated in Theorem \ref{thm:MainResult} using notation and terminology discussed throughout the paper. We also propose several open problems for future work in this area.

\section{Classic notions of dimension}
\label{sec:ClassicNotionsOfDimension}

We begin with a brief discussion of a classic method for constructing self-similar fractals and a famous fractal, the Cantor set $C$. (See \cite{Falc,Hut}.)

\begin{definition}
\label{def:IFSAndAttractor}
Let $N$ be an integer such that $N \geq 2$. An \emph{iterated function system} (IFS) $\bfPhi=\{\Phi_j\}_{j=1}^N$ is a finite family of contractions on a complete metric space $(X,d_X)$. Thus, for all $x,y \in X$ and each $j=1,\ldots,N$ we have 
\begin{align}
d_X(\Phi_j(x),\Phi_j(y))	&\leq r_jd_X(x,y),
\end{align}
where $0<r_j<1$ is the \emph{scaling ratio} (or Lipschitz constant) of $\Phi_j$ for each $j=1,\ldots,N$. 

The \emph{attractor} of $\bfPhi$ is the nonempty compact set $F \subset X$ defined as the unique fixed point of the contraction mapping 
\begin{align}
\label{eqn:IFSAsMap}
\displaystyle
\bfPhi(\cdot)	&:= \bigcup_{j=1}^N \Phi_j(\cdot)
\end{align}
on the space of compact subsets of $X$ equipped with the Hausdorff metric. That is, $F=\bfPhi(F)$. If 
\begin{align}
\label{eqn:IFSSelfSimilar}
d_X(\Phi_j(x),\Phi_j(y)) &= r_jd_X(x,y)
\end{align}
for each $j=1,\ldots,N$ (i.e., if the contraction maps $\Phi_j$ are similarities with scaling ratios $r_j$), then the attractor $F$ is the \emph{self-similar set} associated with $\bfPhi$.
\end{definition}

\begin{remark}
\label{rmk:AttentionOnEuclideanSpacesAndOSC}
We focus our attention on Euclidean spaces of the form $X=\R^m$, where $m$ is a positive integer and $d_X=d_m$ is the classic $m$-dimensional Euclidean distance. We  denote $d_m(x,y)$ by $|x-y|$. Furthermore, we consider only iterated functions systems which satisfy the \emph{open set condition} (see \cite{Falc,Hut}). Recall that an IFS $\bfPhi$ satisfies the open set condition if there is a nonempty open set $V \subset \R^m$ for which $\Phi_j(V)\subset V$ for each $j$ and the images $\Phi_j(V)$ are pairwise disjoint.
\end{remark}


\begin{figure}
\includegraphics[scale=.5]{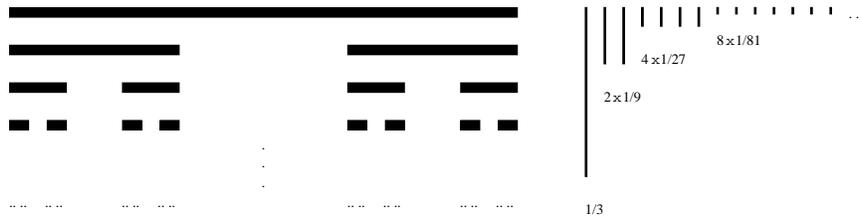}
\caption{The classic ``middle-third removal'' construction of the Cantor set $C$ is depicted on the left. The Cantor string $\calL_{CS}$ is the nonincreasing sequence comprising the lengths of the removed intervals which are depicted on the right as a fractal harp.}
\label{fig:CantorConstructionAndHarp}
\end{figure}

\begin{example}[The dimension of the Cantor set]
The Cantor set $C$ can be constructed in various ways. For instance, we have the classic ``middle-third removal'' construction of $C$ as depicted in Figure 1. 
A more elegant construction shows $C$ to be the unique nonempty attractor of the iterated function system $\bfPhi_C$ on $[0,1]$ given by the two contracting similarities $\varphi_1(x)=x/3$ and $\varphi_2(x)=x/3+2/3$. The box-counting dimension of $C$ is $\log_{3}2$, a fact which can be established with any of the myriad of formulas presented in this paper. Notably, $\log_{3}2$ is equivalently found to be: the \emph{order of the geometric counting function} (see Remark \ref{rmk:DimensionEqualsOrder}) of the \emph{box-counting fractal string} $\calL_B$ of $C$ (which is related but not equal to the Cantor string $\calL_{CS}$, see Definition \ref{def:BoxCountingFractalString}, Equation \eqref{eqn:CantorStringLengths}, and \cite[Ch.1]{LapvF6}); the \emph{abscissa of convergence} of either the \emph{geometric zeta function} of $\calL_{CS}$ (Definition \ref{def:GeometricZetaFunction}), the \emph{box-counting zeta function} of $C$ (Definition \ref{def:BoxCountingZetaFunction}), the \emph{distance zeta function} of $C$ (Definition \ref{def:DistanceZetaFunction}), or the \emph{tube zeta function} of $C$ (Equation \eqref{tube_zeta} in Section \ref{sec:TubeZetaFunction}); or else the unique real-valued solution of the corresponding Moran equation (cf. Equation \eqref{eqn:Moran}):  $2\cdot3^{-s}=1$.
\end{example}

\subsection{Similarity dimension}
\label{sec:SimilarityDimension}

The first notion of `dimension' we consider is the similarity dimension (or `similarity exponent') of a self-similar set.

\begin{definition}
\label{def:SimilarityDimension}
Let $\bfPhi$ be an iterated function system that satisfies the open set condition and is generated by similarities with scaling ratios $\{r_j\}_{j=1}^N$, with $N \geq 2$. Then the \emph{similarity dimension} of the attractor of $\bfPhi$ (that is, of the self-similar set associated with $\bfPhi$) is the unique real solution $D_\bfPhi$ of the equation  
\begin{align}
\label{eqn:Moran}
\sum_{j=1}^N r_j^\sigma	&= 1, \quad \sigma \in \R.
\end{align}
\end{definition}

\begin{remark}
\label{rmk:MoransTheorem}
Equation \eqref{eqn:Moran} is known as Moran's equation. Moran's Theorem is a well-known result which states that the similarity dimension $D_\bfPhi$ is equal to the box-counting (and Hausdorff) dimension of the self-similar attractor of $\bfPhi$.\footnote{Moran's original result in \cite{Mor} was established in $\R$ (i.e., for $m=1$) but is valid for $m\geq 1$; cf. \cite{Falc,Hut}.} In fact, $D_\bfPhi$ is positive, a fact that can be verified directly from Equation \eqref{eqn:Moran}. For details regarding iterated functions systems, the open set condition, and Moran's Theorem, see \cite[Ch.9]{Falc} as well as \cite{Hut} and \cite{Mor}.
\end{remark}

\subsection{Box-counting dimension}
\label{sec:BoxCountingDimension}

In this section we discuss the central notion of box-counting dimension and some of its properties.

\begin{definition}
\label{def:BoxCountingFunction}
Let $A$ be a subset of $\R^m$. The \emph{box-counting function} of $A$ is the function $N_B(A,\cdot): (0,\infty) \rightarrow \N \cup \{0\}$, where (for a given $x>0$) $N_B(A,x)$ denotes the maximum number of disjoint closed balls $B(a,x^{-1})$ with centers $a \in A$ of radius $x^{-1}$.
\end{definition}

\begin{definition}
\label{def:BoxCountingDimension}
For a set $A \subset \R^m$, the \emph{lower} and \emph{upper box-counting dimensions} of $A$, denoted $\underline{\dim}_{B}A$ and $\overline{\dim}_{B}A$, respectively, are given by
\begin{equation}
\begin{aligned}
\underline{\dim}_{B}A&:= \liminf_{x\to\infty}\frac{\log N_B(A,x)}{\log x}, \label{eqn:UpperBoxDimension}\\
\overline{\dim}_{B}A&:= \limsup_{x\to\infty}\frac{\log N_B(A,x)}{\log x}. 
\end{aligned}
\end{equation}



When $\underline{\dim}_{B}A=\overline{\dim}_{B}A$, the following limit exists and is called the \emph{box-counting dimension} of $A$, denoted $\dim_{B}A$:
\begin{align*}
\dim_{B}A	&:= \lim_{x\to\infty}\frac{\log N_B(A,x)}{\log x}.
\end{align*}
\end{definition}

\begin{remark}
\label{rmk:EquivalentUpperBoxDimension}
The upper and lower box-counting dimensions are independent of the ambient dimension $m$. In most applications the set $A$ is such that $N_B(A,x) \asymp x^d$ as $x\to\infty$, for some constant $d \in [0,m]$ (the relation $\asymp$ is explained at the end of Notation \ref{not:DistanceVolumeAndBigO} below). It is easy to see that then, $\dim_{B}A=d$. Also, the following equivalent form of the upper box-counting dimension will prove to be useful in Section \ref{sec:BoxCountingFractalStringsAndZetaFunctions} (see Notation \ref{not:DistanceVolumeAndBigO} below for an explanation of `Big-$O$' notation: $f(x)=O(g(x))$ as $x \rightarrow \infty$).
\begin{align*}
\overline{\dim}_{B}A	&=\inf\{\alpha \geq 0 : N_B(A,x)=O(x^{\alpha}) \textnormal{ as } x \rightarrow \infty\}.
\end{align*}
\end{remark}

\begin{remark}
\label{rmk:VariousCountingFunctions} 
There are many equivalent definitions of the box-counting dimension (see \cite[Ch.3]{Falc}). For instance, the box-counting function $N_B(A,x)$ given in Definition \ref{def:BoxCountingFunction} may be replaced by:
\begin{enumerate}
\setlength{\itemsep}{0in}
\item the minimum number of sets of diameter at most $x^{-1}$ required to cover $A$;
\item the minimum number of closed balls of radius $x^{-1}$ required to cover $A$;
\item the minimum number of closed cubes with side length $x^{-1}$ required to cover $A$; or
\item the number of $x^{-1}$-mesh cubes that intersect $A$.
\end{enumerate}
\end{remark}

\begin{remark}
\label{rmk:varepsilonVSx}
One may also define the box-counting function in terms of $\varepsilon>0$, where $\varepsilon=x^{-1}$ plays the role of the scale under consideration.\footnote{Indeed, note that given $\varepsilon>0$, $N_B(A,\varepsilon^{-1})$ is the maximum number of disjoint balls $B(a,\varepsilon)$ with center $a \in A$ and radius $\varepsilon$ (or, \emph{mutatis mutandis}, $\varepsilon$ denotes any of the counterparts to the notion of radius given in (i)--(iv) of Remark \ref{rmk:VariousCountingFunctions}).} Although this may be a more natural way to describe a box-counting function, the results relating box-counting functions and \emph{geometric counting functions} (see Definition \ref{def:GeometricCountingFunction}) presented in Section \ref{sec:BoxCountingFractalStringsAndZetaFunctions} are stated and analyzed in terms of $x>0$. Moreover, this convention is used throughout \cite{LapvF6} and that text will be vital to the development of further material based on the results presented in this paper.
\end{remark}

\begin{remark}
\label{rmk:PropertiesBoxCountingDimension}
If the box-counting function $N_B(A,x)$ is given as in Definition \ref{def:BoxCountingFunction} or one of the alternatives in Remark \ref{rmk:VariousCountingFunctions}, then the upper and lower box-counting dimensions have the following properties (cf. \cite[Ch.~3]{Falc} and \cite{Fdrr}):
\begin{enumerate}
\setlength{\itemsep}{0in}
\item Let $V$ be a bounded $n$-dimensional submanifold of $\R^m$ which is \emph{rectifiable} in the sense that $V \subset f(\R^n)$, where $f:\R^n\to\R^m$ is a Lipschitz function. Then $\dim_{B}V=n$.
\item Both $\overline{\dim}_B$ and $\underline{\dim}_B$ are monotonic. That is, if $A_1 \subset A_2 \subset \R^m$, then
	\begin{align*}
	\overline{\dim}_{B}A_1 &\leq \overline{\dim}_{B}A_2, \quad 			 \underline{\dim}_{B}A_1 \leq \underline{\dim}_{B}A_2.
	\end{align*}

\item Let $\overline{A}$ denote the closure of $A$ (i.e., the smallest closed subset of $\R^m$ which contains $A$). Then 
	\begin{align*}
	\overline{\dim}_{B}\overline{A}	&= \overline{\dim}_{B}A, \quad
	\underline{\dim}_{B}\overline{A} = \underline{\dim}_{B}A.
	\end{align*}
\item  For any two sets $A_1,A_2 \subset \R^m$,
	\begin{align*}
	\overline{\dim}_{B}(A_1 \cup A_2)&= \max\left\{\overline{\dim}_{B}A_1,\overline{\dim}_{B}A_2\right\}.  
	\end{align*} That is, $\overline{\dim}_B$ is finitely stable. On the other hand, $\underline{\dim}_B$ is not finitely stable. 
\item Neither $\overline{\dim}_B$ nor $\underline{\dim}_B$ is countably stable. That is, neither $\overline{\dim}_B$ nor $\underline{\dim}_B$ satisfies the analogue of property (iv) for a countable collection of subsets of $\R^m$. 	
\end{enumerate}
\end{remark}

Concerning the loss of finite stability of the lower box dimension mentioned in
(iv) of Remark \ref{rmk:PropertiesBoxCountingDimension}, it is possible to construct two bounded sets $A$ and $B$ in $\R^m$ such that their lower box-counting dimensions are both equal to zero, while the box-counting dimension of their union is equal to $m$; see \cite[Theorem 1.4]{Zu}.

A simple way to see why property (v) of Remark \ref{rmk:PropertiesBoxCountingDimension} is satisfied for the upper box-counting dimension is to consider the countable set $A=\{1,1/2,1/3,\ldots\}$ and note that $\overline{\dim}_{B}A=1/2$ whereas $\overline{\dim}_{B}\{1/j\}=0$ for each positive integer $j$.

The following proposition shows that one need only consider certain discrete sequences of scales which tend to zero in order to determine the box-counting dimension of a set.

\begin{proposition}
\label{prop:GeometricSequenceSufficient}
Let $\lambda>1$ and $A \subset \R^m$. Then
\begin{align*}
\underline{\dim}_{B}A	&=\liminf_{k \rightarrow \infty}\frac{\log N_B(A,\lambda^{k})}{\log\lambda^{k}}, \\
\overline{\dim}_{B}A	&=\limsup_{k \rightarrow \infty}\frac{\log N_B(A,\lambda^{k})}{\log\lambda^{k}}.
\end{align*}
\end{proposition}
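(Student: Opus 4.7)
The plan is to use a sandwich argument based on the monotonicity of the box-counting function $N_B(A,\cdot)$ in its second argument. First I would observe that if $x_1 \leq x_2$, then $x_1^{-1} \geq x_2^{-1}$, so any collection of disjoint closed balls of radius $x_1^{-1}$ centered in $A$ can be shrunk to disjoint balls of radius $x_2^{-1}$; hence $N_B(A,x_1) \leq N_B(A,x_2)$. Thus $N_B(A,\cdot)$ is non-decreasing on $(0,\infty)$.

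Next, fix $\lambda > 1$. For each sufficiently large $x>0$ there is a unique integer $k=k(x) \geq 0$ with $\lambda^k \leq x < \lambda^{k+1}$, and $k \to \infty$ as $x \to \infty$. Combining monotonicity of $N_B(A,\cdot)$ with the obvious $\log \lambda^k \leq \log x < \log \lambda^{k+1}$ (and noting that $\log x > 0$ for large $x$), one gets the two-sided estimate
\begin{align*}
\frac{\log N_B(A,\lambda^k)}{\log \lambda^{k+1}} \;\leq\; \frac{\log N_B(A,x)}{\log x} \;\leq\; \frac{\log N_B(A,\lambda^{k+1})}{\log \lambda^k}.
\end{align*}
Since $\log \lambda^{k+1}/\log \lambda^k = (k+1)/k \to 1$ as $k \to \infty$, and the factors $\log N_B(A,\lambda^{k})$ are nonnegative for $A$ nonempty and $k$ large, each of the outer ratios has the same $\limsup$ (respectively $\liminf$) as $\log N_B(A,\lambda^k)/\log \lambda^k$.

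Passing to $\limsup$ (resp.\ $\liminf$) as $x \to \infty$ in the displayed inequalities, the upper bound gives $\overline{\dim}_B A \leq \limsup_{k\to\infty} \log N_B(A,\lambda^k)/\log \lambda^k$, while the reverse inequality is immediate since the sequence along $x = \lambda^k$ is a subsequence of the full family $x \to \infty$. The same argument applied to the lower bound, together with the analogous subsequence observation for $\liminf$, yields the equality for $\underline{\dim}_B A$. I do not anticipate a serious obstacle here; the only subtlety is verifying that the factor $k/(k+1) \to 1$ can indeed be pulled through the $\liminf$ and $\limsup$, which is legitimate precisely because $\log N_B(A,\lambda^k) \geq 0$ eventually.
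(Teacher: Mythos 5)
Your argument is correct and follows essentially the same route as the paper's: bracket $x$ between consecutive powers of $\lambda$, use monotonicity of $N_B(A,\cdot)$ to sandwich the ratio, observe that the discrepancy $\log\lambda^{k}$ vs.\ $\log\lambda^{k+1}$ is asymptotically negligible, and combine with the trivial subsequence inequality. The only difference is one of presentation — you spell out both sides of the sandwich and the bookkeeping for the $k/(k+1)$ factor, whereas the paper writes down only the upper estimate and leaves the rest as \emph{mutatis mutandis}.
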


\begin{proof}
If $\lambda^{k}< x \leq\lambda^{k+1}$, then
\begin{align*}
\frac{\log N_B(A,x)}{\log x} &\leq  
\frac{\log N_B(A,\lambda^{k+1})}{\log\lambda^{k}} = 
\frac{\log N_B(A,\lambda^{k+1})}{\log\lambda^{k+1}-\log\lambda}.
\end{align*}
Therefore, 
\begin{align*}
\limsup_{x\to\infty}\frac{\log N_B(A, x)}{\log  x}	& \leq \limsup_{k \rightarrow \infty}\frac{\log N_B(A,\lambda^{k})}{\log\lambda^{k}}.
\end{align*}
The opposite inequality clearly holds and the case for the lower limits follows \emph{mutatis mutandis}.
\end{proof}

\begin{example}[Box-counting dimension of the Cantor set]
\label{eg:BoxCountingDimensionOfCantorSet}
Let $C$ be the Cantor set and $n \in \N$. Also, let $N_B(A,3^n)$ denote the minimum number of disjoint closed intervals with length $3^{-n}$ required to cover $C$. Then
$N_B(A,3^n)=2^n$, so by Proposition \ref{prop:GeometricSequenceSufficient} we have
\begin{align}
\label{eqn:BoxCountingDimensionOfCantorSet}
\dim_{B}C	&= \lim_{n\rightarrow\infty} \frac{\log{2^n}}{\log{3^n}}=\log_{3}2.
\end{align}
\end{example}

In the next section, we discuss the Minkowski dimension, which is well known to be equivalent to the box-counting dimension.

\subsection{Minkowski dimension}
\label{sec:MinkowskiDimension}

Minkowski content and Minkowski dimension require a specific notion of volume and can be stated concisely with the following notation.

\begin{notation}[Distance, volume, and Big-$O$]
\label{not:DistanceVolumeAndBigO}
Let $\varepsilon >0$ and $A \subset \R^m$. Let $d(x,A)$ denote the distance between a point $x\in\R^m$ and the set $A$ given by
\begin{align*}
d(x,A)	&:= \inf\{|x-u|_m : u \in A\},
\end{align*}
where $|\cdot|_m$ denotes the $m$-dimensional Euclidean norm. The \emph{$\varepsilon$-neighborhood} of $A$, denoted $A_\varepsilon$, is the set of points in $\R^m$ which are within $\varepsilon$ of $A$. Specifically,
\begin{align*}
A_\varepsilon	&= \{x \in \R^m: d(x,A) < \varepsilon\}.
\end{align*}

In the sequel, we fix the set $A$ and are concerned with the $m$-dimensional Lebesgue measure (denoted $\vol_m$) of its $\varepsilon$-neighborhood $A_\varepsilon$ for a given $\varepsilon>0$. Recall, for completeness,\footnote{The book \cite{Cohn} is a good general reference on elementary (as well as more advanced) measure theory.} that the $m$-dimensional Lebesgue measure of a (measurable) set $A \subset \R^m$ is given by 
\begin{align*}
\vol_m(A)	&:=\inf\left\{\sum_{n=1}^\infty \prod_{j=1}^m(b_{n,j}-a_{n,j}) : A \subset \bigcup_{n=1}^\infty \left(\prod_{j=1}^m[a_{n,j},b_{n,j}]\right) \right\}.
\end{align*}

In the case of an \emph{ordinary fractal string} $\Omega \subset \R$ (see the latter part of Definition \ref{def:FractalString}), we are interested in the 1-dimensional volume (i.e., length) of the  \emph{inner} $\varepsilon$-neighborhood of the boundary $\partial\Omega$. Specifically, given an ordinary fractal string $\Omega$ and $\varepsilon>0$, the volume $V_{\textnormal{inner}}(\varepsilon)$ of the inner $\varepsilon$-neighborhood of $\partial\Omega$ is defined by 
\begin{align}
\label{eqn:VolumeInnerNeighborhood}
V_{\textnormal{inner}}(\varepsilon)	&:=\vol_1\{x \in \Omega : d(x,\partial\Omega) <\varepsilon\}.
\end{align}

For two functions $f$ and $g$, with $g$ nonnegative, we write $f(x)=O(g(x))$ as $x\to\infty$ if there exists a positive real number $c$ such that for all sufficiently large $x$, $|f(x)|\leq cg(x)$. More generally, if there exists $C$ such that $|f(x)|\leq Cg(x)$ for all $x$ sufficiently close to some value $a \in \R\cup \{\pm\infty\}$, then we write $f(x)=O(g(x))$ as $x\to a$. If both $f(x)=O(g(x))$ and $g(x)=O(|f(x)|)$ as $x\to a$, we write $f(x)\asymp g(x)$ as $x\to a$. Moreover, if $\lim_{x\to a} f(x)/g(x)=1$,\footnote{Or, more generally, if $f(x)=g(x)(1+o(1))$ as $x\to a$, where $o(1)$ stands for a function tending to zero as $x\to a$.} then we write $f(x) \sim g(x)$ as $x\to a$. Analogous notation will be used for infinite sequences.
\end{notation}

\begin{definition}[Minkowski content]
\label{def:MinkowskiContent0}
Let $r$ be a given nonnegative real number. The \emph{upper} and \emph{lower $r$-dimensional Minkowski contents} of a bounded set $A\subset \R^m$ are respectively given by
\begin{align*}
\mink^{*r}(A) &:= \limsup_{\varepsilon\rightarrow 0^+}\frac{\vol_m(A_\varepsilon)}{\varepsilon^{m-r}},\\
\mink_*^{r}(A) &:= \liminf_{\varepsilon\rightarrow 0^+}\frac{\vol_m(A_\varepsilon)}{\varepsilon^{m-r}}.
\end{align*}
\end{definition}

It is easy to see that if $\mink^{*r}(A)<\infty$, then $\mink^{*s}(A)=0$ for each $s>r$. Furthermore, since $A$ is bounded, then clearly $\mink^{*r}(A)=0$ for $r>m$. On the other hand, if $\mink^{*r}(A)>0$, then $\mink^{*s}(A)=\infty$ for each $s<r$. Therefore, there exists a unique point in $[0,m]$ at which the function $r\mapsto \mink^{*r}(A)$ jumps from the value of $\infty$ to zero. This unique point is called the \emph{upper Minkowski dimension} of $A$. The \emph{lower Minkowski dimension} of $A$ is defined analogously by using the lower $r$-dimensional Minkowski content.

\begin{definition}[Minkowski dimension]
\label{def:MinkowskiDimension0}
The \emph{upper} and \emph{lower Minkowski dimensions} of a bounded set $A$  are defined respectively by
\begin{equation}
\begin{aligned}
\overline{\dim}_{M}A 	&:=\inf\{r\ge0:\mink^{r*}(A)=0\}=\sup\{r\ge0:\mink^{r*}(A)=\infty\}, \label{eqn:UpperMinkowskiDimension}\\
\underline{\dim}_{M}A 	&:= \inf\{r\ge0:\mink_*^{r}(A)=0\}=\sup\{r\ge0:\mink_*^{r}(A)=\infty\}.
\end{aligned}
\end{equation}
When $\overline{\dim}_{M}A=\underline{\dim}_{M}A$, the common value is called the \emph{Minkowski dimension} of $A$, denoted by $\dim_{M}A$.
\end{definition}

When we write $\dim_{M}A$, we implicitly assume that the Minkowski dimension of $A$  exists. In most applications we have that $\vol_m(A_\varepsilon)\asymp\varepsilon^{\alpha}$ as $\varepsilon\to0^+$, where $\alpha$ is a number in $[0,m]$. Then $\dim_{M}A$ exists and is equal to $m-\alpha$ (in light of Definitions \ref{def:MinkowskiContent0} and \ref{def:MinkowskiDimension0}). Note that here 
\begin{equation*}
\alpha=\lim_{\varepsilon\to0^+}\frac{\log\vol_m(A_\varepsilon)}{\log\varepsilon},
\end{equation*}
and hence, 
\begin{equation*}
\dim_{M}A=m-\lim_{\varepsilon\to0^+}\frac{\log\vol_m(A_\varepsilon)}{\log\varepsilon}.
\end{equation*}
It is not difficult to show that the following more general result holds.

\begin{proposition}
\label{prop:MinkowskiDimension}
The upper and lower Minkowski dimensions of a bounded set $A\subset \R^m$ are respectively given by
\begin{align*}
\overline{\dim}_{M}A &= m-\liminf_{\varepsilon \rightarrow 0^+}\frac{\log\vol_m(A_\varepsilon)}{\log\varepsilon},\\
\underline{\dim}_{M}A &= m-\limsup_{\varepsilon \rightarrow 0^+}\frac{\log\vol_m(A_\varepsilon)}{\log\varepsilon}.
\end{align*}
\end{proposition}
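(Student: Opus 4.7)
The plan is to reduce both identities to the jump-point characterization of the Minkowski dimension recorded in \eqref{eqn:UpperMinkowskiDimension}. I will prove only the formula for $\overline{\dim}_M A$ in detail; the one for $\underline{\dim}_M A$ follows \emph{mutatis mutandis} by interchanging the roles of $\liminf$ and $\limsup$ throughout. Setting $L := \liminf_{\varepsilon \to 0^+}\log\vol_m(A_\varepsilon)/\log\varepsilon$, I will establish the two inequalities $\overline{\dim}_M A \le m - L$ and $\overline{\dim}_M A \ge m - L$ separately, in each case by comparing $\vol_m(A_\varepsilon)/\varepsilon^{m-r}$ with a suitable power of $\varepsilon$ for $r$ on the appropriate side of $m-L$.

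For the upper bound, I would fix an arbitrary $r > m - L$, insert an intermediate $r' \in (m-L,\, r)$, and apply the ``eventually'' half of the $\liminf$ characterization to obtain $\log\vol_m(A_\varepsilon)/\log\varepsilon > m - r'$ for all sufficiently small $\varepsilon > 0$. Since $\log\varepsilon < 0$, this inequality reverses upon clearing denominators to give $\vol_m(A_\varepsilon) < \varepsilon^{m-r'}$, so $\vol_m(A_\varepsilon)/\varepsilon^{m-r} < \varepsilon^{r-r'} \to 0$ as $\varepsilon \to 0^+$. Hence $\mink^{*r}(A) = 0$ and $\overline{\dim}_M A \le r$; letting $r \downarrow m - L$ closes the inequality.

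For the lower bound, I would fix $r < m - L$, choose $r' \in (r,\, m - L)$, and apply the ``along a sequence'' half of the $\liminf$ characterization to extract $\varepsilon_k \to 0^+$ along which $\log\vol_m(A_{\varepsilon_k})/\log\varepsilon_k < m - r'$. Using again that $\log\varepsilon_k < 0$, this yields $\vol_m(A_{\varepsilon_k}) > \varepsilon_k^{m-r'}$, and therefore $\vol_m(A_{\varepsilon_k})/\varepsilon_k^{m-r} > \varepsilon_k^{r-r'} \to \infty$. This forces $\mink^{*r}(A) = \infty$, so $\overline{\dim}_M A \ge r$, and letting $r \uparrow m - L$ finishes the proof.

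The main (and essentially only) obstacle is the careful bookkeeping surrounding the sign reversal induced by $\log\varepsilon < 0$, together with the asymmetry between ``holds eventually'' (which is what $\liminf$ provides and drives the upper bound on $\overline{\dim}_M A$) and ``holds along a sequence'' (which is what is needed to feed the $\limsup$ defining $\mink^{*r}$ and thus drives the lower bound). No deep ingredient beyond the definitions is required, which is consistent with the authors' remark that the proposition ``is not difficult to show''.
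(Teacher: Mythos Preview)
Your argument is correct: the two-sided $\varepsilon$-power comparison, using that $\log\varepsilon<0$ and the ``eventually''/``along a sequence'' dichotomy for $\liminf$, yields precisely the jump characterization \eqref{eqn:UpperMinkowskiDimension}. The paper does not actually supply a proof of this proposition---it only remarks that ``it is not difficult to show'' and then states the result---so there is nothing to compare against; your write-up is exactly the kind of routine verification the authors had in mind.
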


\begin{remark}
\label{rmk:EquivalentUpperMinkoskiDimension}
The upper and lower Minkowski dimensions are, of course, independent of the ambient dimension $m$. The upper Minkowski dimension is equivalently given by
\begin{align*}
\overline{\dim}_{M}A	&=\inf\{\alpha \geq 0 : \vol_m(A_\varepsilon)=O(\varepsilon^{m-\alpha}) \textnormal{ as } \varepsilon \rightarrow 0^+\}.
\end{align*}
\end{remark}

\begin{remark}
\label{rmk:Tr2}
It is interesting that there exists a bounded set $A$ in $\R^m$ such that the upper and lower box dimension are different (see, e.g., \cite[p.\ 122]{Tr2}), and even such that $\overline{\dim}_{M}A=m$ and $\underline{\dim}_{M}A=0$ (see \cite[Theorem 1.2]{Zu}).
\end{remark}

\begin{remark}
\label{rmk:HaPo}
The upper Minkowski dimension of $A$ is important in the study of the Lebesgue integrability of the distance function $d(x,A)^{-\gamma}$ in an $\varepsilon$-neighborhood of $A$, where $\varepsilon$ is a fixed positive number:
\begin{equation}
\mbox{If $\gamma<m-\overline\dim_{M}A$, then $\int_{A_{\varepsilon}}d(x,A)^{-\gamma}dx<\infty$.}
\end{equation} 
This nice result is due to Harvey and Polking, and is implicitly stated in \cite{HaPo}; see also \cite{Zu} for related results and references. This fact enabled the first and third authors, along with G.~Radunovi$\acute{\textnormal{c}}$, to determine the abscissa of convergence of the so-called distance zeta function of $A$; see Definition~\ref{def:DistanceZetaFunction} below along with Theorem \ref{thm:DistanceZetaFunctionUpperBoxDimension} and \cite{LapRaZu} for details.
\end{remark}

\begin{definition}[Minkowski measurability]
\label{def:MinkowskiMeasurability}
Let $A \subset \R^m$ be such that $D_M=\dim_{M}A$ exists. The \emph{upper} and \emph{lower Minkowski content} of $A$ are respectively defined as its $D_M$-dimensional upper 
and lower Minkowski contents, that is,
\begin{align*}
\mink^* &:=\mink^{*D_M}(A)=\limsup_{\varepsilon\to 0^+} \frac{\vol_m(A_\varepsilon)}{\varepsilon^{m-D_M}},\\
\mink_* &:=\mink_*^{D_M}(A)=\liminf_{\varepsilon\to 0^+} \frac{\vol_m(A_\varepsilon)}{\varepsilon^{m-D_M}}.
\end{align*} 
If the upper and lower Minkowski contents agree and lie in $(0,\infty)$, then $A$ is said to be \emph{Minkowski measurable} and the \emph{Minkowski content of $A$} is given by
\begin{align*}
\mink &:=\lim_{\varepsilon\to 0^+}\frac{\vol_m(A_\varepsilon)}{\varepsilon^{m-D_M}}.
\end{align*} 
\end{definition}

For example, if $A$ is such that $\vol_m(A_\varepsilon) \sim M\varepsilon^{\alpha}$ as $\varepsilon \to 0^+$, then $\dim_{B}A=m-\alpha$ and $\mink=M$.

\begin{openproblem}
If $A$ and $B$ are Minkowski measurable in
$\mathbf{R}^m$ and $\mathbf{R}^n$, respectively, is their Cartesian product $A\times B$ Minkowski measurable in $\mathbf{R}^{m+n}$\emph{?} See also Remark \ref{rmk:Resman} below dealing with the so-called normalized Minkowski content, and its independence of the ambient dimension~$m$. 
\end{openproblem}

\begin{remark}
Another question to consider is whether or not the union $A\cup B$ of two Minkowski measurable sets is Minkowski measurable. If not, it would be interesting to find an explicit counter-example. (The answer is clearly affirmative if $A$ and $B$ are a positive distance apart.)
\end{remark}

The Minkowski measurable sets on the real line have been characterized in \cite{LapPo1}; see also Theorem~\ref{thm:CriterionForMinkowskiMeasurability}. Some classes of Minkowski measurable sets are known in the plane in the case of smooth spirals, see \cite{ZuZup}, and in the case of discrete spirals, see \cite{MeSa}. It is interesting that in general, bilipschitz $C^1$ mappings do not preserve Minkowski measurability, even for subsets of the real line; see \cite{MeSa}.

We close this section with the following example.


\begin{example}
If $A$ is any bounded set in $\R^m$ such that its closure is of positive $m$-dimensional Lebesgue measure, then $\dim_BA = m$. Indeed, $\vol_m(\overline{A}) > 0$ immediately implies that $\underline{\dim}_B\overline{A} = m$, and therefore by property (iii) of Remark \ref{rmk:PropertiesBoxCountingDimension}, $\underline{\dim}_BA = m$. Since $\overline{\dim}_BA \leq m$, this proves that $\dim_BA$ exists and $\dim_BA = m$. In particular, the claim holds for any Lebesgue nonmeasurable set $A\subset\R^m$. Indeed, the closure of such a set $A$ cannot be of Lebesgue measure zero (i.e., we cannot have $\vol_m(\overline{A}) = 0$) since, in that case, $A$ would also be of Lebesgue measure zero, implying that $A$ is Lebesgue measurable (because Lebesgue measure is complete in this setting).\footnote{For the notions of measure theory used in this example, we refer, e.g., to \cite{Cohn}.}
\end{example}

\section{Fractal strings and zeta functions}
\label{sec:FractalStringsAndZetaFunctions}

In this section, we discuss a few of the many results on fractal strings presented in \cite{LapvF6}.

\subsection{Fractal strings and ordinary fractal strings}
\label{sec:FractalStringsAndOrdinaryFractalStrings}

\begin{definition}
\label{def:FractalString} 
A \emph{fractal string} $\calL$ is a nonincreasing sequence of positive real numbers which tends to zero. Hence,
\begin{align*}
\calL &= (\ell_j)_{j \in \N}, 
\end{align*}
where $(\ell_j)_{j\in\N}$ is nonincreasing and $\lim_{j \rightarrow \infty}\ell_j=0$. Equivalently, a fractal string $\calL$ can be thought of as the collection 
\[
\{ l_n : l_n \textnormal{ has multiplicity } m_n, n \in \N \},
\] 
where $(l_n)_{n \in \N}$ is a strictly decreasing sequence of positive real numbers and, for each $n\in\N$, $m_n$ is the number of lengths (or, more generally, `scales') $\ell_j$ such that $\ell_j=l_n$. 

An \emph{ordinary fractal string} $\Omega$ is a bounded open subset of the real line. In that case, we can write $\Omega$ as the disjoint union of its connected components $I_j$ (i.e., $\Omega=\cup_{j=1}^\infty I_j$), and for each $j\geq 1$, $\ell_j$ denotes the length of the interval $I_j$ (i.e., $\ell_j=|I_j|_1$).\footnote{Note that without loss of generality, we may assume that $\ell_1\geq\ell_2\geq\ldots$, with $\ell_j\to 0$ as $j\to \infty$. We ignore the trivial case where $\Omega$ is a finite union of open intervals; see Remark \ref{rmk:FiniteFractalStrings}.}  
\end{definition}

\begin{remark}
\label{rmk:FiniteFractalStrings}
In \cite{LapvF6}, for instance, finite fractal strings (i.e., nonincreasing sequences of real numbers with a finite number of positive terms) are allowed. However, for reasons described in Remark \ref{rmk:AbscissaForFiniteSequences}, the finite case is not considered in this paper.
\end{remark}

If, as in Definition \ref{def:FractalString} above, an ordinary fractal string $\Omega$ is written as the union of a countably infinite collection of disjoint open intervals $I_j$ (necessarily its connected components), then the lengths $\ell_j$ of the intervals $I_j$ comprise a fractal string $\calL$. Moreover, $\dim_M(\partial\Omega)$ is given by
\begin{align} 
\label{eqn:OrdinaryFractalStringMinkowski} \dim_M(\partial\Omega)	&= \inf\{\alpha\geq 0 : V_{\textnormal{inner}}(\varepsilon)= O(\varepsilon^{1-\alpha}) \textnormal{ as } \varepsilon \to 0^+\},
\end{align}
where $V_{\textnormal{inner}}(\varepsilon)$ is the $1$-dimensional Lebesgue measure of the inner $\varepsilon$-neighborhood of $\Omega$ (see formula  \eqref{eqn:VolumeInnerNeighborhood} in Notation \ref{not:DistanceVolumeAndBigO}). In fact, Equation \eqref{eqn:OrdinaryFractalStringMinkowski} is used to define the \emph{Minkowski dimension} of (the boundary of) an ordinary fractal string in \cite{LapvF6}.\footnote{More specifically, $\dim_M(\partial\Omega)$ should really be denoted by $\dim_{M,\textnormal{inner}}(\partial\Omega)$ and called the \emph{inner Minkowski dimension} of $\partial\Omega$ (or of $\calL$).} Moreover, it is shown in \cite{LapPo1} that $V_{\textnormal{inner}}(\varepsilon)$, and hence also $\dim_M(\partial\Omega)$, depends only on the fractal string $\calL$ (but not on the particular rearrangement of the intervals $I_j$ composing $\Omega$).

\begin{definition}
\label{def:AbscissaOfConvergence}
Let $\calL$ be a fractal string. The \emph{abscissa of convergence} of the Dirichlet series $\sum_{j=1}^\infty\ell_j^s$ is defined by
\begin{align}
\label{eqn:AbscissaOfConvergence}
\sigma	&=\inf \left\{ \alpha \in \R : \sum_{j=1}^\infty\ell_j^\alpha<\infty \right\}.
\end{align}
Thus, $\{ s \in \C : \textnormal{Re}(s)>\sigma \}$ is the largest open half-plane on which this series converges; see, e.g., \cite[\S VI.2]{Ser}.
\end{definition}

\begin{remark}
\label{rmk:AbscissaForFiniteSequences}
If $\calL$ were allowed to be a finite sequence of positive real numbers (as in \cite{LapvF6}), then we would have $\sigma=-\infty$ since the corresponding Dirichlet series would be an entire function. In the context of this paper, we always have that $\sigma\geq 0$ (since $\sum_{j=1}^\infty \ell_j^\alpha$ is clearly divergent when $\alpha=0$). This explains why we consider only (bounded) infinite sets in the development of \emph{box-counting fractal strings} in Section \ref{sec:BoxCountingFractalStringsAndZetaFunctions}. Indeed, for clarity of exposition, we only consider fractal strings consisting of infinitely many positive lengths (or scales), and hence, ordinary fractal strings comprising infinitely many disjoint intervals. 
\end{remark}

\begin{remark}
\label{rmk:ConvergenceAndDivergence}
A key distinction between a fractal string $\calL$ and an ordinary fractal string $\Omega$ lies in the sum of the corresponding lengths (or scales), denoted $(\ell_j)_{j\in\N}$ in either case. Specifically, since an ordinary fractal string $\Omega$ is bounded, $\sum_{j=1}^\infty \ell_j$ is necessarily convergent. On the other hand, for a fractal string $\calL$, $\sum_{j=1}^\infty \ell_j$ may be divergent. See Example \ref{eg:BoxCountingFractalStringOf1DFractal} for a bounded set in $\R^2$ whose \emph{box-counting fractal string} is a fractal string whose lengths have an unbounded sum and yet contains pertinent information regarding the bounded set. (In a somewhat different setting, many other classes of examples are provided in \cite[esp. \S 13.1 \& \S 13.3]{LapvF6} and in \cite{LapPe2,LapPeWin}.)
\end{remark}

\begin{definition}
\label{def:GeometricZetaFunction}
Let $\calL$ be a fractal string. The \emph{geometric zeta function} of $\calL$ is defined by 
\begin{align}
\label{eqn:GeometricZetaFunction}
\zeta_\calL(s)	&= \sum_{j=1}^\infty\ell_j^s,
\end{align}
where $s \in \C$ and  $\textnormal{Re}(s)>D_\calL:=\sigma$. The \emph{dimension} of $\calL$, denoted $D_\calL$, is defined as the abscissa of convergence $\sigma$ of the Dirichlet series which defines $\zeta_\calL$.
\end{definition}

In order to define the complex dimensions of a fractal string, as in \cite{LapvF6}, we assume there exists a meromorphic extension of the geometric zeta function $\zeta_\calL$ to a suitable region. First, consider the \emph{screen} $S$ as the contour 
\begin{align}
\label{eqn:Screen}
S:S(t)+it \quad &(t \in \R),
\end{align}
where $S(t)$ is a continuous function $S:\R \to [-\infty,D_\calL]$. Next, consider the \emph{window} $W$ as the set 
\begin{align}
\label{eqn:Window}
W	&= \{s \in \C : \textnormal{Re}(s) \geq S(\textnormal{Im}(s))\}.
\end{align}

By a mild abuse of notation, we denote by $\zeta_\calL$ both the geometric zeta function of $\calL$ and its meromorphic extension to some region. 

\begin{definition}
\label{def:ComplexDimensions}
Let $W \subset \C$ be a window on an open connected neighborhood of which $\zeta_\calL$ has a meromorphic extension. The set of (\emph{visible}) \emph{complex dimensions} of $\calL$ is the set $\calD_\calL=\calD_\calL(W)$ given by
\begin{align}
\calD_\calL	&= \left\{ \omega \in W : \zeta_\calL \textnormal{ has a pole at } \omega \right\}.
\end{align}
\end{definition}

\noindent In the case where $\zeta_\calL$ has a meromorphic extension to $W=\C$, the set $\calD_\calL$ is referred to as the \emph{complex dimensions} of $\calL$.  Such is the case for the Cantor string $\Omega_{CS}$.

\begin{example}[Complex dimensions of the Cantor string]
\label{eg:ComplexDimensionsOfCantorString}
The Cantor string $\Omega_{CS}$ is the ordinary fractal string given by $\Omega_{CS}=[0,1] \setminus C$, where $C$ is the Cantor set (see Example \ref{eg:BoxCountingDimensionOfCantorSet}). The lengths of the Cantor string are given by the fractal string 
\begin{align}
\label{eqn:CantorStringLengths}
\calL_{CS}	&=\{3^{-n}: 3^{-n} \textnormal{ has multiplicity } 2^{n-1}, n \in \N\}.
\end{align}
The geometric zeta function of the Cantor string, denoted $\zeta_{CS}$, is given by
\begin{align}
\label{eqn:CantorStringGeometricZetaFunction}
\zeta_{CS}(s)	&:= \zeta_{\calL_{CS}}(s)= \sum_{n=1}^\infty 2^{n-1}3^{-ns} = \frac{3^{-s}}{1-2\cdot3^{-s}}.
\end{align} 
The closed form on the right-hand side of Equation \eqref{eqn:CantorStringGeometricZetaFunction} allows for the meromorphic continuation of $\zeta_{CS}$ to all of $\C$. Hence, $\zeta_{CS}=3^{-s}(1-2\cdot 3^{-s})^{-1}$ for all $s \in \C$. It follows that the complex dimensions of the Cantor string, denoted $\calD_{CS}$, are the complex roots of the Moran equation $2\cdot3^{-s}=1$. Thus, with the window $W$ chosen to be all of $\C$ (so that $\calD_{CS}:=\calD_{\calL_{CS}}=\calD_{\calL_{CS}}(\C)$, in the notation of Definition \ref{def:ComplexDimensions}), we have
\begin{align}
\label{eqn:CantorStringComplexDimensions}
\calD_{CS}	&=\left\{ \log_3{2} + i\frac{2\pi}{\log 3}z : z \in \Z \right\}.
\end{align}
Note that by Equation \eqref{eqn:BoxCountingDimensionOfCantorSet} the dimension $D_{CS}:=D_{\calL_{CS}}=\log_{3}2$ of the Cantor string coincides with $\dim_{B}C=\dim_{M}C$, and this value is the unique real-valued complex dimension of the Cantor string.
\end{example}

\subsection{Geometric counting function of a fractal string}
\label{sec:GeometricCountingFunctionOfAFractalString}

The results in this section connect the counting function of the lengths of a fractal string to its dimension and geometric zeta function.

\begin{definition}
\label{def:GeometricCountingFunction}
The \emph{geometric counting function} of $\calL$, or the \emph{counting function of the reciprocal lengths} of $\calL$, is given by
\begin{align*}
\displaystyle N_\calL(x) &:= \#\{j \in \N : \ell_j^{-1} \leq x\} = \sum_{n \in \N, \, l_n^{-1} \leq\, x}m_n
\end{align*}
for $x>0$.
\end{definition}

The following easy proposition is identical to Proposition 1.1 of \cite{LapvF6}.

\begin{proposition}
\label{prop:CountingFunctionD}
Let $\alpha \geq 0$ and $\calL$ be a fractal string. Then $N_\calL(x)=O(x^\alpha)$ as $x \rightarrow \infty$ if and only if
$\ell_j=O(j^{-1/\alpha})$ as $j \rightarrow \infty$. 
\end{proposition}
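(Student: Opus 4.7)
The proposition is a clean duality between the growth rate of the counting function $N_\calL(x)$ and the decay rate of the sequence $(\ell_j)_{j \in \N}$, made possible by the fact that the sequence is nonincreasing. The plan is to prove the two implications directly by switching between the variables $x$ and $\ell_j^{-1}$, using the defining property $N_\calL(x)=\#\{j:\ell_j^{-1}\le x\}$. I will assume $\alpha>0$ throughout; the case $\alpha=0$ is a degenerate one for infinite fractal strings and can be set aside.

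For the forward implication, assume that $N_\calL(x)\le Cx^\alpha$ for all sufficiently large $x$ and some constant $C>0$. The key observation is that because $(\ell_j)$ is nonincreasing, each of the indices $1,2,\dots,j$ satisfies $\ell_k^{-1}\le \ell_j^{-1}$, so every one of them is counted by $N_\calL(\ell_j^{-1})$. Evaluating the hypothesis at $x=\ell_j^{-1}$ (valid for $j$ large enough so that $\ell_j^{-1}$ lies in the asymptotic regime) yields
\begin{equation*}
j\le N_\calL(\ell_j^{-1})\le C\,\ell_j^{-\alpha},
\end{equation*}
from which the estimate $\ell_j\le C^{1/\alpha}j^{-1/\alpha}$ follows immediately.

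For the reverse implication, assume $\ell_j\le C'j^{-1/\alpha}$ for all $j\in\N$ and some $C'>0$. Fix $x>0$ large and set $j:=N_\calL(x)$; then by the very definition of the counting function, $\ell_j^{-1}\le x$, i.e., $\ell_j\ge 1/x$. Combining this with the hypothesis gives
\begin{equation*}
\tfrac{1}{x}\le \ell_j\le C'j^{-1/\alpha},
\end{equation*}
so that $j^{1/\alpha}\le C'x$ and hence $N_\calL(x)=j\le (C')^\alpha x^\alpha$, as desired.

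There is no real obstacle; the proof rests on the monotonicity of the sequence and the cosmetic trick of substituting $x=\ell_j^{-1}$ in one direction and $j=N_\calL(x)$ in the other. The minor points to watch are that the asymptotic conditions only need to hold for sufficiently large $x$ and $j$ respectively (so one absorbs the finitely many exceptional terms into the constant), and that in the edge case where many $\ell_k$ coincide with $\ell_j$, the definition of $N_\calL$ still counts them with the correct multiplicity because it indexes over $j\in\N$ rather than over the distinct values $l_n$. Once these are noted, the argument is complete.
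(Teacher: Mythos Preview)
Your proof is correct and follows essentially the same approach as the paper's own proof: both directions hinge on the substitution $x=\ell_j^{-1}$ (forward) and reading off $j=N_\calL(x)$ (converse), exploiting monotonicity of the sequence. Your write-up is slightly more explicit about why $j\le N_\calL(\ell_j^{-1})$ and handles the edge cases more carefully, but the underlying argument is identical.
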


\begin{proof}
Suppose that for some $C>0$ we have 
\[
N_\calL(x) \leq Cx^\alpha.
\]
Let $x=\ell_j^{-1}$, then $j \leq C\ell_j^{-\alpha}$, which implies that 
\[
\ell_j = O(j^{-1/\alpha}).
\] 
Conversely, if $\ell_j \leq cj^{-1/\alpha}$
for $j \in \N$ and some $c>0$, then given $x>0$, we have 
\[
\ell_j^{-1} > x \textnormal{ for } j > (cx)^\alpha.
\]
Therefore, 
\[
N_\calL(x) \leq (cx)^\alpha.
\]
\end{proof}

\begin{remark}
\label{rmk:MoreAsymptoticBehavior}
Many additional (and harder) results connecting the asymptotic behavior of the geometric counting function, the spectral counting function, and the (upper and lower) Minkowski content(s) of a fractal string $\calL$ are provided in \cite{LapPo1}. The simplest one states that $N_\calL(x)=O(x^\alpha)$ as $x\to\infty$ (i.e., $\ell_j=O(j^{-1/\alpha})$ as $j\to\infty$) if and only if $\mink^{*\alpha}(\partial\Omega)<\infty$, where (consistent with our earlier comment) $\mink^{*\alpha}(\partial\Omega)$ is given as in Definition \ref{def:MinkowskiContent0} except with $\vol_m(\cdot)$ replaced with $V_{\textnormal{inner}}(\cdot)$.
\end{remark}

\begin{notation}
\label{not:D}
The infimum of the nonnegative values of $\alpha$ which satisfy Proposition \ref{prop:CountingFunctionD} plays a key role in our results. Hence, we let $D_N$ denote that special value. That is,
\begin{align}
\label{eqn:CountingFunctionD}
D_N:= \inf\{\alpha \geq 0 : N_\calL(x)=O(x^\alpha) \textnormal{ as } x\to\infty\}.
\end{align}
\end{notation}

The following lemma is a restatement of a portion of Lemma 13.110 of \cite{LapvF6}.\footnote{In fact, a stronger result holds in the setting of generalized fractal strings (viewed as measures) in \cite{LapvF6}, but it is beyond the scope of this paper. (See also \cite{LapLuvF1}.)}

\begin{lemma}
\label{lem:GeometricZetaFunctionIntegralTransformCountingFunction}
Let $\calL$ be a fractal string. Then 
\begin{align}
\label{eqn:GeometricZetaFunctionIntegralTransformCountingFunction}
\zeta_\calL(s)	&= s\int_0^\infty N_\calL(x)x^{-s-1}dx
\end{align}
and, moreover, the integral converges \emph{(}and hence, Equation \eqref{eqn:GeometricZetaFunctionIntegralTransformCountingFunction} holds\emph{)}  if and only if $\sum_{j=1}^\infty \ell_j^s$ converges, i.e., if and only if $\textnormal{Re}(s)>D_\calL=\sigma$.
\end{lemma}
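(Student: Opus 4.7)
The plan is to write $N_\calL(x)$ as a superposition of characteristic functions, namely $N_\calL(x) = \sum_{j=1}^\infty \chi_{[\ell_j^{-1},\infty)}(x)$ (this is just the definition of the counting function rephrased), and then to interchange summation and integration. Formally, for $\textnormal{Re}(s)>0$, this gives
$$s\int_0^\infty N_\calL(x)\,x^{-s-1}\,dx \;=\; s\sum_{j=1}^\infty \int_{\ell_j^{-1}}^\infty x^{-s-1}\,dx \;=\; s\sum_{j=1}^\infty \frac{\ell_j^s}{s} \;=\; \sum_{j=1}^\infty \ell_j^s,$$
which is precisely Equation \eqref{eqn:GeometricZetaFunctionIntegralTransformCountingFunction}. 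So the whole proof reduces to justifying the interchange and pinning down exactly when both sides make sense.

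For the justification I would first note that since $\ell_j\to 0$, the reciprocals satisfy $\ell_j^{-1}\ge \ell_1^{-1}>0$, so $N_\calL(x)=0$ for $0<x<\ell_1^{-1}$; in particular there is nothing to check at the origin, and the integral is really an improper integral only at $+\infty$. Taking absolute values $|x^{-s-1}|=x^{-\textnormal{Re}(s)-1}$ makes the integrand nonnegative, so Tonelli's theorem applies unconditionally and yields the identity
$$\int_0^\infty N_\calL(x)\,x^{-\textnormal{Re}(s)-1}\,dx \;=\; \frac{1}{\textnormal{Re}(s)}\sum_{j=1}^\infty \ell_j^{\textnormal{Re}(s)} \quad \textnormal{in } [0,\infty].$$
This equality directly proves the asserted equivalence: the improper integral converges absolutely if and only if the Dirichlet series $\sum \ell_j^s$ converges, which by Definition \ref{def:AbscissaOfConvergence} happens precisely when $\textnormal{Re}(s)>D_\calL=\sigma$. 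Once absolute convergence is in hand, Fubini's theorem legitimizes the interchange carried out above for the complex-valued integrand, completing both parts of the lemma.

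The main technical obstacle I expect is merely the bookkeeping that controls behavior at infinity: one must ensure the polynomial decay $x^{-\textnormal{Re}(s)-1}$ beats the growth of $N_\calL(x)$, and this is exactly what the condition $\textnormal{Re}(s)>\sigma$ encodes through Proposition \ref{prop:CountingFunctionD} (and the observation in Notation \ref{not:D} that $D_N\le\sigma$). A secondary subtlety is that the antiderivative step requires $\textnormal{Re}(s)>0$; this is automatic in our setting because, as noted in Remark \ref{rmk:AbscissaForFiniteSequences}, we always have $\sigma\ge 0$, so $\textnormal{Re}(s)>\sigma$ entails $\textnormal{Re}(s)>0$. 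Everything else is a straightforward application of Tonelli followed by Fubini.
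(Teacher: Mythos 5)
Your proof is correct, and it follows a genuinely different route than the paper's. The paper fixes real $s\ge 0$, breaks the truncated integral $s\int_0^{\ell_n^{-1}}N_\calL(x)x^{-s-1}\,dx$ into pieces over the intervals $[\ell_j^{-1},\ell_{j+1}^{-1}]$ on which $N_\calL$ is constant, and sums by parts to obtain $\sum_{j=1}^n\ell_j^s - n\ell_n^s$; it then kills the extra term via the monotonicity estimate $n\ell_n^s\le 2\sum_{j=[n/2]}^n\ell_j^s$ and passes to the complex case by analytic continuation. You instead use the layer-cake decomposition $N_\calL(x)=\sum_j\chi_{[\ell_j^{-1},\infty)}(x)$ and swap sum and integral, first via Tonelli on the nonnegative integrand to derive the equivalence of convergence (using that $\sigma\ge 0$ so the antiderivative step is valid), then via Fubini to handle complex $s$. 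What your approach buys is a cleaner conceptual path: no telescoping, no side term $n\ell_n^s$ to control, and no appeal to analytic continuation to cross from real to complex $s$. What the paper's approach buys is that it is essentially self-contained at the level of elementary calculus plus a monotonicity trick, and the summation-by-parts viewpoint generalizes naturally to the measure-theoretic setting of generalized fractal strings alluded to in the footnote. One small remark: you implicitly use the fact that for a Dirichlet series with nonnegative coefficients (here, the multiplicities $m_n>0$), the abscissa of absolute convergence equals the abscissa of convergence $\sigma$, which is why showing the \emph{absolute} convergence of the integral is exactly equivalent to $\textnormal{Re}(s)>\sigma$. That fact is standard, but it is worth stating explicitly since the argument hinges on it.
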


\begin{proof}
Let $s \geq 0$ be a real number. (The case where $s \in \C$ follows immediately from this case by analytic continuation for $\textnormal{Re}(s)>D_\calL=\sigma$ since $\zeta_\calL$ is holomorphic in that half-plane.) For any given $n \in \N$, we have  
\begin{align*}
s\int_0^{\ell_n^{-1}} N_\calL(x)x^{-s-1}dx	&= \sum_{j=1}^{n-1}s\int_{\ell_j^{-1}}^{\ell_{j+1}^{-1}} N_\calL(x)x^{-s-1}dx
	= \sum_{j=0}^{n-1}j(\ell_j^s-\ell_{j+1}^s)
\end{align*}
since $N_\calL(x)=0$ for $x<\ell_1^{-1}$ and $N_\calL(x)=j$ for $\ell_j^{-1} \leq x < \ell_{j+1}^{-1}$. Furthermore,
\begin{align*}
s\int_0^{\ell_j^{-1}} N_\calL(x)x^{-s-1}dx	&= 
\sum_{j=1}^{n-1}j\ell_j^s-\sum_{j=1}^{n}(j-1)\ell_j^s
	= \sum_{j=1}^{n}\ell_j^s-n\ell_n^s.
\end{align*}
Now, for $s\geq 0$, we have $n\ell_n^s \leq 2\sum_{j=[n/2]}^n\ell_j^s$. Thus, Equation \eqref{eqn:GeometricZetaFunctionIntegralTransformCountingFunction} holds if and only if $\sum_{j=1}^\infty \ell_j^s$ converges.\footnote{That is, if and only if $s>D_\calL=\sigma$ (or, more generally, if $s\in\C$, if and only if $\textnormal{Re}(s)>D_\calL$).} Moreover,
\begin{align*}
\lim_{n \rightarrow \infty} s\int_0^{\ell_n^{-1}} N_\calL(x)x^{-s-1}dx	&= \zeta_\calL(s) 
\end{align*}
since the tail $\sum_{j=[n/2]}^\infty \ell_j^s$ converges to zero. (Here, $[y]$ denotes the integer part of the real number $y$.)
\end{proof}

The following proposition will be used to prove a portion of our main result, Theorem \ref{thm:MainResult} (cf. Theorem 13.111 and Corollary 13.112 of \cite{LapvF6}, as well as  \cite{LapLuvF1,LapLuvF2}, where this proposition is established in the context of $p$-adic fractal strings and also of ordinary (real) fractal strings). 

\begin{proposition}
\label{prop:CountingFunctionDFractalStringD}
Let $\calL$ be a fractal string. Then 
\begin{align*}
D_\calL	&=D_N, 
\end{align*}
where $D_\calL=\sigma$ is the dimension of $\calL$ given by Equation \eqref{eqn:AbscissaOfConvergence} \emph{(}and Definition \ref{def:GeometricZetaFunction}\emph{)} and $D_N$ is given by Equation \eqref{eqn:CountingFunctionD}.
\end{proposition}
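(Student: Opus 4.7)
The plan is to establish the two inequalities $D_\calL \leq D_N$ and $D_N \leq D_\calL$ separately, using only Proposition~\ref{prop:CountingFunctionD} and elementary manipulations of the Dirichlet series $\sum_{j} \ell_j^s$. The integral representation in Lemma~\ref{lem:GeometricZetaFunctionIntegralTransformCountingFunction} is available as a backup route, but I expect the bare-hands argument to be cleaner and to avoid circularity (since that lemma itself is sensitive to the half-plane of convergence).

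First, I would prove $D_\calL \leq D_N$. Fix any $\alpha > D_N$, so that $N_\calL(x) = O(x^\alpha)$ as $x \to \infty$. By Proposition~\ref{prop:CountingFunctionD}, this is equivalent to $\ell_j = O(j^{-1/\alpha})$ as $j \to \infty$. Consequently, for any real $s > \alpha$, there is a constant $C > 0$ with $\ell_j^s \leq C\, j^{-s/\alpha}$, and since $s/\alpha > 1$, the series $\sum_{j=1}^\infty \ell_j^s$ converges by comparison with the convergent $p$-series $\sum j^{-s/\alpha}$. Hence the abscissa of convergence satisfies $D_\calL = \sigma \leq s$; letting $s \downarrow \alpha$ and then $\alpha \downarrow D_N$ yields $D_\calL \leq D_N$.

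For the reverse inequality $D_N \leq D_\calL$, I would take any real $s > D_\calL$ (such $s$ exist and may be taken positive, since $D_\calL \geq 0$ by Remark~\ref{rmk:AbscissaForFiniteSequences}). Then $M := \sum_{j=1}^\infty \ell_j^s < \infty$. The key observation is that for each $x > 0$, every index $j$ with $\ell_j^{-1} \leq x$ satisfies $\ell_j \geq x^{-1}$, and therefore (since $s > 0$) $\ell_j^s \geq x^{-s}$. Summing only over these indices gives
\begin{equation*}
N_\calL(x) \cdot x^{-s} \;=\; \sum_{j:\,\ell_j^{-1} \leq x} x^{-s} \;\leq\; \sum_{j:\,\ell_j^{-1} \leq x} \ell_j^s \;\leq\; M.
\end{equation*}
Hence $N_\calL(x) \leq M\, x^s$ for all $x > 0$, i.e., $N_\calL(x) = O(x^s)$ as $x \to \infty$, so $D_N \leq s$. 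Letting $s \downarrow D_\calL$ yields $D_N \leq D_\calL$, completing the proof.

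The two arguments are short and symmetric; there is no real obstacle beyond the mild bookkeeping point of ensuring positivity of $s$ when passing from $\ell_j \geq x^{-1}$ to $\ell_j^s \geq x^{-s}$ in the second direction. This is harmless here because $D_\calL \geq 0$ in our setting (infinite fractal strings), so we are free to choose $s > 0$ arbitrarily close to $D_\calL$. The cleanest presentation packages the first direction as an immediate corollary of Proposition~\ref{prop:CountingFunctionD} and dedicates the bulk of the proof to the elementary lower-bound estimate for $N_\calL$ in terms of partial sums of the Dirichlet series.
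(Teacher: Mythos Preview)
Your proof is correct and takes a genuinely different route from the paper's. The paper proves both inequalities via the integral representation of Lemma~\ref{lem:GeometricZetaFunctionIntegralTransformCountingFunction}: for $D_\calL \le D_N$ it bounds $|\zeta_\calL(s)| \le |s|\int_{x_1}^\infty Cx^{\alpha}x^{-t-1}\,dx$ with $\alpha\in(D_N,t)$ and shows the integral is finite; for $D_\calL \ge D_N$ it builds, from the failure of $N_\calL(x)=O(x^\alpha)$ when $\alpha<D_N$, a sequence $(x_j)$ with $N_\calL(x_j)\ge jx_j^\alpha$ and then shows the Mellin integral diverges for $t\le\alpha$.

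By contrast, you bypass the integral entirely. Your first direction reduces to Proposition~\ref{prop:CountingFunctionD} and a $p$-series comparison, and your second direction is the Chebyshev-type estimate $N_\calL(x)x^{-s}\le\sum_{\ell_j\ge 1/x}\ell_j^s\le\zeta_\calL(s)$, valid for any $s>0$ with $\sum\ell_j^s<\infty$. This is shorter and more elementary, and it avoids the mild circularity risk you flagged (Lemma~\ref{lem:GeometricZetaFunctionIntegralTransformCountingFunction} already speaks of the half-plane $\textnormal{Re}(s)>D_\calL$). The paper's Mellin approach, on the other hand, fits naturally with the analytic-number-theory viewpoint used later (e.g., Corollary~\ref{cor:EquivalentZetaFunctions}) and generalizes directly to measure-valued ``generalized fractal strings,'' where there is no sequence $(\ell_j)$ to compare termwise. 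Either argument is acceptable here.
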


\begin{proof}
The proof given here follows that of \cite{LapvF6}, \emph{loc.~cit.} (See also \cite{LapLuvF1}.) First, suppose $\textnormal{Re}(s)>D_N$. Denoting $t=\textnormal{Re}(s)$, we choose any fixed  $\alpha\in(D_N,t)$.
 Using Lemma \ref{lem:GeometricZetaFunctionIntegralTransformCountingFunction}, for $x_1=(\ell_1)^{-1}$ we have
\begin{align*}
|\zeta_\calL(s)|
	&\leq |s|\int_{x_1}^\infty Cx^{\alpha}x^{-t-1}dx = \left[\frac{|s|Cx^{\alpha-t}}{\alpha-t}\right]_{x_1}^\infty 
	= 0 - \frac{|s|Cx_1^{\alpha-t}}{\alpha-t},
\end{align*}
since $\alpha-t<0$. Hence, $|\zeta_\calL(s)|<\infty$. In other words,  $t>D_{\calL}$ for any $t>D_N$. Letting $t\searrow D_N$, we obtain that $D_{\calL}\le D_N$.


For the converse, suppose $\alpha < D_N$. Then $N_\calL(x)$ is not $O(x^\alpha)$ as $x \rightarrow \infty$. So, there exists a strictly increasing sequence $(x_j)_{j\in\N}$ with $x_1 \geq \ell_1^{-1}$ which tends to infinity such that 
\[
N_\calL(x) \geq jx_j^\alpha
\]
for each $j$. Then, for $t \leq 1$, 
\[
t\int_{\ell_1^{-1}}^\infty N_\calL(x)x^{-t-1}dx \geq \sum_{j=1}t\int_{x_j}^{x_{j+1}}jx^\alpha x^{-t-1}dx
\]
since $N_\calL(x)$ is increasing.

We estimate $t\int_{x_j}^{x_{j+1}}x^{-t-1}dx \geq x_j^{-t}$ to obtain
\[
t\int_{\ell_1^{-1}}^\infty N_\calL(x)x^{-t-1}dx \geq \sum_{j=1}^\infty jx_j^{\alpha-t}.
\]
For $t \leq \alpha$, the sum diverges. Hence, $D_\calL \geq \alpha$ for all $\alpha<D_N$, and so $D_\calL \geq D_N$.
\end{proof}

%

For a given fractal string $\calL$, Theorem \ref{thm:CountingFunctionAndComplexDimensions} (cf. Theorem 5.10 and Theorem 5.18 in \cite{LapvF6}) shows that under mild conditions the complex dimensions $\calD_\calL$ contain enough information to determine the geometric counting function $N_\calL$ (at least, asymptotically, and in important geometric situations, exactly\footnote{This is the case, for instance, for self-similar strings.}). 


\begin{theorem}
\label{thm:CountingFunctionAndComplexDimensions}
Let $\calL$ be a fractal string such that $\calD_\calL$ consists entirely of simple poles with respect to a window $W$. Then, under certain mild growth conditions on $\zeta_\calL$,\footnote{Namely, if $\zeta_\calL$ is \emph{languid} (see \cite[Def.~5.2]{LapvF6}) of a suitable order.} we have
\begin{align}
\label{eqn:CountingFunctionAndComplexDimensions}
N_\calL(x) = \sum_{\omega \in \calD_\calL}
\frac{x^{\omega}}{\omega}\textnormal{res}(\zeta_\calL(s);\omega) + \{\zeta_\calL(0)\} + R(x), 
\end{align}
where $R(x)$ is an error term of small order and the term in braces is included only if $0 \in W\backslash \calD_\calL$.
\end{theorem}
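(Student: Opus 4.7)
The plan is to realize $N_\calL$ as an inverse Mellin transform of $\zeta_\calL(s)/s$ and then deform the contour of integration from a vertical line in the region of absolute convergence to the screen $S$, picking up residues at the complex dimensions along the way.

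First, I would take Lemma \ref{lem:GeometricZetaFunctionIntegralTransformCountingFunction} as the starting point: it identifies $\zeta_\calL(s)/s$ as the Mellin transform of $N_\calL(x)$ (with respect to $x^{-s-1}\,dx$) on the half-plane $\textnormal{Re}(s)>D_\calL$. Mellin inversion (a routine but careful application of Fourier inversion, together with the fact that $N_\calL$ is of bounded variation, being nondecreasing and integer-valued with jumps at the reciprocal lengths) then yields, for any $c>D_\calL$ and $x$ not a reciprocal length,
\begin{align*}
N_\calL(x) = \frac{1}{2\pi i}\int_{c-i\infty}^{c+i\infty}\frac{\zeta_\calL(s)}{s}\,x^{s}\,ds.
\end{align*}
The languidness hypothesis on $\zeta_\calL$ is what guarantees that this integral converges in the appropriate principal-value sense and that the boundary terms in Fourier inversion vanish; this is exactly the role of the polynomial growth bound on $\zeta_\calL$ along horizontal lines.

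Next, I would shift the contour from the vertical line $\textnormal{Re}(s)=c$ to the screen $S$. Consider the rectangle $\Gamma_T$ with vertical sides at $\textnormal{Re}(s)=c$ and along $S$, closed off by horizontal segments at heights $\pm T$. By the residue theorem,
\begin{align*}
\frac{1}{2\pi i}\oint_{\Gamma_T}\frac{\zeta_\calL(s)}{s}\,x^s\,ds = \sum_{\omega\in\calD_\calL\cap\{|\textnormal{Im}(s)|<T\}}\textnormal{res}\!\left(\frac{\zeta_\calL(s)\,x^s}{s};\,\omega\right) + \textnormal{res}\!\left(\frac{\zeta_\calL(s)\,x^s}{s};\,0\right),
\end{align*}
where the last term is present precisely when $0\in W\setminus\calD_\calL$. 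Because every $\omega\in\calD_\calL$ is a simple pole of $\zeta_\calL$ (and $\omega\neq 0$ in the sum), the residue there is $\textnormal{res}(\zeta_\calL(s);\omega)\,x^\omega/\omega$; the residue at $s=0$, if it contributes, is simply $\zeta_\calL(0)$, since $x^0=1$ and $\zeta_\calL$ is holomorphic at $0$ in this case. Letting $T\to\infty$, the contributions of the horizontal segments tend to zero owing to the languidness bounds on $\zeta_\calL$; this produces the main sum and the bracketed constant in the stated formula, while the integral along $S$ yields the remainder
\begin{align*}
R(x)=\frac{1}{2\pi i}\int_{S}\frac{\zeta_\calL(s)}{s}\,x^s\,ds,
\end{align*}
which is of order $x^{\sup S}$ up to logarithmic factors.

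The main obstacle, and the only substantive analytic issue, is the justification of the contour shift and the vanishing of the horizontal pieces as $T\to\infty$. This is exactly where the languidness of $\zeta_\calL$ of suitable order is essential: one needs polynomial growth in $|\textnormal{Im}(s)|$ on a neighborhood of the closed region between the line $\textnormal{Re}(s)=c$ and the screen $S$, combined with the factor $1/s$ in the integrand, so that the horizontal integrals are bounded by $T^{-\delta}x^{c}$ for some $\delta>0$ and hence vanish in the limit. Once this technical point is handled, the decomposition of $N_\calL(x)$ into an explicit oscillatory sum over $\calD_\calL$, a constant term, and a controlled error term follows from the residue calculus sketched above, reproducing the statement of the theorem.
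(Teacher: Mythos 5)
The paper does not itself prove Theorem \ref{thm:CountingFunctionAndComplexDimensions}; it is stated as a restatement of Theorems 5.10 and 5.18 in \cite{LapvF6} and used as a black box. Your proposal reconstructs, in outline, exactly the argument given in that reference: interpret $\zeta_\calL(s)/s$ as the Mellin transform of $N_\calL$ via Lemma \ref{lem:GeometricZetaFunctionIntegralTransformCountingFunction}, invert to obtain $N_\calL(x) = \frac{1}{2\pi i}\int_{c-i\infty}^{c+i\infty}\zeta_\calL(s)\,x^s\,s^{-1}\,ds$ for $c>D_\calL$, and then shift the contour to the screen $S$, with languidity controlling the horizontal pieces and the integral along $S$ producing the error term $R(x)$. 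Your residue bookkeeping is also correct: for $\omega\in\calD_\calL$ with $\omega\neq 0$, simplicity of the pole and holomorphy of $x^s/s$ give $\operatorname{res}(\zeta_\calL(s)x^s/s;\omega)=\frac{x^\omega}{\omega}\operatorname{res}(\zeta_\calL;\omega)$, while at $s=0$ the extra simple pole from $1/s$ contributes $\zeta_\calL(0)$ exactly when $0\in W\setminus\calD_\calL$ (if $0\in\calD_\calL$ one would have a double pole and the stated formula implicitly excludes that case). Two technical points are worth flagging even though you acknowledge them: the pointwise Mellin inversion for a nondecreasing step function like $N_\calL$ requires care at the jump points (the formula returns the average of one-sided limits there, which is why $x$ is taken off the set of reciprocal lengths); and the passage $T\to\infty$ uses not only languidity but also the precise geometry of the screen, since $S$ need not be a vertical line. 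Neither affects the correctness of the plan; your proposal is a faithful sketch of the standard proof from \cite[Ch.~5]{LapvF6}.
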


\begin{remark}
\label{rmk:SimpleAndStronglyLanguid}
If the poles are not simple, the explicit formula for $N_{\mathcal{L}}$ is slightly more complicated (see\cite[Chs.~5,6]{LapvF6}). If an ordinary fractal string $\Omega$ is {\it strongly languid} (see \cite[Def.~5.3]{LapvF6}), then by Theorem 5.14 and Theorem 5.22 of \cite{LapvF6}, Equation \eqref{eqn:CountingFunctionAndComplexDimensions} holds with no error term (i.e., $W=\C$ and $R(x) \equiv 0$) and hence, formula \eqref{eqn:CountingFunctionAndComplexDimensions} is exact in that case.
\end{remark}

\begin{remark}
\label{rmk:FractalTubeFormulas}
Similar (but harder to derive) explicit formulas called \emph{fractal tube formulas} are obtained in \cite[Ch.~8]{LapvF6}, which, as described therein, allows for the expression of $V_\textnormal{inner}(\varepsilon)$ in terms of the underlying (visible) complex dimensions of $\calL$. (Still in \cite{LapvF6}, they are used, in particular, to derive the equivalence of (i) and (iii) in Theorem \ref{thm:CriterionForMinkowskiMeasurability} below.) We will return to this topic in Section \ref{sec:SummaryOfResultsAndOpenProblems} when discussing Open Problems \ref{op:BoxCountingAndCountingFormulas} and \ref{op:VolumeTubularNeighborhood}.
\end{remark}

Analogous results regarding connections between the structure of the complex dimensions $\calD_\calL$ of an ordinary fractal string $\Omega$ with lengths $\calL$ and the (inner) Minkowski measurability of $\partial\Omega$ are presented in the next section.

\subsection{Classic results}
\label{sec:ClassicResults}

The following theorem is precisely Theorem 1.10 of \cite{LapvF6}. It is actually a consequence of a classic theorem of Besicovitch and Taylor (see \cite{BesTa}) stated in terms of ordinary fractal strings, and was first observed in this context in \cite{Lap2}.\footnote{There is, however, one significant difference with the setting of \cite{BesTa}. Namely, here, as in \cite{Lap2} and \cite{LapvF6}, we are assuming that we are working with the \emph{inner} (rather than ordinary) Minkowski dimension and Minkowski content of $\partial\Omega$; see the statement and the proof of Theorem 1.10 in \cite{LapvF6}, along with Equation \eqref{eqn:VolumeInnerNeighborhood} above. By contrast, in the context of \cite{BesTa}, one should assume that $\Omega$ is of full measure in its closed convex hull (i.e., in the smallest compact interval containing it).} 

\begin{theorem}
\label{thm:BesicovitchAndTaylor}
Suppose $\Omega$ is an ordinary fractal string with infinitely many lengths denoted by $\calL$. Then the abscissa of convergence of $\zeta_\calL$ coincides with the Minkowski dimension of $\partial\Omega$. That is, $D_\calL=\dim_M(\partial\Omega)$.
\end{theorem}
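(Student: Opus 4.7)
The plan is to obtain the identity $D_{\calL} = \dim_M(\partial\Omega)$ by routing through the polynomial growth exponent of the geometric counting function $N_{\calL}$ as an intermediate invariant. Concretely, I set
\begin{align*}
D_N := \inf\{\alpha \geq 0 : N_{\calL}(x) = O(x^{\alpha}) \text{ as } x \to \infty\}
\end{align*}
and establish the two equalities $D_{\calL} = D_N$ and $D_N = \dim_M(\partial\Omega)$ separately.

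The first equality, $D_{\calL} = D_N$, is already available as Proposition~\ref{prop:CountingFunctionDFractalStringD}, whose proof uses the integral representation of $\zeta_{\calL}$ in terms of $N_{\calL}$ supplied by Lemma~\ref{lem:GeometricZetaFunctionIntegralTransformCountingFunction}. So nothing new is needed for this leg of the argument. For the second equality, I would combine the definitional identity
\begin{align*}
\dim_M(\partial\Omega) = \inf\{\alpha \geq 0 : V_{\textnormal{inner}}(\varepsilon) = O(\varepsilon^{1-\alpha}) \text{ as } \varepsilon \to 0^+\}
\end{align*}
from Equation~\eqref{eqn:OrdinaryFractalStringMinkowski} with the asymptotic equivalence recorded in Remark~\ref{rmk:MoreAsymptoticBehavior} (the Lapidus--Pomerance result of \cite{LapPo1}): $N_{\calL}(x) = O(x^{\alpha})$ as $x \to \infty$ if and only if $\mink^{*\alpha}(\partial\Omega) < \infty$, equivalently $V_{\textnormal{inner}}(\varepsilon) = O(\varepsilon^{1-\alpha})$ as $\varepsilon \to 0^+$. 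Taking the infimum over $\alpha \geq 0$ on both sides of this equivalence yields $D_N = \dim_M(\partial\Omega)$, and chaining with Proposition~\ref{prop:CountingFunctionDFractalStringD} gives $D_{\calL} = \dim_M(\partial\Omega)$, as claimed.

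The main obstacle is really the content of Remark~\ref{rmk:MoreAsymptoticBehavior} itself, which is imported from \cite{LapPo1} rather than reproved here. If one wanted a self-contained derivation, the key tool would be the elementary decomposition
\begin{align*}
V_{\textnormal{inner}}(\varepsilon) = 2\varepsilon\, N_{\calL}\!\left(\frac{1}{2\varepsilon}\right) + \sum_{\ell_j < 2\varepsilon} \ell_j,
\end{align*}
obtained by noting that each connected component $I_j$ of $\Omega$ with $\ell_j \geq 2\varepsilon$ contributes two disjoint strips of length $\varepsilon$ at its endpoints, while each $I_j$ with $\ell_j < 2\varepsilon$ lies entirely within $\varepsilon$ of $\partial\Omega$ and contributes its full length $\ell_j$. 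A routine Abel summation (or direct estimate) then converts $O$-bounds on $N_{\calL}$ into $O$-bounds on $V_{\textnormal{inner}}$ and vice versa, which is precisely what matching the two infima requires. In the present context, however, citing Remark~\ref{rmk:MoreAsymptoticBehavior} is enough, and the proof collapses to the short two-step chain described above.
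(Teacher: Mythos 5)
Your argument is correct and, unlike the paper, actually supplies a proof: the paper simply refers Theorem~\ref{thm:BesicovitchAndTaylor} to Theorem~1.10 of \cite{LapvF6}, remarking that it is a consequence of the classic Besicovitch--Taylor theorem \cite{BesTa} as observed in \cite{Lap2}, without reproducing the argument. Your route is to chain Proposition~\ref{prop:CountingFunctionDFractalStringD} (giving $D_{\calL}=D_N$) with the Lapidus--Pomerance equivalence of Remark~\ref{rmk:MoreAsymptoticBehavior} and the defining formula~\eqref{eqn:OrdinaryFractalStringMinkowski}, so that the sets $\{\alpha\geq0 : N_{\calL}(x)=O(x^{\alpha}) \text{ as } x\to\infty\}$ and $\{\alpha\geq0 : V_{\textnormal{inner}}(\varepsilon)=O(\varepsilon^{1-\alpha}) \text{ as } \varepsilon\to0^+\}$ coincide and hence have the same infimum. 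This is valid; the one link you use implicitly --- that $\mink^{*\alpha}(\partial\Omega)<\infty$ is the same as $V_{\textnormal{inner}}(\varepsilon)=O(\varepsilon^{1-\alpha})$ --- is immediate from the definition of the upper (inner) Minkowski content as a $\limsup$. The essential analytic content is thus delegated to Remark~\ref{rmk:MoreAsymptoticBehavior}, which the paper itself imports from \cite{LapPo1} without proof; your closing sketch via the tube-volume identity $V_{\textnormal{inner}}(\varepsilon) = 2\varepsilon\, N_{\calL}(1/(2\varepsilon)) + \sum_{\ell_j<2\varepsilon}\ell_j$ is exactly the Lapidus--Pomerance decomposition and is the right way to close that gap if one wants a self-contained argument. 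In effect you have replaced the paper's appeal to Besicovitch--Taylor with an appeal to Lapidus--Pomerance together with the counting-function machinery already set up in Section~\ref{sec:GeometricCountingFunctionOfAFractalString}; given that scaffolding, your route is the more economical one, whereas the Besicovitch--Taylor route (working directly with the exponent of convergence of $\sum\ell_j^s$) is the more classical and self-contained but would require reproducing estimates not present in the paper.
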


The following result is Theorem 8.15 of \cite{LapvF6}. For complete details regarding connections between complex dimensions and Minkowski measurability, see \cite[Ch.~8]{LapvF6}.

\begin{theorem}[Criterion for Minkowski measurability]
\label{thm:CriterionForMinkowskiMeasurability}
Let $\Omega$ be an ordinary fractal string whose geometric zeta function $\zeta_\calL$ has a meromorphic extension which satisfies certain mild growth conditions.\footnote{Specifically, $\zeta_\calL$ is languid for a screen $S$ passing strictly between the vertical line $\textnormal{Re}(s)=D_\calL$ and all the complex dimensions (of the corresponding fractal string) $\calL$ with real part strictly less than $D_\calL$, and not passing through 0.} Then the following are equivalent\emph{:}
\begin{enumerate}
	\item $D_\calL$ is the only complex dimension with real part $D_\calL$, and it is simple.
	\item $N_\calL(x)=cx^{D_\calL}+o(x^{D_\calL})$ as $x \to\infty$, for some positive constant $c$.\footnote{In the spirit of Proposition \ref{prop:CountingFunctionD}, condition (ii) is easily seen to be equivalent to 
\[
\ell_j=Lj^{-1/D_\calL}+o(j^{-1/D_\calL}) \quad \textnormal{as} \quad j\to\infty,
\]
for some positive constant $L$. In that case, we have $c=L^{D_\calL}$.}
	\item $\partial\Omega$ is Minkowski measurable.
\end{enumerate}
Moreover, if any of these conditions is satisfied, then the Minkowski content $\mink$
of $\partial\Omega$ is given by
\begin{align*}
\mink	&= \frac{c2^{1-D_\calL}}{1-D_\calL} = 2^{1-D_\calL}\frac{\textnormal{res}(\zeta_\calL(s);D_\calL)}{D_\calL(1-D_\calL)}. 
\end{align*}
\end{theorem}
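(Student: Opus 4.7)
The approach is to establish the cycle of implications $(i) \Rightarrow (ii) \Rightarrow (iii) \Rightarrow (i)$, together with the explicit formula for $\mink$. The implication $(i) \Rightarrow (ii)$ follows almost immediately from the explicit formula for the counting function (Theorem \ref{thm:CountingFunctionAndComplexDimensions}). Indeed, under the languidity assumption on $\zeta_\calL$, we have
\[
N_\calL(x) = \sum_{\omega \in \calD_\calL} \frac{x^{\omega}}{\omega}\,\textnormal{res}(\zeta_\calL(s);\omega) + \{\zeta_\calL(0)\} + R(x),
\]
and condition (i) singles out $\omega = D_\calL$ as the only term contributing at order $x^{D_\calL}$: every other complex dimension has real part strictly less than $D_\calL$, and both $\{\zeta_\calL(0)\}$ and $R(x)$ are $o(x^{D_\calL})$ since $D_\calL > 0$. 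This yields (ii) with $c = \textnormal{res}(\zeta_\calL(s); D_\calL)/D_\calL$.

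For $(ii) \Rightarrow (iii)$, I would exploit the elementary geometric identity
\[
V_{\textnormal{inner}}(\varepsilon) = \sum_{j=1}^\infty \min(\ell_j, 2\varepsilon),
\]
valid because the portion of each component interval $I_j$ lying within $\varepsilon$ of $\partial I_j$ has length $\ell_j$ if $\ell_j \leq 2\varepsilon$ and $2\varepsilon$ otherwise. Condition (ii) is equivalent (by the ``$o$'' version of Proposition \ref{prop:CountingFunctionD}, as noted in the footnote to (ii)) to $\ell_j \sim L j^{-1/D_\calL}$ with $L = c^{1/D_\calL}$. I would split the sum at the index $j_0(\varepsilon) \asymp c(2\varepsilon)^{-D_\calL}$ where $\ell_{j_0} \asymp 2\varepsilon$, evaluate the ``head'' $2\varepsilon \cdot j_0$ and the ``tail'' $\sum_{j > j_0} \ell_j$ (whose convergence uses $D_\calL < 1$) by Abel summation, and combine them. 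A short calculation yields $V_{\textnormal{inner}}(\varepsilon) \sim \frac{c\cdot 2^{1-D_\calL}}{1-D_\calL}\,\varepsilon^{1-D_\calL}$, which gives both the Minkowski measurability of $\partial\Omega$ and the claimed value of $\mink$, after rewriting $c = \textnormal{res}(\zeta_\calL(s);D_\calL)/D_\calL$.

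For the converse direction $(iii) \Rightarrow (i)$, which is the deepest part of the proof, I would appeal to the fractal tube formula developed in \cite[Ch.~8]{LapvF6} (cf.~Remark \ref{rmk:FractalTubeFormulas}), which expresses $V_{\textnormal{inner}}(\varepsilon)$ as an explicit sum of terms of the form $\frac{(2\varepsilon)^{1-\omega}}{\omega(1-\omega)}\,\textnormal{res}(\zeta_\calL(s);\omega)$ indexed by $\omega \in \calD_\calL$, plus a small error. Minkowski measurability forces $V_{\textnormal{inner}}(\varepsilon)/\varepsilon^{1-D_\calL}$ to converge as $\varepsilon \to 0^+$. Any additional complex dimension $\omega \neq D_\calL$ with $\textnormal{Re}(\omega) = D_\calL$ would contribute an oscillatory term of the form $(2\varepsilon)^{-i\,\textnormal{Im}(\omega)}$, while a pole at $D_\calL$ of order $\geq 2$ would contribute a $\log(1/\varepsilon)$ factor; each of these would prevent the limit from existing.

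The main obstacle is precisely this final step: ruling out such contributions requires a nontrivial uniqueness statement for generalized almost-periodic functions, which is the Tauberian input originally supplied by Lapidus and Pomerance in \cite{LapPo1}. An alternative route is to use a Wiener--Ikehara-type Tauberian theorem applied to the Mellin transform $\zeta_\calL(s) = s\int_0^\infty N_\calL(x) x^{-s-1}dx$ from Lemma \ref{lem:GeometricZetaFunctionIntegralTransformCountingFunction} to first upgrade (iii) to (ii), and then combine the explicit formula with the languidity hypothesis to pin down (i). Once (i) is recovered, the formula for $\mink$ follows by reading off the leading coefficient of the tube formula, confirming the displayed expression in the theorem.
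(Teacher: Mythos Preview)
The paper does not actually prove this theorem; it is quoted verbatim as Theorem~8.15 of \cite{LapvF6}, and the only indication of the argument given in the paper is Remark~\ref{rmk:TheoryOfComplexDimensions}: the equivalence $(ii)\Leftrightarrow(iii)$ is due to Lapidus--Pomerance \cite{LapPo1} (with no growth hypothesis on $\zeta_\calL$), while $(i)\Leftrightarrow(iii)$ is obtained in \cite{LapvF6} via the fractal tube formulas of Chapters~5 and~8. So there is no ``paper's own proof'' to compare against beyond this structural remark.

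Your cycle $(i)\Rightarrow(ii)\Rightarrow(iii)\Rightarrow(i)$ is consistent with that structure and is a faithful outline of how the argument actually runs in \cite{LapvF6} and \cite{LapPo1}. A couple of small clarifications: in your $(i)\Rightarrow(ii)$ step, the fact that $R(x)=o(x^{D_\calL})$ is a consequence of the screen lying strictly to the left of $\textnormal{Re}(s)=D_\calL$ together with languidity (not merely of $D_\calL>0$); and in $(iii)\Rightarrow(i)$, your intuition is right that this is where the real work lies---the ``uniqueness for almost-periodic functions'' is exactly the Tauberian ingredient that \cite{LapvF6} isolates, and it is not elementary. Your alternative suggestion of going $(iii)\Rightarrow(ii)$ first is in fact closer to how the literature is organized: that implication is precisely the Lapidus--Pomerance direction in \cite{LapPo1}, valid without any languidity assumption, after which $(ii)\Rightarrow(i)$ follows from the explicit formula plus a short argument ruling out cancellation among the oscillatory terms on the critical line.
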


\begin{remark}
\label{rmk:TheoryOfComplexDimensions}
We note that the equivalence of (ii) and (iii) in Theorem \ref{thm:CriterionForMinkowskiMeasurability} was first established in \cite{LapPo1} for any ordinary fractal string, without any hypothesis on the growth of the associated geometric zeta function. As was alluded to in Remark \ref{rmk:FractalTubeFormulas}, however, the equivalence of (i) and (iii) in Theorem \ref{thm:CriterionForMinkowskiMeasurability} was proved in \cite{LapvF6} (and in earlier works of the authors of \cite{LapvF6}) by using a suitable generalization of Riemann's explicit formula that is central to the theory of complex dimensions and is obtained in \cite[Chs.~5 \& 8]{LapvF6}.
\end{remark}

\begin{example}[The Cantor set is not Minkowski measurable]
\label{eg:CantorSetNotMinkowskiMeasurable}
By Equation \eqref{eqn:CantorStringComplexDimensions} in Example \ref{eg:ComplexDimensionsOfCantorString}, there is an infinite collection of complex dimensions $\omega \in \calD_{CS}$ of the Cantor string with real part $D_{CS}=\log_{3}2$. Hence, by Theorem \ref{thm:CriterionForMinkowskiMeasurability}, the Cantor set $C$ is not Minkowski measurable. This fact was first established in \cite{LapPo1} by using the equivalence of (ii) and (iii) and showing that (ii) does not hold. Actually, still in \cite{LapPo1}, for $\alpha=\dim_{B}C=D_{CS}$, both $\mink^{\alpha*}=\mink^*$ and $\mink^{\alpha}_*=\mink_*$ are explicitly computed and shown to be different (with $0<\mink_*<\mink^*<\infty$). This result was significantly refined and extended in \cite[Ch.~10]{LapvF6} in the broader context of generalized Cantor strings.
\end{example}

\begin{remark}
\label{rmk:LatticeNonLattice}
Example \ref{eg:CantorSetNotMinkowskiMeasurable} is indicative of another result from \cite{LapvF6} pertaining to a dichotomy in the properties of self-similar attractors of certain iterated function systems on compact intervals. Specifically, if an iterated function system on a compact interval $I$ satisfies the open set condition with at least one gap and there is some $0<r<1$ and positive integers $k_j$ such that $\gcd(k_1,\ldots,k_N)=1$ and the scaling ratios satisfy $r_j=r^{k_j}$ for each $j = 1,\ldots, N$, then the complement $I \setminus A$ of the resulting attractor $A$ is an ordinary fractal string known as a \emph{lattice self-similar string}. For example, the Cantor string $\Omega_{CS}=[0,1] \setminus C$ is a lattice self-similar string. If no such $r$ exists, then $I \setminus A$ is a \emph{nonlattice self-similar string}. The complex dimensions of a self-similar string are given by (a subset of) the complex roots of the corresponding Moran equation \eqref{eqn:Moran}. In the lattice case there are countably many complex dimensions with real part $D_\calL=\dim_{B}A=\dim_{M}A$, so by Theorem \ref{thm:CriterionForMinkowskiMeasurability}, $A$ is not Minkowski measurable. In the nonlattice case, Theorem \ref{thm:CriterionForMinkowskiMeasurability} does not necessarily apply (because its hypotheses need not be satisfied, see \cite[Example~5.32]{LapvF6}), however the only complex dimension with real part $D_\calL$ is $\dim_{B}A=\dim_{M}A$ and by Theorem 8.36 of \cite{LapvF6} we have that $A$ is Minkowski measurable. Therefore, the boundary of a self-similar string is Minkowski measurable if and only if it is nonlattice. See \cite[\S 8.4]{LapvF6} for details.
\end{remark}
 
We conclude the section on classic results with the following remark which, in light of the expression for $V_{\textnormal{inner}}(\varepsilon)$ obtained in \cite{LapPo1} (see also \cite[Eq.~(8.1)]{LapvF6}), can be deduced from Lemma 1 of \cite[\S 1.4]{Lev}.\footnote{For convenience, Remark \ref{rmk:DimensionEqualsOrder}
is stated in the language of fractal strings. A direct (and independent) proof of the fact that Equation \eqref{eqn:DimensionEqualsOrder} holds 
can be found in \cite{LapvF6}. (See also \cite{LapLuvF1}.) Also, Equation \eqref{eqn:DimensionEqualsOrder} is a simple consequence of results obtained in \cite{LapPo1}.} The remark below provides yet another connection between counting functions and dimensions, although it is a simple restatement of Proposition \ref{prop:CountingFunctionDFractalStringD} in our context.

\begin{remark}
\label{rmk:DimensionEqualsOrder}
For a fractal string $\calL$, the \emph{order of the geometric counting function} $N_\calL$, denoted $\rho_\calL$, is given by 
\begin{align}
\label{eqn:OrderOfCountingFunction}
\rho_\calL	&:=\limsup_{x \rightarrow \infty} \frac{\log{N_\calL(x)}}{\log{x}}.
\end{align}  
We have that the dimension $D_\calL$ coincides with $\rho_\calL$. That is,
\begin{align}
\label{eqn:DimensionEqualsOrder}
D_\calL	&=\rho_\calL.
\end{align}
Note that, for a given fractal string $\calL$, the order of the counting function $\rho_\calL$ given in Equation \eqref{eqn:OrderOfCountingFunction} and the value $D_N$ given in Equation \eqref{eqn:CountingFunctionD} provide essentially the same information regarding the geometric counting function $N_\calL$. Indeed, it can be shown directly that $\rho_\calL=D_N$, and hence Equation \eqref{eqn:DimensionEqualsOrder} would follow from Proposition \ref{prop:CountingFunctionDFractalStringD}. This connection is examined further in \cite{LapRoZu}.
\end{remark}

In the next section, motivated by the box-counting function $N_B$ and connections between the geometric counting function $N_\calL$ and dimension $D_\calL$ of a fractal string $\calL$, we define and investigate the properties of box-counting fractal strings.

\section{Box-counting fractal strings and zeta functions}
\label{sec:BoxCountingFractalStringsAndZetaFunctions}

In this section, we develop the definition of and results pertaining to \emph{box-counting fractal strings}. These fractal strings are defined in order to provide a framework in which one may, perhaps, extend the results on ordinary fractal strings via associated zeta functions and complex dimensions in \cite{LapvF6} to bounded sets. Further exploration with box-counting fractal strings, such as Minkowski measurability of bounded sets, is central to the development of the authors' paper \cite{LapRoZu}. The box-counting fractal string and the box-counting zeta function for bounded sets in Euclidean spaces were introduced by the second author during the First International Meeting of the Permanent International Session of Research Seminars (PISRS) at the University of Messina, PISRS Conference 2011: Analysis, Fractal Geometry, Dynamical Systems, and Economics.\footnote{See Remark \ref{rmk:PrecursorToBoxCountingZetaFunction}.} The introduction took place after listening to a lecture of the third author about his results (with the first author and Goran Radunovi\'c) in \cite{LapRaZu} on distance and tube zeta functions for arbitrary compact subsets of $\R^m$. Some of these results are also discussed in Section \ref{sec:DistanceAndTubeZetaFunctions} below.

\begin{remark}
\label{rmk:PrecursorToBoxCountingZetaFunction}
At the time, the second author did not yet know that the first author had already proposed (since the early 2000s, in several research documents) to introduce and study a `box-counting zeta function' (defined via the Mellin transform of a `box-counting function'), much as in Corollary \ref{cor:EquivalentZetaFunctions} and Definition \ref{def:BoxCountingFunction} as well as Remark \ref{rmk:VariousCountingFunctions}) in order to develop a higher-dimensional theory of complex dimensions and obtain the associated fractal tube formulas for suitable fractal subsets of $\R^m$ (established when $m=1$ in \cite[Ch.~8]{LapvF6}). The explicit construction of a `box-counting fractal string' associated with a given bounded set $A\subset\R^m$ is new and potentially quite useful, however.\footnote{Compare, when $m=1$, the construction provided in \cite{LapMa} of certain fractal strings associated with suitable monotonically increasing step functions, viewed as `geometric counting functions' (in the sense of \cite{LapPo1,LapvF6} and of Definition \ref{def:GeometricCountingFunction} above).} Of course, the other results of \cite{LapRoZu} described in Section \ref{sec:BoxCountingFractalStringsAndZetaFunctions} are new as well.
\end{remark}

\subsection{Definition of box-counting fractal strings}
\label{sec:DefinitionOfBoxCountingFractalStrings}

If $A \subset \R^m$ is bounded, then the diameter of $A$, denoted $\diam(A)$, is finite. So for nonempty $A$ and all $x$ small enough, we have $N_B(A,x)=1$ when $N_B(A,\cdot)$ is given as in Definition \ref{def:BoxCountingFunction} or one of the options in Remark \ref{rmk:VariousCountingFunctions}. Indeed, for a given bounded infinite set $A$, each such box-counting function uniquely defines a fractal string $\calL_B$, which is introduced below and called the \emph{box-counting fractal string}, by uniquely determining a sequence of distinct scales $(l_n)_{n\in\N}$ along with corresponding multiplicities $(m_n)_{n\in\N}$. 

Given a fixed bounded infinite set $A$, the range of a chosen box-counting function $N_B(A,\cdot)$ can be thought of as a strictly increasing sequence of positive integers $(M_n)_{n\in\N}$. In this context, we can readily define a fractal string $\calL_B$ whose geometric counting function $N_{\calL_B}$ essentially coincides with $N_B(A,\cdot)$; see Lemma \ref{lem:CountingFunctionCorrelation} below. To this end, the key idea is to make the distinct scales $l_n$ of the desired (box-counting) fractal string $\calL_B$ correspond to the scales at which the box-counting function $N_B(A,\cdot)$ jumps. Furthermore, the multiplicities $m_n$ are defined in order to have the resulting geometric counting function $N_{\calL_B}$ (nearly) coincide with the chosen box-counting function $N_B(A,\cdot)$. Such \emph{box-counting fractal strings} potentially allow for the development of a theory of complex dimensions of fractal strings, as presented in \cite{LapvF6}, by means of results in Section \ref{sec:FractalStringsAndZetaFunctions} similar to Theorem \ref{thm:CountingFunctionAndComplexDimensions} above.\footnote{Parts of such a theory are now developed in \cite{LapRaZu}, using distance and tube zeta functions; see Section \ref{sec:DistanceAndTubeZetaFunctions} below for a few sample results. Once the two parallel theories are more fully developed, a challenging problem will consist in comparing and contrasting their respective results and scopes.} These concepts are central to the development of the paper \cite{LapRoZu}.

\begin{definition} 
\label{def:BoxCountingFractalString} 
Let $A$ be a bounded infinite subset of $\R^m$ and let $N_B(A,\cdot)$ denote a box-counting function given by one of the options in Remark \ref{rmk:VariousCountingFunctions}. Denote the range of $N_B(A,\cdot)$ as a strictly increasing sequence of positive integers $(M_n)_{n\in\N}$. 
For each $n\in\N$, let $l_n$ be the scale given by
\begin{align}
\label{eqn:LengthsOfABoxCountingFractalString}
l_n:=(\sup\{x \in (0,\infty): N_B(A,x)=M_n\})^{-1}.
\end{align}
Also, let $m_1:=M_2$, and for $n\geq 2$, let $m_n:=M_{n+1}-M_n$. 
The \emph{box-counting fractal string} of A, denoted $\calL_B$, is given by
\begin{align*}
\calL_B &:= \{l_n : l_n \textnormal{ has multiplicity } m_n, n \in \N\}.
\end{align*} 
\end{definition}

\begin{remark}
\label{rmk:CountingFunctionDefinesUniqueFractalString}
Note that the distinct scales $l_n$ and the multiplicities $m_n$ are \emph{uniquely defined by the box-counting function} $N_B(A,\cdot)$ since $N_B(A,x)$ is nondecreasing as $x\rightarrow\infty$. Also, each $l_n$ is equivalently given by
\[ 
l_n = \inf\{\varepsilon \in (0,\infty): N_B(A,\varepsilon^{-1})=M_n\}.
\]
\end{remark}

It remains to show that $\calL_B$ is indeed a fractal string; see Definition \ref{def:FractalString}. That is, since we want to use as many of the results from \cite{LapvF6} as possible (some of which are presented in Section \ref{sec:FractalStringsAndZetaFunctions}), we must verify that $\calL_B=(\ell_j)_{j\in\N}$ is a nonincreasing sequence of positive real numbers which tends to zero. This is accomplished with the following proposition, in which other behaviors of $N_B(A,\cdot)$ are also determined. 

For clarity of exposition and in order to ease the notation used in this section, in particular in the following proposition, take $N_B(A,\cdot)$ to be defined by option (i) of Remark \ref{rmk:VariousCountingFunctions} and let $N_B(A,0):=0$. (Completely analogous results hold when $N_B(A,\cdot)$ is given by Definition \ref{def:BoxCountingFunction} or one of the other options in Remark \ref{rmk:VariousCountingFunctions}, \emph{mutatis mutandis}.) Note that we have $N_B(A,x)\leq N_B(A,y)$ whenever $0<x<y$. In this setting, and following Remark \ref{rmk:varepsilonVSx}, $x^{-1}$ denotes the diameter of the sets used to cover $A$. Furthermore, let $x_n:=l_n^{-1}$ for each $n\in\N$, and note that we have $N_B(A,x_2)=m_1=M_2$ and 
\[
N_B(A,x_{n+1})-N_B(A,x_n)=m_n=M_{n+1}-M_n, \quad \textnormal{for } n \geq 2.
\]

\begin{proposition}
\label{prop:BoxCountingJumps}
Let $A$ be a bounded infinite subset of $\R^m$ and let $l_n$ be given by Equation \emph{\eqref{eqn:LengthsOfABoxCountingFractalString}}. Then the sequence $(x_n)_{n\in\N}:=(l_n^{-1})_{n\in\N}$ is a countably infinite, strictly increasing sequence of positive real numbers such that, for each $n \in \N$ and all $x$ such that $x_{n-1} < x \leq x_n$ \emph{(}letting $x_0=0$\emph{)}, we have
\begin{align}
\label{eqn:BoxCountingJumps}
N_B(A,x_{n-1}) &< N_B(A,x) = N_B(A,x_n).
\end{align}
Furthermore, 
\begin{enumerate}
\setlength{\itemsep}{0in}
\item $x_1>0$ and $N_B(A,x_1)=1$,
\item $x_n \nearrow \infty$ as $n \rightarrow \infty$, and
\item $\displaystyle \bigcup_{n \in \N} \{N_B(A,x_n)\} = \textnormal{range}~N_B(A,\cdot)$.  
\end{enumerate}
\end{proposition}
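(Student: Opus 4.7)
The plan is to first establish the structural properties of $N_B(A,\cdot)$ and then read the claim off essentially by inspection. Since option (i) is used, for $0<x<y$ any cover of $A$ by sets of diameter at most $y^{-1}$ is automatically a cover by sets of diameter at most $x^{-1}$, so $N_B(A,\cdot)$ is nondecreasing on $(0,\infty)$. Because $A$ is nonempty and bounded with finite diameter $d:=\diam(A)>0$ (the positivity uses that $A$ is infinite), the set $A$ itself provides a single-set cover whenever $x\le d^{-1}$, forcing $N_B(A,x)=1$ on that range, so $M_1=1$. On the other hand, since $A$ is infinite, any finite collection of sets of sufficiently small diameter must miss a point of $A$, so $N_B(A,x)\to\infty$ as $x\to\infty$. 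Consequently, the range of $N_B(A,\cdot)$ is an infinite, strictly increasing sequence $(M_n)_{n\in\N}$ with $M_1=1$.

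The crucial technical step is to show that $N_B(A,\cdot)$ is \emph{left-continuous}, which will guarantee that the supremum in \eqref{eqn:LengthsOfABoxCountingFractalString} is attained, i.e.\ that $N_B(A,x_n)=M_n$ for every $n$. Given $x_0>0$, monotonicity and integer-valuedness let me choose $y_k\nearrow x_0$ with $N_B(A,y_k)\equiv L:=\lim_{x\to x_0^-}N_B(A,x)$. For each $k$, pick a cover $C_1^{(k)},\ldots,C_L^{(k)}$ of $A$ by closed sets of diameter at most $y_k^{-1}$, which, after intersection with a fixed closed ball containing $A$, all lie in a common compact ambient region. By a standard Blaschke-type compactness argument, a subsequence of each family $(C_i^{(k)})_k$ converges in the Hausdorff metric to a closed set $C_i$ of diameter at most $x_0^{-1}$; a pigeonhole argument on membership indices then shows that $C_1,\ldots,C_L$ still cover $A$, hence $N_B(A,x_0)\le L$, and equality follows from monotonicity.

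Once left-continuity is in hand, the rest is bookkeeping. Setting $J_n:=N_B(A,\cdot)^{-1}(\{M_n\})$, monotonicity makes each $J_n$ an interval, and left-continuity forces $J_n=(x_{n-1},x_n]$ (with $x_0:=0$); in particular $x_n\in J_n$, the sequence $(x_n)$ is strictly increasing, and \eqref{eqn:BoxCountingJumps} holds for every $x\in(x_{n-1},x_n]$. Property (i) follows because $(0,d^{-1}]\subseteq J_1$, so $x_1\ge d^{-1}>0$ and $N_B(A,x_1)=M_1=1$. Property (ii) follows from $\bigcup_n J_n=(0,\infty)$: if $(x_n)$ were bounded above by some $B$, then $\bigcup_n J_n\subseteq(0,B]$, contradicting the fact that $N_B(A,\cdot)$ is defined on all of $(0,\infty)$. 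Property (iii) is immediate, since by construction $N_B(A,x_n)=M_n$ and the $M_n$ exhaust the range.

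I expect the compactness argument for left-continuity in the second paragraph to be the main obstacle: one must carefully handle the possibility that covers ``wander'' as $y_k\nearrow x_0$, which is why the reduction to closed subsets of a fixed compact region, together with Blaschke-type selection and the upper semi-continuity of diameter in the Hausdorff metric, are needed. Everything else flows directly from monotonicity, integer-valuedness, and the definition of the $x_n$ as suprema.
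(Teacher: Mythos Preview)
Your proof is correct and follows the same broad outline as the paper's---establish that $N_B(A,\cdot)$ is nondecreasing, integer-valued, with countably infinite range, and then read off the conclusions from the interval structure of the level sets---but you supply a technical ingredient that the paper omits entirely: the left-continuity of $N_B(A,\cdot)$, which you obtain via a Blaschke-type selection argument. The paper simply asserts that the level sets have the form $(x_{n-1},x_n]$ (equivalently, that the supremum defining $x_n$ is actually attained and \eqref{eqn:BoxCountingJumps} holds) without justification; your compactness step is precisely what is needed to make this rigorous for option~(i) of Remark~\ref{rmk:VariousCountingFunctions}. Your handling of (ii) also differs slightly: you deduce $x_n\to\infty$ from $\bigcup_n J_n=(0,\infty)$, while the paper argues instead that a finite accumulation point of $(x_n)$ would force $N_B(A,\cdot)$ to be infinite there. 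Both routes work; yours is a bit more direct, and your proof overall is more complete than the paper's on the point that matters most.
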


\begin{proof}
We have that $N_B(A,x)$ is nondecreasing for $x>0$. Further, the range of $N_B(A,\cdot)$, denoted $\textnormal{range}~N_B(A,\cdot)$ (and also realized as the sequence $(M_n)_{n\in\N}$ above), is at most countable since it is a subset of $\N$. In fact, $\textnormal{range}~N_B(A,\cdot)$ is countably infinite (otherwise, $A$ would be finite). Hence, $(x_n)_{n \in \N}$ is a unique, countably infinite, strictly increasing sequence of positive real numbers such that, for each $n \in \N$ and all $x$ such that  $x_{n-1} < x \leq x_n$ (letting $x_0=0$), we have
\begin{align*}
N_B(A,x_{n-1}) &< N_B(A,x) = N_B(A,x_n).
\end{align*}

Since $A$ is bounded and contains more than two elements, there exists a unique $x'\in (0,\infty)$ such that $N_B(A,x)=1$ if $0<x \leq x'$, and $N_B(A,x)>1$ if $x>x'$. By the definition of the sequence $(x_n)_{n\in\N}$, we have $x'=x_1$.

Now, suppose $(x_n)_{n\in\N}$ has an accumulation point at some $x'' \in (0,\infty)$. Then $N_B(A,x'')=\infty$ since $N_B(A,\cdot)$ increases by some positive integer value at $x_n$ for each $n \in \N$ and since $\textnormal{range}~N_B(A,\cdot)\subset\N$. However, this contradicts the boundedness of $A$. Further, assuming $N_B(A,\cdot)$ is bounded implies that $A$ is finite. Hence, $x_n \nearrow \infty$ as $n \rightarrow \infty$.

Lastly, suppose there exists $k \in \textnormal{range}~N_B(A,\cdot)$ such that we have $k \neq N_B(A,x_n)$ for all $n \in \N$. Since $x_n \nearrow \infty$ as $n \rightarrow \infty$ and $N_B(A,\cdot)$ is nondecreasing, there exists a unique $n_0 \in \N$ such that $x_{n_0-1}<y<x_{n_0}$ for all $y$ such that $N_B(A,y)=k$. However, Equation \eqref{eqn:BoxCountingJumps} implies $N_B(A,y)=k=N_B(A,x_{n_0})$, which is a contradiction. Therefore, $\bigcup_{n \in \N}\{N_B(A,x_n)\} = \textnormal{range}~N_B(A,\cdot)$.
\end{proof}

\begin{remark}
\label{rmk:BoxCountingFractalStringIsAFractalString}
By line (ii) of Proposition \ref{prop:BoxCountingJumps}, $l_n \searrow 0$ as $n \rightarrow \infty$ and, hence, $\calL_B$ is indeed a fractal string in the sense of Definition \ref{def:FractalString}.
\end{remark}


\begin{example}[Box-counting fractal string of the Cantor set]
\label{eg:CantorSetBoxCountingFractalString}
Consider the Cantor set $C$. For $x>0$, let the box-counting function $N_B(C,x)$ be the minimum number of sets of diameter $x^{-1}$ required to cover $C$ (i.e., as in option (i) of Remark \ref{rmk:VariousCountingFunctions}). Then the box-counting fractal string $\calL_B$ of $C$ is given by
\begin{align}
\label{eqn:CantorSetBoxCountingFractalString}
\calL_B	&= \{l_1=1: m_1=2\}\cup\{l_n=3^{-(n-1)}: m_n =2^{n-1}, n \geq 2 \}.
\end{align}
Indeed, for each $n \in \N$, exactly $2^{n}$ intervals of diameter $3^{-n}$ are required to cover $C$. If $x^{-1}<3^{n}$, then more than $2^{n}$ intervals of diameter $x^{-1}$ are required to cover $C$. 
\end{example}

\begin{example}[Box-counting fractal string of a 1-dimensional fractal]
\label{eg:BoxCountingFractalStringOf1DFractal}
Consider the self-similar set $F$ which is the attractor of the IFS $\bfPhi_1=\{\Phi_j\}_{j=1}^4$ on the unit square $[0,1]^2\subset\R^2$ given by
\begin{align*}
\Phi_1(x)	&=\frac{1}{4}x, \quad \Phi_2(x)	=\frac{1}{4}x+\left(\frac{3}{4},0\right), \quad
\Phi_3(x)	=\frac{1}{4}x+\left(\frac{3}{4},\frac{3}{4}\right), \quad \textnormal{and}\\ \Phi_4(x)	&=\frac{1}{4}x+\left(0,\frac{3}{4}\right).
\end{align*}
The Moran equation of $F$ is simply $4\cdot 4^{-s}=1$, hence $D_{\bfPhi_1}=\dim_BF=\dim_MF=1$ and $F$ is a 1-dimensional self-similar set which is totally disconnected.

Let $N_B(F,x)$ be as in Definition \ref{def:BoxCountingFunction}. Then, for $x \in (0,2]$, we have
\[
N_B(F,x)=
\begin{cases}
1, & 0 < x \leq 2/\sqrt{2}, \\
2, & 2/\sqrt{2} < x \leq 8/\sqrt{17}, \\
3, & 8/\sqrt{17} < x \leq 2.
\end{cases}
\]
Indeed, we have: $\sqrt{2}$ is the distance between $(0,0)$ and $(1,1)$; $\sqrt{17}/4 $ is the minimum distance between $(0,0),(1,1/4),$ and $(1/4,1)$; and the minimum distance between $(0,0),(0,1),(1,0),$ and $(1,1)$ is $1$. Hence, $M_1=1, M_2=2$, and $M_3=3$. 

For $x>2$, the self-similarity of $F$ implies that $M_n=j4^k$, where $n$ is uniquely expressed as $n=3k+j$ with $k\in \N\cup\{0\}$ and $j\in\{1,2,3\}$. So, for $n\geq 2$, we have $m_n=M_{n+1}-M_n=4^k$ and therefore the box-counting fractal string $\calL_B=(\ell_j)_{j=1}^\infty$ of $F$ is the sequence obtained by putting the following collection of distinct scales in nonincreasing order and listing them according to multiplicity:
\begin{align*}
&\left\{\sqrt{2}/2: \textnormal{ multiplicity } 2 \right\}\\ 
&\cup \left\{\sqrt{2}/(2\cdot 4^k): \textnormal{ multiplicity } 4^k, k\in \N \right\}\\
&\cup \left\{\sqrt{17}/(8\cdot 4^k): \textnormal{ multiplicity } 4^k, k\in \N\cup\{0\} \right\}\\
&\cup \left\{1/(2\cdot 4^k): \textnormal{ multiplicity } 4^k, k\in \N\cup\{0\} \right\}. 
\end{align*}
\end{example}

Examples \ref{eg:CantorSetBoxCountingFractalString} and \ref{eg:BoxCountingFractalStringOf1DFractal} will be revisited and expanded upon in the following subsection.


\subsection{Box-counting zeta functions}
\label{sec:BoxCountingZetaFunctions}

Suppose $A$ is a bounded infinite subset of $\R^m$. Each scale $l_n \in \calL_B$ is distinct and, for $n \geq 2$, counted according to the multiplicity $m_n:=N_B(A,x_{n+1})-N_B(A,x_n)$. It will help to note that we can also consider $\calL_B$ to be given by the nonincreasing sequence $(\ell_j)_{j \in \N}$, where the distinct values among the $\ell_j$'s repeat the $l_n$'s according to the multiplicities $m_n$. (The convention of distinguishing the notation $\ell_j$ and $l_n$ in this way is established in \cite{LapvF6} and its predecessors, where the distinction allows for various results therein to be more clearly stated and derived.) In this setting, we immediately have the following connection between $N_{\calL_B}$, the counting function of the reciprocal lengths of $\calL_B$, and the box-counting function $N_B(A,x)$.

\begin{lemma}
\label{lem:CountingFunctionCorrelation}
For $x \in (x_1,\infty)\setminus(x_n)_{n \in \N}$,
\begin{align*}
N_{\calL_B}(x)=N_B(A,x).
\end{align*}
\end{lemma}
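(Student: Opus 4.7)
\medskip
\noindent\textbf{Proof proposal.} The plan is a short, direct calculation: I will locate the given $x$ in one of the half-open intervals $(x_n,x_{n+1}]$ determined by the reciprocal scales and then compare both sides by a telescoping argument, using Proposition~\ref{prop:BoxCountingJumps} for the right-hand side and the definition of the multiplicities $m_n$ for the left-hand side.

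First I would observe that, by Proposition~\ref{prop:BoxCountingJumps}, the sequence $(x_n)_{n\in\N}$ is strictly increasing with $x_n\nearrow\infty$. Hence, for $x\in(x_1,\infty)\setminus(x_n)_{n\in\N}$, there is a unique $n\geq 1$ such that $x_n<x<x_{n+1}$. Applying Equation \eqref{eqn:BoxCountingJumps} with this $n$ (using that $x_n<x\leq x_{n+1}$) yields
\begin{align*}
N_B(A,x) \;=\; N_B(A,x_{n+1}) \;=\; M_{n+1}.
\end{align*}

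Next I would evaluate $N_{\calL_B}(x)$. Since the distinct reciprocal scales $\ell_j^{-1}$ of $\calL_B$ are exactly the $x_k$ (each repeated with multiplicity $m_k$) and $x_k\leq x$ iff $k\in\{1,\dots,n\}$, the definition of the geometric counting function gives
\begin{align*}
N_{\calL_B}(x) \;=\; \sum_{k=1}^{n} m_k.
\end{align*}
Recalling from Definition~\ref{def:BoxCountingFractalString} that $m_1=M_2$ and $m_k=M_{k+1}-M_k$ for $k\geq 2$, the sum telescopes:
\begin{align*}
\sum_{k=1}^{n} m_k \;=\; M_2 + \sum_{k=2}^{n}(M_{k+1}-M_k) \;=\; M_{n+1},
\end{align*}
with the convention that the last sum is empty when $n=1$. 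Combining the two displays yields $N_{\calL_B}(x)=N_B(A,x)$, as desired.

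There is no real obstacle here beyond bookkeeping; the only subtle point worth flagging in the write-up is why the exceptional set $(x_n)_{n\in\N}$ must be excluded. At $x=x_n$ the reciprocal length $\ell_j^{-1}=x_n$ is included in the count defining $N_{\calL_B}$ (which uses the inequality $\ell_j^{-1}\leq x$), so $N_{\calL_B}(x_n)=\sum_{k=1}^{n}m_k=M_{n+1}$, whereas Proposition~\ref{prop:BoxCountingJumps} gives $N_B(A,x_n)=M_n$; the two functions agree off this discrete set but differ by a jump at each $x_n$.
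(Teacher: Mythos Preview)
Your argument is correct and is precisely the detailed unpacking of what the paper means by ``follows at once from Definitions~\ref{def:BoxCountingFunction} and~\ref{def:BoxCountingFractalString}''; the paper's own proof is a one-line appeal to those definitions, and your telescoping computation is exactly how one would make that appeal explicit. Your closing remark about the discrepancy at the points $x_n$ is also accurate and matches the paper's comment immediately following the lemma.
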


\begin{proof}
The result follows at once from Definitions \ref{def:BoxCountingFunction} and \ref{def:BoxCountingFractalString}.
\end{proof}

In general, Lemma \ref{lem:CountingFunctionCorrelation} does not hold for $x=x_n$, though equality may hold for certain choices of $N_B(A,x)$. Furthermore, the primary applications of  Lemma \ref{lem:CountingFunctionCorrelation} are Corollary \ref{cor:EquivalentZetaFunctions} and Theorem \ref{thm:JohnnyDarko} where the behavior of $N_B(A,x)$ at $x=x_n$ does not affect the conclusions. Moreover, for a bounded infinite set $A$, the geometric zeta function of the box-counting fractal string $\calL_B$ is given by
\begin{align*}
\zeta_{\calL_B}(s)&=N_B(A,l_2^{-1})l_1^s+\sum_{n=2}^{\infty} (N_B(A,l_{n+1}^{-1})-N_B(A,l_n^{-1})) l_n^s = \sum_{j=1}^\infty \ell_j^s,
\end{align*}
for $\textnormal{Re}(s)>D_{\calL_B}$. We take this zeta function to be our \emph{box-counting zeta function} for a bounded infinite set $A$ in Definition \ref{def:BoxCountingZetaFunction}.

\begin{definition}
\label{def:BoxCountingZetaFunction} 
Let $A$ be a bounded infinite subset of $\R^m$. The \emph{box-counting zeta function} of $A$, denoted $\zeta_B$, is the geometric zeta function of the box-counting fractal string $\calL_B$. That is, 
\begin{align*}
\zeta_B(s) := \zeta_{\calL_B}(s) = \sum_{n=1}^\infty m_n l_n^s, 
\end{align*}
for $Re(s) > D_B := D_{\calL_B}$. The (optimum) value $D_B$ is the \emph{abscissa of convergence} of $\zeta_B$. The set of \emph{box-counting complex dimensions} of $A$, denoted $\calD_B$, is the set of complex dimensions $\calD_{\calL_B}$ of the box-counting fractal string $\calL_B$.
\end{definition}

\begin{remark}
\label{rmk:FiniteBoxCountingFractalString}
Note that we do not consider the case when $A$ is finite. One may, of course, define the box-counting fractal string $\calL_B$ for such a set as a finite sequence of positive real numbers.  In that case, however, the box-counting zeta function would comprise a finite sum, which would yield an abscissa of convergence $-\infty$ and no complex dimensions; see Remark \ref{rmk:AbscissaForFiniteSequences}. That is, in the context of the theory of complex dimensions of fractal strings, the case of finite sets is not very interesting.
\end{remark}

\begin{example}[Box-counting zeta function of the Cantor set]
\label{eg:CantorSetBoxCountingZetaFunction}
By Example \ref{eg:CantorSetBoxCountingFractalString}, the box-counting fractal string $\calL_B$ of the Cantor set $C$ is given by Equation \eqref{eqn:CantorSetBoxCountingFractalString}.
It follows that for $\textnormal{Re}(s)>\log_{3}2$, the box-counting zeta function of $C$ is given by
\begin{align*}
\zeta_B(s)	&= 2+\sum_{n=2}^\infty 2^{n-1}\cdot 3^{-(n-1)s} 
= 1+\frac{1}{1-2\cdot3^{-s}}. 
\end{align*}
Thus, $D_B=\dim_{B}C=\dim_{M}C=\log_{3}2$ and $\zeta_B$ has a meromorphic extension to all of $\C$ given by the last expression in the above equation. Moreover, we have 
\begin{align*}
\calD_B	&=\calD_{CS}=\calD_{\calL_{CS}}=\left\{ \log_3{2} + i\frac{2\pi}{\log{3}}z : z \in \Z \right\}.
\end{align*}
\end{example}

\begin{example}
\label{eg:BoxCountingZetaFunction1DFractal}
The box-counting fractal string $\calL_B$ of the 1-dimensional self-similar set $F$ (the attractor of the IFS $\bfPhi_1$), where $N_B(F,x)$ is as in Definition \ref{def:BoxCountingFunction}, is given in Example  \ref{eg:BoxCountingFractalStringOf1DFractal}. Hence, the box-counting zeta function of $F$ is given (for $\textnormal{Re}(s)>1$) by
\begin{align}
\label{eqn:1DBoxCountingZetaFunction}
\zeta_B(s)	&= \left(\frac{\sqrt{2}}{2}\right)^s + \left(\left(\frac{\sqrt{2}}{2}\right)^s + \left(\frac{\sqrt{17}}{8}\right)^s + \left(\frac{1}{2}\right)^s\right)
\sum_{k=0}^\infty\left(\frac{4}{4^s}\right)^k \\
&= \left(\frac{\sqrt{2}}{2}\right)^s + \frac{\left(\sqrt{2}/2\right)^s + \left(\sqrt{17}/8\right)^s + \left(1/2\right)^s}{1-4\cdot 4^{-s}}.
\end{align}
Thus, $D_B=\dim_{B}F=\dim_{M}F=1$ and $\zeta_B$ has a meromorphic extension to all of $\C$ given by the last expression in the above equation. Moreover,  we have 
\begin{align}
\label{eqn:1DBoxCountingComplexDimensions}
\calD_B	&= \calD_B(\C)=\left\{1 + i\frac{2\pi}{\log{4}}z : z \in \Z \right\}.
\end{align}
Note that the series corresponding to $\zeta_B(1)$ is divergent. Hence, the fractal string $\calL_B$ does not correspond to an ordinary fractal string (which, by definition, requires $\zeta_\calL(1)=\sum_{j=1}^\infty \ell_j$ to be convergent). 
\end{example}

\begin{remark}
\label{rmk:LatticeSelfSimilarIFS}
The Cantor set $C$ and the 1-dimensional self-similar set $F$ are each the attractor of a \emph{lattice} iterated function system; see \cite[\S 13.1]{LapvF6} as well as \cite{LapPe3,LapPe2,LapPeWin}. Essentially, an IFS generated by similarities (i.e., an IFS for which Equation \eqref{eqn:IFSSelfSimilar} holds) is \emph{lattice} if, for the distinct values $t_1,\ldots,t_{N_0}$ among the scaling ratios $r_1,\ldots,r_N$, there are positive integers $k_j$ where $\gcd(k_1,\ldots,k_{N_0})=1$ and a positive real number $0<r<1$ such that $t_j=r^{k_j}$ for each $j=1,\ldots,N_0$.\footnote{Equivalently, the precise definition (following \cite{LapvF6}) is that the \emph{distinct} scaling ratios generate a multiplicative group of rank 1.} Note that in each case, the box-counting complex dimensions comprise a set of complex numbers with a unique real part (equal to the box-counting dimension) and a vertical (and arithmetic) progression, in both directions, of imaginary parts. 

In the case of the Cantor set $C$, the box-counting complex dimensions $\calD_B$ coincide with the usual complex dimensions $\calD_{CS}$. Moreover, the structure of $\calD_{CS}$ allows for the application of Theorem \ref{thm:CriterionForMinkowskiMeasurability} and, hence, we conclude (as in \cite{LapPo1} and \cite{LapvF6}) that $C$ is not Minkowski measurable.

In the case of the 1-dimensional self-similar set $F$ of Examples \ref{eg:BoxCountingFractalStringOf1DFractal} and \ref{eg:BoxCountingZetaFunction1DFractal}, the set of complex dimensions $\calD_B$ has no counterpart in the context of usual complex dimensions since $F$ is not the complement of an ordinary fractal string. As such, Theorem \ref{thm:CriterionForMinkowskiMeasurability} does not apply. Moreover, since $\dim_MF=1$, the corresponding results in \cite[\S 13.1]{LapvF6} do not apply either. 
This provides motivation for developing a theory of complex dimensions which can take such examples, and many others, into account. The box-counting fractal strings defined in this paper, and investigated further in \cite{LapRoZu}, provide a first step in developing one such theory. Analogous comments regarding the further development of a higher-dimensional theory of complex dimensions can be made about the results of \cite{LapRaZu} to be discussed in Section \ref{sec:DistanceAndTubeZetaFunctions}.
\end{remark}

The next corollary follows readily from Lemma \ref{lem:CountingFunctionCorrelation} and Proposition \ref{prop:CountingFunctionDFractalStringD}. It establishes the equivalence of the box-counting zeta function $\zeta_B$ and an integral transform of the (appropriately truncated) box-counting function $N_B(A,x)$.

\begin{corollary}
\label{cor:EquivalentZetaFunctions}
Let $A$ be a bounded set. Then 
\begin{align*}
\zeta_B(s) &= \zeta_{\calL_B}(s) = s\int_{l_1^{-1}}^{\infty}x^{-s-1}N_B(A,x)dx,
\end{align*}
for $Re(s) > D_B$.
\end{corollary}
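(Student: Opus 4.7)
The plan is to chain together Lemma \ref{lem:GeometricZetaFunctionIntegralTransformCountingFunction} (the integral-transform representation of a geometric zeta function), Lemma \ref{lem:CountingFunctionCorrelation} (the identification of $N_{\calL_B}$ with $N_B(A,\cdot)$ off a countable set), and the identification $\zeta_B = \zeta_{\calL_B}$ from Definition \ref{def:BoxCountingZetaFunction}. Since $\calL_B$ is an honest fractal string (Remark \ref{rmk:BoxCountingFractalStringIsAFractalString}) with abscissa of convergence $D_B = D_{\calL_B}$ (Proposition \ref{prop:CountingFunctionDFractalStringD} applied to $\calL_B$), every hypothesis needed to invoke Lemma \ref{lem:GeometricZetaFunctionIntegralTransformCountingFunction} on the half-plane $\mathrm{Re}(s) > D_B$ is already in place.

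First I would fix $s\in\C$ with $\mathrm{Re}(s) > D_B$ and apply Lemma \ref{lem:GeometricZetaFunctionIntegralTransformCountingFunction} to $\calL_B$ to obtain
\begin{equation*}
\zeta_B(s) \;=\; \zeta_{\calL_B}(s) \;=\; s\int_0^{\infty} N_{\calL_B}(x)\,x^{-s-1}\,dx,
\end{equation*}
the integral being absolutely convergent exactly on that half-plane. Next I would shrink the range of integration: by construction $\ell_1 = l_1$ (the largest reciprocal length of $\calL_B$), so $N_{\calL_B}(x) = 0$ for all $x < l_1^{-1}$; the integrand therefore vanishes on $(0, l_1^{-1})$, which reduces the integral to $s\int_{l_1^{-1}}^{\infty} N_{\calL_B}(x)\, x^{-s-1}\,dx$.

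Finally, I would replace $N_{\calL_B}$ by $N_B(A,\cdot)$ in the integrand. By Lemma \ref{lem:CountingFunctionCorrelation}, the two functions agree on $(x_1,\infty)\setminus(x_n)_{n\in\N}$, and the exceptional set $(x_n)_{n\in\N}$ is at most countable (Proposition \ref{prop:BoxCountingJumps}), hence of one-dimensional Lebesgue measure zero. Since $x_1 = l_1^{-1}$, this discrepancy cannot affect the value of the (absolutely convergent) integral on $[l_1^{-1},\infty)$, so
\begin{equation*}
\zeta_B(s) \;=\; s\int_{l_1^{-1}}^{\infty} N_B(A,x)\, x^{-s-1}\,dx
\end{equation*}
for every $s$ with $\mathrm{Re}(s) > D_B$, as desired.

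There is no real obstacle here; the only subtlety is the null-measure mismatch between $N_{\calL_B}$ and $N_B(A,\cdot)$ at the jump points $x_n$, and that is dispatched by the countability noted in Proposition \ref{prop:BoxCountingJumps}. The proof is essentially a bookkeeping exercise wrapping Lemma \ref{lem:GeometricZetaFunctionIntegralTransformCountingFunction} into the box-counting language.
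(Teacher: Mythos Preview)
Your proof is correct and follows essentially the same route the paper intends: the paper simply states that the corollary ``follows readily from Lemma \ref{lem:CountingFunctionCorrelation} and Proposition \ref{prop:CountingFunctionDFractalStringD}'' without spelling out the details, and your argument---apply the integral representation of Lemma \ref{lem:GeometricZetaFunctionIntegralTransformCountingFunction} to $\calL_B$, truncate using $N_{\calL_B}(x)=0$ for $x<l_1^{-1}$, then swap $N_{\calL_B}$ for $N_B(A,\cdot)$ via Lemma \ref{lem:CountingFunctionCorrelation} on a set of full measure---is exactly the intended unpacking. The only cosmetic difference is that you invoke Lemma \ref{lem:GeometricZetaFunctionIntegralTransformCountingFunction} directly (which is really the operative ingredient), whereas the paper cites Proposition \ref{prop:CountingFunctionDFractalStringD}; the latter is presumably mentioned to pin down that the half-plane of convergence is $\{\mathrm{Re}(s)>D_B\}$, but your observation that $D_B=D_{\calL_B}$ by definition already handles this.
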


We close this subsection with a theorem which is a partial statement of our main result, Theorem \ref{thm:MainResult}. Specifically, the upper box-counting dimension of a bounded infinite set is equal to the abscissa of convergence of the corresponding box-counting zeta function. 

\begin{theorem}
\label{thm:JohnnyDarko}
Let $A$ be a bounded infinite subset of $\R^m$. Then $\overline{\dim}_{B}A=D_B$.
\end{theorem}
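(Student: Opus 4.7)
The plan is to chain together three facts already assembled in the paper. First, by Remark \ref{rmk:EquivalentUpperBoxDimension}, the upper box-counting dimension admits the characterization
\[
\overline{\dim}_{B}A \;=\; \inf\{\alpha \geq 0 : N_B(A,x) = O(x^\alpha) \text{ as } x\to\infty\}.
\]
Second, by Proposition \ref{prop:CountingFunctionDFractalStringD} applied to the box-counting fractal string $\calL_B$, we have $D_{\calL_B} = D_N$, where
\[
D_N = \inf\{\alpha \geq 0 : N_{\calL_B}(x) = O(x^\alpha) \text{ as } x\to\infty\}.
\]
Since $D_B := D_{\calL_B}$ by Definition \ref{def:BoxCountingZetaFunction}, the theorem will follow once we show that the two infima above agree, i.e.\ that $N_B(A,x)$ and $N_{\calL_B}(x)$ have the same asymptotic order as $x\to\infty$.

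To establish this last equality, I would invoke Lemma \ref{lem:CountingFunctionCorrelation}, which tells us that $N_{\calL_B}(x) = N_B(A,x)$ for every $x \in (x_1,\infty) \setminus (x_n)_{n\in\N}$, where $(x_n) = (l_n^{-1})$ is the countable, strictly increasing, unbounded sequence of jump points produced in Proposition \ref{prop:BoxCountingJumps}. Since both $N_B(A,\cdot)$ and $N_{\calL_B}(\cdot)$ are nondecreasing integer-valued step functions which are constant on each half-open interval $(x_{n-1}, x_n)$, the possible discrepancy at the isolated points $x_n$ cannot affect a Big-$O$ comparison of the form $f(x) \leq C x^\alpha$ holding for all sufficiently large $x$: one simply uses that a bound valid on the complement of $(x_n)$ forces, by taking one-sided limits along $x \nearrow x_n$ from the left (where the two functions do agree), the corresponding bound at the point $x_n$ itself, up to a harmless absorption into the constant $C$. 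Consequently,
\[
N_B(A,x) = O(x^\alpha) \iff N_{\calL_B}(x) = O(x^\alpha) \qquad (x\to\infty),
\]
so the two infima coincide.

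Combining these observations yields $\overline{\dim}_B A = D_N = D_{\calL_B} = D_B$, as required. The main conceptual step is really packaged inside Proposition \ref{prop:CountingFunctionDFractalStringD} (relating the abscissa of convergence of a Dirichlet series to the growth order of its counting function via the Mellin-type identity of Lemma \ref{lem:GeometricZetaFunctionIntegralTransformCountingFunction}); once that is granted, the only subtlety is the routine (but necessary) verification that $N_B(A,x)$ and $N_{\calL_B}(x)$ have matching growth exponents despite their discrepancy on the discrete set $\{x_n\}$. This verification is the mildest obstacle; it relies on the monotonicity of both step functions and the fact that the exceptional set has no accumulation in $(0,\infty)$, as guaranteed by part (ii) of Proposition \ref{prop:BoxCountingJumps}.
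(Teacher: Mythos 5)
Your proposal is correct and follows essentially the same route as the paper: invoke Proposition \ref{prop:CountingFunctionDFractalStringD} to get $D_B = D_{\calL_B} = D_N$, use Lemma \ref{lem:CountingFunctionCorrelation} together with the equivalent characterization of $\overline{\dim}_B A$ from Remark \ref{rmk:EquivalentUpperBoxDimension}, and dispose of the discrepancy at the jump set $(x_n)$ by monotonicity. The only difference is that you spell out the monotonicity/one-sided-limit argument showing the discrete exceptional set is harmless, whereas the paper simply notes ``since these counting functions are nondecreasing'' and leaves the rest implicit.
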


\begin{proof}
The proof follows from a connection made through $D_N$, the asymptotic growth rate of the geometric counting function $N_\calL(x)$ of a fractal string $\calL$ given by Equation \eqref{eqn:CountingFunctionD}. 

Let $\calL=\calL_B$. By Proposition \ref{prop:CountingFunctionDFractalStringD}, we have $D_\calL=D_B=D_N$. Now, Lemma \ref{lem:CountingFunctionCorrelation} implies $N_\calL(x)=N_B(A,x)$ for $x \in (l_1^{-1},\infty)\setminus(l_n^{-1})_{n\in\N}$. Since these counting functions are nondecreasing, the equation $\overline{\dim}_{B}A=D_N$ follows from the formulation  of $\overline{\dim}_{B}A$ given in Remark \ref{rmk:EquivalentUpperBoxDimension}.
\end{proof}

\subsection{Tessellation fractal strings and zeta functions} 
In this subsection, we loosely discuss another type of fractal string defined for a given bounded infinite subset $A$ of $\R^m$. Unlike the box-counting fractal string $\calL_B$, which is completely determined by the set $A$ and the box-counting function $N_B(A,\cdot)$, the \emph{tessellation fractal string} defined here depends on the set $A$, a chosen parameter, and a chosen family of tessellations of $\R^m$.

First, choose a scaling parameter $\lambda\in(0,1)$. For any $n\in\N$, consider the $n$-th tessellation of $\R^m$ defined by the family of cubes of length $\lambda^n$ (obtained by taking translates of the cube $[0,\lambda^n]^m$ in $\R^m$). Henceforth, the number of cubes of the $n$-th tessellation that intersect $A$ is denoted by $m_n(\lambda)$. Let the scale $l_n(\lambda):=\lambda^n$ be of multiplicity $m_n(\lambda)$. This defines the box-counting fractal string $\calL(A,\lambda)=(\ell_j)_{j\in\N}$, where $(\ell_j)_{j\in\N}$ is the sequence starting with $l_1$ with multiplicity $m_1$, $l_2$ with multiplicity $m_2$, and so on. The geometric counting function $N_{\calL(A,\lambda)}(x)=\#\{j\in\N: \ell_j^{-1}\ge x\}$ of the fractal string is then well defined. 

More generally, let $U$ be a compact subset of $\R^m$ with nonempty interior, satisfying the following properties:
\begin{enumerate}
\addtolength{\itemsep}{0.3\baselineskip}
\item[(a)] there exists a countable family of isometric maps $f_j:\R^m\to\R^m$ such that the the family of sets $V_j=f_j(U)$, $j\ge1$, is a cover of $\R^m$;
\item[(b)] for $j\neq k$, the interiors of $V_j$ and $V_k$ are disjoint.
\end{enumerate}
We say that the family $(V_j)_{j\ge1}$ is a \emph{tessellation of $\R^m$}, generated by the basic shape (or `tile') $U$ and the family of isometries. Also, we say in short that \emph{$U$ tessellates~$\R^m$}. Note that if $\lambda\in(0,1)$ is a fixed real number, then the basic shape generates a sequence of tessellations indexed by $n\in\N$: $(\lambda^n V_j)_{j\ge1}$. The family $(\lambda^n V_j)_{j\ge1}$ is called the \emph{$n$-th tessellation of $\R^m$}, generated by the basic shape (or tile) $U$, the family of isometries $(f_j)_{j\ge1}$, and $\lambda\in(0,1)$.

Let $A$ be a given bounded set in $\R^m$. Define $m_n(\lambda)=m_n(A,U,\lambda)$ analogously as above, by counting the number of elements of the $n$-th tessellation which intersect $A$. The \emph{tessellation fractal string} $\calL(A,U,\lambda)$ of the set $A$ is then  the fractal string defined by
\begin{align*}
\calL(A,U,\lambda)	&:=\{l_n(\lambda)=\lambda^n:\mbox{ $l_n(\lambda)$ has multiplicity $m_n(\lambda)$, $n\in\N$}\}=(\ell_j)_{j\in\N}.
\end{align*}
The middle set is in fact a \emph{multiset}, by which we mean that its elements repeat with prescribed multiplicity. Using arguments analogous to those from \cite[pp.\ 38--39]{Falc}, we obtain that
\begin{equation}
\label{dimBA}
\overline\dim_BA=\limsup_{n\to\infty}\frac{\log m_n(\lambda)}{\log\lambda^{-n}}.
\end{equation}
(Also, see \cite[p.\ 41]{Falc} or \cite[p.\ 24]{Tri}.) Here,  we have also used a version of Proposition~\ref{prop:GeometricSequenceSufficient} above.

The geometric zeta function of the tessellation fractal string $\calL(A,U,\lambda)$, called the \emph{tessellation zeta function}, is given by 
\begin{align}
\label{zeta_tess}
\zeta_{\calL(A,U,\lambda)}(s)	&=\sum_{j=1}^{\infty}\ell_j^s=\sum_{n=1}^{\infty} m_n(\lambda)\,\lambda^{ns}
\end{align}
for $\textnormal{Re}(s)$ large enough. Also, when defined accordingly, the set of complex dimensions $\calD_{\calL(A,U,\lambda)}$ of the tessellation fractal string $\calL(A,U,\lambda)$ is called the set of \emph{tessellation complex dimensions} of $A$ (relative to the tessellation associated with $U$ and $\lambda$).

The main result regarding this zeta function is the following theorem. 

\begin{theorem}
\label{thm:TessellationDimension}
Let $U \subset \R^m$ be a compact set with nonempty interior, which tessellates $\R^m$. Then, the upper box-counting dimension of a bounded infinite set $A$ in $\R^m$ is equal to the abscissa of convergence $D_{\calL(A,U,\lambda)}$ of the geometric zeta function \eqref{zeta_tess} of its tessellation fractal string. That is, 
\begin{align*}
\overline{\dim}_{B}A=D_{\calL(A,U,\lambda)}.
\end{align*}
\end{theorem}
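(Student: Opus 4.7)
The plan is to reduce Theorem \ref{thm:TessellationDimension} to the identity \eqref{dimBA} by computing the abscissa of convergence of the Dirichlet series \eqref{zeta_tess} directly. Since the distinct scales of $\calL(A,U,\lambda)$ are $l_n(\lambda) = \lambda^n$ with multiplicities $m_n(\lambda)$, on the real axis the tessellation zeta function takes the form
\begin{equation*}
\zeta_{\calL(A,U,\lambda)}(\sigma) = \sum_{n=1}^{\infty} m_n(\lambda)\,\lambda^{n\sigma}, \qquad \sigma \in \R,
\end{equation*}
which, for each fixed $\sigma$, is a series with nonnegative terms obtained from the power series $P(y) = \sum_{n=1}^\infty m_n(\lambda)\,y^n$ by the substitution $y = \lambda^\sigma \in (0,\infty)$. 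Since the convergence of a Dirichlet series with positive coefficients depends only on $\textnormal{Re}(s)$, it suffices to analyze this real-variable series.

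The main step is a direct application of the Cauchy--Hadamard formula. The radius of convergence $R$ of $P(y)$ satisfies $R^{-1} = \limsup_{n\to\infty} m_n(\lambda)^{1/n}$, so $\zeta_{\calL(A,U,\lambda)}(\sigma) < \infty$ if and only if $\lambda^\sigma < R$. Taking logarithms and using the fact that $\log \lambda < 0$, this is equivalent to
\begin{equation*}
\sigma \;>\; \frac{\log R^{-1}}{\log \lambda^{-1}} \;=\; \limsup_{n\to\infty} \frac{\log m_n(\lambda)}{n \log \lambda^{-1}} \;=\; \limsup_{n\to\infty} \frac{\log m_n(\lambda)}{\log \lambda^{-n}}.
\end{equation*}
By formula \eqref{dimBA}, the right-hand side is precisely $\overline{\dim}_B A$, whence $D_{\calL(A,U,\lambda)} = \overline{\dim}_B A$.

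An alternative route, closer in spirit to the proof of Theorem \ref{thm:JohnnyDarko}, is to compute the geometric counting function $N_{\calL(A,U,\lambda)}(x) = \sum_{k \leq n} m_k(\lambda)$ for $\lambda^{-n} \leq x < \lambda^{-(n+1)}$, observe that this partial sum is dominated by its terminal term $m_n(\lambda)$ up to a bounded geometric factor (because the $m_k(\lambda)$'s grow essentially like $\lambda^{-k\overline{\dim}_B A}$ on a suitable subsequence, as dictated by \eqref{dimBA}), and thereby conclude that $\rho_{\calL(A,U,\lambda)} = \overline{\dim}_B A$. One then invokes Proposition \ref{prop:CountingFunctionDFractalStringD} together with Remark \ref{rmk:DimensionEqualsOrder} to conclude $D_{\calL(A,U,\lambda)} = \rho_{\calL(A,U,\lambda)} = \overline{\dim}_B A$.

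The only nontrivial ingredient is the identity \eqref{dimBA} itself (cf. the arguments in \cite[pp.\ 38--39 and p.\ 41]{Falc} and \cite[p.\ 24]{Tri}, together with Proposition \ref{prop:GeometricSequenceSufficient}), and this is stated with references earlier in the subsection. Once \eqref{dimBA} is granted, no serious obstacle remains: the rest is the standard passage between the root-test expression for the abscissa and the $\log/\log$ expression for $\overline{\dim}_B A$, which, if anywhere, constitutes the main bookkeeping step.
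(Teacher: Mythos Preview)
Your main argument is correct and essentially identical to the paper's: both compute the abscissa of the series $\sum_n m_n(\lambda)\lambda^{ns}$ via the root test (the paper calls it ``Cauchy's criterion'', you phrase it as Cauchy--Hadamard for the associated power series in $y=\lambda^\sigma$) and then invoke \eqref{dimBA}. The alternative route you sketch via the counting function is rougher---the claim that $\sum_{k\le n} m_k(\lambda)$ is controlled by $m_n(\lambda)$ up to a bounded factor would need more care---but since your primary argument already does the job, this is moot.
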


\begin{proof}
Using Cauchy's criterion for convergence, we obtain that the series \eqref{zeta_tess} converges for all $s\in\C$ such that
$$
\limsup_{n\to\infty}m_n(\lambda)^{1/n}\lambda^{\operatorname{Re}(s)}<1,
$$
that is,
$$
\operatorname{Re}(s)>\frac{\log(\limsup_{n\to\infty}m_n(\lambda)^{1/n})}{\log\lambda^{-1}}.
$$
The series (\ref{zeta_tess}) diverges if we have the opposite inequality. Therefore, the abscissa of convergence of (\ref{zeta_tess}) is
\begin{equation}
\label{dimD}
D_{\calL(A,U,\lambda)}=\frac{\log(\limsup_{n\to\infty}m_n(\lambda)^{1/n})}{\log\lambda^{-1}}=\limsup_{n\to\infty}\frac{\log m_n(\lambda)}{\log\lambda^{-n}}.
\end{equation}
In light of \eqref{dimBA} and \eqref{dimD}, we deduce that $\overline\dim_BA=D_{\calL(A,U,\lambda)}$.
\end{proof}

\begin{remark}
The value of $D_{\calL(A,U,\lambda)}$ is independent of the choice of $\lambda\in(0,1)$, since by Theorem~\ref{thm:TessellationDimension} its
value is equal to $\overline\dim_BA$. In concrete applications we choose the basic shape $U$
and $\lambda\in(0,1)$ that are best suited to the geometry of $A$. For example, if $A$ is the triadic Cantor set, we take $U=[0,1]$ and $\lambda=1/3$, while for the
Sierpinski gasket we take $U$ to be an equilateral triangle and $\lambda=1/2$.
\end{remark}

\begin{example} 
Let $F$ be the 1-dimensional self-similar set from Examples \ref{eg:BoxCountingFractalStringOf1DFractal} and \ref{eg:BoxCountingZetaFunction1DFractal}. We define $U$ as the unit square $[0,1]^2$ and $\lambda=1/4$. Here, the scale $l_n(1/4)=1/4^n$ occurs with multiplicity $m_n(1/4)=9\cdot4^n$, defining the corresponding tessellation fractal string $\calL(F,U,1/4)$. For $\textnormal{Re}(s)>1$, the tessellation zeta function is given by 
$$
\zeta_{\calL(F,U,1/4)}(s) = \sum_{n=1}^\infty 9\cdot4^n\cdot 4^{-ns}=\frac9{4^{s-1}-1}.
$$
According to Theorem~\ref{thm:TessellationDimension}, the abscissa of convergence  $D_{\calL(F,U,1/4)}=1$ equal to the box-counting dimension of $F$. It follows that $\zeta_{\calL(F,U,1/4)}(s)$ has a meromorphic extension to all of $\C$ given by $9(4^{s-1}-1)^{-1}$. Furthermore, the set of tessellation complex dimensions is equal to the set $\calD_B$ of box-counting complex dimensions given in~ \eqref{eqn:1DBoxCountingComplexDimensions}. That is,
\[
\calD_{\calL(F,U,1/4)}=\calD_B=\left\{ 1 + i\frac{2\pi}{\log 4}z : z\in\Z \right\}.
\]
Note that the tessellation fractal string $\calL(F,U,1/4)$ is unbounded, in the sense that the series given by $\zeta_{\calL(F,U,1/4)}(1)$ is divergent.

Analogous results hold regarding the Cantor set $C$ and its (classical and box-counting) fractal strings, zeta functions, and complex dimensions. Further (higher-dimensional) examples will be studied in \cite{LapRoZu}.
\end{example}

\section{Distance and tube zeta functions}
\label{sec:DistanceAndTubeZetaFunctions}

In this section, we deal with a class of zeta functions introduced by the first author during the 2009 ISAAC Conference at the University of Catania in Sicily, Italy. More generally, the main results of this section are obtained in the forthcoming paper \cite{LapRaZu}, written by the first and third authors, along with Goran Radunovi\'c. We state here only some of the basic results, without attempting to work at the greatest level of generality. We refer to \cite{LapRaZu} for more general statements and additional results and illustrative examples.

The following definition can be found in \cite{LapRaZu}. 
\begin{definition} 
\label{def:DistanceZetaFunction}
Let $A \subset \R^m$ be bounded. The \emph{distance zeta function} of $A$, denoted $\zeta_d$, is defined by
\begin{align}
\label{eqn:DistanceZetaFunction}
\zeta_d(s)	&:=\int_{A_\varepsilon}d(x,A)^{s-m}dx
\end{align}
for $\textnormal{Re}(s)>D_d$, where $D_d=D_d(A)$ denotes the abscissa of convergence of the distance zeta function $\zeta_d$ and $\varepsilon$ is a fixed positive number. 
\end{definition}

\begin{remark}
\label{rmk:IndependentOfVarepsilon}
It is shown in \cite{LapRaZu} that 
changing the value of $\varepsilon$ modifies the distance zeta function by adding an entire function to $\zeta_d$. Hence, the main properties of $\zeta_d$ do not depend on the choice of $\varepsilon>0$. As a result, this is also the case for $D_d$, the abscissa of convergence of $\zeta_d$ (cf. Theorem \ref{thm:DistanceZetaFunctionUpperBoxDimension}), and $\textnormal{res}(\zeta_d;D_d)$, the residue of $\zeta_d$ at $s=D_d$ (cf. Theorem \ref{thm:DistanceZetaFunctionAndMinkowskiContent}).
\end{remark}

The distance zeta function can be used as an effective tool in the computation of the box-counting dimensions of various subsets $A$ of some Euclidean space; see \cite{LapRaZu}. Indeed, one of the basic results concerning the distance zeta function is given in the following theorem, which is Theorem 1 in \cite{LapRaZu}. Note: unlike in Theorem \ref{thm:JohnnyDarko} above, we allow $A$ to be finite here.

\begin{theorem}
\label{thm:DistanceZetaFunctionUpperBoxDimension}
Let $A$ be a nonempty bounded subset of $\R^m$. Then $D_d=\overline{\dim}_{B}A$.
\end{theorem}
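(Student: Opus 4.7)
My plan is to reduce the distance integral to a one-dimensional Mellin-type integral involving the volume function $t\mapsto |A_t|:=\vol_m(A_t)$, and then read off the abscissa of convergence from the characterization of $\overline{\dim}_M A$ given in Remark \ref{rmk:EquivalentUpperMinkoskiDimension}. Since $|d(x,A)^{s-m}|=d(x,A)^{\operatorname{Re}(s)-m}$, absolute convergence of $\zeta_d(s)$ depends only on $\operatorname{Re}(s)$, so I may restrict attention to real $s<m$. The identity
\begin{equation*}
d(x,A)^{s-m}=\varepsilon^{s-m}+(m-s)\int_{d(x,A)}^{\varepsilon} t^{s-m-1}\,dt\qquad (0\leq d(x,A)\leq\varepsilon,\ s<m)
\end{equation*}
together with Tonelli's theorem and the elementary observation that $\{x\in A_\varepsilon:d(x,A)<t\}=A_t$ for $0<t\leq\varepsilon$, gives
\begin{equation*}
\zeta_d(s)=\varepsilon^{s-m}|A_\varepsilon|+(m-s)\int_0^\varepsilon t^{s-m-1}|A_t|\,dt.
\end{equation*}
Thus the convergence of $\zeta_d(s)$ is equivalent to that of the one-dimensional integral $I(s):=\int_0^\varepsilon t^{s-m-1}|A_t|\,dt$.

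For the upper bound $D_d\leq\overline{\dim}_B A$, I would fix any $s>\overline{\dim}_M A=\overline{\dim}_B A$ and pick $\alpha$ with $\overline{\dim}_M A<\alpha<s$. By Remark \ref{rmk:EquivalentUpperMinkoskiDimension} there is a constant $C$ with $|A_t|\leq Ct^{m-\alpha}$ for all $t\in(0,\varepsilon]$; plugging in,
\begin{equation*}
I(s)\leq C\int_0^\varepsilon t^{s-\alpha-1}\,dt<\infty
\end{equation*}
since $s-\alpha>0$. (Alternatively, this is the Harvey--Polking estimate cited in Remark \ref{rmk:HaPo}, applied with $\gamma=m-s$.) Hence $\zeta_d(s)<\infty$ for every real $s>\overline{\dim}_B A$, which forces $D_d\leq\overline{\dim}_B A$.

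For the reverse inequality $D_d\geq\overline{\dim}_B A$, fix any $\alpha$ with $0<\alpha<\overline{\dim}_M A$. By Remark \ref{rmk:EquivalentUpperMinkoskiDimension}, $|A_t|$ is not $O(t^{m-\alpha})$ as $t\to 0^+$, so I can extract a sequence $\varepsilon_k\searrow 0$ with $\varepsilon_{k+1}\leq\varepsilon_k/2$ and $|A_{\varepsilon_k}|/\varepsilon_k^{m-\alpha}\to\infty$. The intervals $[\varepsilon_k,2\varepsilon_k]$ are then pairwise disjoint and contained in $(0,\varepsilon]$ for large $k$, and by monotonicity $|A_t|\geq|A_{\varepsilon_k}|$ on each such interval. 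For any real $s<\alpha$, a direct computation yields
\begin{equation*}
\int_{\varepsilon_k}^{2\varepsilon_k} t^{s-m-1}|A_t|\,dt\geq c_s\,|A_{\varepsilon_k}|\,\varepsilon_k^{s-m}=c_s\,\frac{|A_{\varepsilon_k}|}{\varepsilon_k^{m-\alpha}}\,\varepsilon_k^{s-\alpha},
\end{equation*}
where $c_s=(1-2^{s-m})/(m-s)>0$. Since $s-\alpha<0$ and $|A_{\varepsilon_k}|/\varepsilon_k^{m-\alpha}\to\infty$, each of these contributions can be made arbitrarily large, and disjointness forces $I(s)=\infty$. Hence $\zeta_d(s)=\infty$ for every real $s<\alpha$, giving $D_d\geq\alpha$; letting $\alpha\nearrow\overline{\dim}_M A=\overline{\dim}_B A$ completes the proof.

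The main obstacle is the reverse inequality: it requires turning the non-quantitative failure of the $O$-condition into a genuine divergence statement, which is why I pass to the disjoint dyadic intervals $[\varepsilon_k,2\varepsilon_k]$ on which the monotonicity of $|A_t|$ is available. The Fubini rewriting of $\zeta_d$ and the equivalence $\overline{\dim}_M A=\overline{\dim}_B A$ (part of Theorem \ref{thm:Summary}) then do the rest.
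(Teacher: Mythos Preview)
Your argument is correct and essentially complete. Note, however, that the paper does not actually give a proof of this theorem: it is quoted as Theorem~1 of \cite{LapRaZu}, with no argument supplied here. That said, the paper does place the two main ingredients you use in plain sight. Your Fubini/layer-cake identity is exactly the relation~\eqref{identity} of Theorem~\ref{thm:DistanceAndTubeZetaFunctions} (whose proof the paper attributes to integration by parts, citing \cite{Zu0}), and your upper bound is precisely the Harvey--Polking estimate recalled in Remark~\ref{rmk:HaPo}. So your route is the one the paper's own narrative points to; what you add is a self-contained derivation of the identity and, more importantly, a full argument for the lower bound $D_d\ge\overline{\dim}_B A$, which the paper does not discuss at all. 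Your dyadic-interval trick---thinning the bad sequence so that the intervals $[\varepsilon_k,2\varepsilon_k]$ are disjoint and then exploiting the monotonicity of $t\mapsto|A_t|$---is exactly the right way to convert the failure of the $O$-bound into divergence of $I(s)$.

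One small gap to patch: your lower-bound argument opens with ``fix any $\alpha$ with $0<\alpha<\overline{\dim}_M A$'', which tacitly assumes $\overline{\dim}_M A>0$. When $\overline{\dim}_M A=0$ you still need $D_d\ge 0$. This is immediate: for any $a\in A$ one has $B(a,t)\subset A_t$, so $|A_t|\ge c_m t^m$ and hence $I(s)\ge c_m\int_0^\varepsilon t^{s-1}\,dt=\infty$ for every real $s\le 0$.
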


\begin{corollary}
\label{cor:holomorphic}
$\zeta_d$ is holomorphic in the half-plane $\textnormal{Re}(s)>\overline{\dim}_BA$. Furthermore, this open right half-plane is the largest one in which $\zeta_d(s)$ is holomorphic.  
\end{corollary}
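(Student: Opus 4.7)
The plan is to split the proof into two parts: holomorphy in the half-plane $\{\textnormal{Re}(s) > \overline{\dim}_B A\}$, and maximality of this half-plane. The first part follows from Theorem \ref{thm:DistanceZetaFunctionUpperBoxDimension}, which identifies $D_d$ with $\overline{\dim}_B A$, together with the standard theory of holomorphic integral transforms. On any compact set $K$ in the half-plane, $|d(x,A)^{s-m}| = d(x,A)^{\textnormal{Re}(s) - m}$ is dominated by $d(x,A)^{\sigma_K - m}$ where $\sigma_K := \min_{s \in K} \textnormal{Re}(s) > \overline{\dim}_B A$, and the majorant is integrable on $A_\varepsilon$ by the very definition of the abscissa of convergence. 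The formal $s$-derivatives $d(x,A)^{s-m}(\log d(x,A))^n$ satisfy a bound of the same type (up to a slightly smaller exponent absorbing the logarithmic factor), so differentiation under the integral is legitimate and shows that $\zeta_d$ is holomorphic on the half-plane.

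For maximality, my plan is a Landau-type argument exploiting the positivity of the integrand on the real axis. First, using Remark \ref{rmk:IndependentOfVarepsilon}, I will shrink $\varepsilon$ so that $d(x,A) < 1$ throughout $A_\varepsilon$, which forces $\log d(x,A) < 0$ on the whole domain of integration (and only adds an entire function to $\zeta_d$). Suppose, for contradiction, that $\zeta_d$ extends holomorphically to a disk $B(D_d, \delta)$. Choose a real $s_0 > D_d$ close enough to $D_d$ so that the maximal Taylor disk of $\zeta_d$ at $s_0$ reaches a real $s_1 < D_d$; differentiation under the integral then gives
\begin{equation*}
\zeta_d^{(k)}(s_0) = \int_{A_\varepsilon} d(x,A)^{s_0 - m}(\log d(x,A))^k\,dx.
\end{equation*}
The key observation is that $(s_1 - s_0)^k$ and $(\log d(x,A))^k$ carry the matching sign $(-1)^k$, so every term of the Taylor series of $\zeta_d$ at $s_0$, evaluated at $s_1$, is nonnegative. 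Tonelli's theorem then lets me interchange sum and integral:
\begin{equation*}
\zeta_d(s_1) = \int_{A_\varepsilon} d(x,A)^{s_0 - m}\, e^{(s_1 - s_0)\log d(x,A)}\,dx = \int_{A_\varepsilon} d(x,A)^{s_1 - m}\,dx < \infty.
\end{equation*}
But $s_1$ lies strictly to the left of the abscissa of convergence $D_d$, so this integral must diverge, giving the desired contradiction.

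The main obstacle is the Tonelli interchange, which is what makes the preliminary reduction to $\varepsilon < 1$ essential: it is precisely the uniform negativity of $\log d(x,A)$ that secures the common sign of all the Taylor terms, so that nonnegativity (rather than absolute convergence) suffices to exchange sum and integral. A small auxiliary point is the identification of the Taylor sum at $s_0$ with $\zeta_d(s_1)$; this is a direct consequence of the identity theorem applied to the assumed holomorphic extension and the original $\zeta_d$, which agree on the open half-plane $\{\textnormal{Re}(s) > D_d\}$. The final contradiction then rests on the definition of $D_d$ as the abscissa of convergence of the integral in \eqref{eqn:DistanceZetaFunction}, which forces divergence for every real $s_1 < D_d$.
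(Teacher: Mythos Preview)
Your argument is correct. The paper does not give a proof of this corollary (it is stated immediately after Theorem~\ref{thm:DistanceZetaFunctionUpperBoxDimension}, with the underlying results attributed to \cite{LapRaZu}), but the approach you outline---holomorphy via a dominated-convergence/differentiation-under-the-integral argument, and maximality via the classical Landau argument for Dirichlet-type integrals with nonnegative kernel---is precisely the standard route and is what the cited reference uses. Your preliminary reduction to $\varepsilon<1$ (via Remark~\ref{rmk:IndependentOfVarepsilon}) to force $\log d(x,A)<0$ and secure the sign structure needed for Tonelli is exactly the right move, and your observation that divergence at every real $s_1<D_d$ follows from monotonicity of $t\mapsto d^{t}$ for $0<d<1$ completes the contradiction cleanly.
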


\begin{remark}
\label{rmk:LowerBoxCountingDimensionFromDistanceZetaFunction}
We do not know if the value of the \emph{lower} box-counting dimension $\underline{\dim}_BA$ can be computed from the distance zeta function $\zeta_d$.
\end{remark}

It is shown in \cite{LapRaZu} that the distance zeta function represents a natural extension of the geometric zeta function $\zeta_{\calL}$ of a bounded (i.e., summable) fractal string $\calL=(\ell_j)_{j\in\N}$. Indeed, we can identify the string with an ordinary fractal string of the form $\Omega=\cup_{j=1}^\infty I_j$, where $I_j:=(a_{j+1},a_j)$ and $a_j:=\sum_{k\ge j}\ell_k$. Note that $|I_j|=\ell_j$. Defining $A=\{a_j\}_{j=1}^\infty\subset\R$, it is easy to see that $\zeta_d(s)=a(s)\zeta_{\calL}(s)+b(s)$, where $a(s)$ vanishes nowhere and $a(s)$ and $b(s)$ are explicit meromorphic functions in the complex plane with (typically) a pole at the origin. Hence, the zeta functions  $\zeta_\calL$ and $\zeta_d$ have the same abscissa of convergence. \emph{It follows that if $\calL$ is nontrivial} (\emph{i.e., has infinitely many lengths}), \emph{then}\footnote{If we allow $\calL$ to be trivial, then one should replace $D_d(A)$ and $D_\calL$ with $\max\{D_d(A),0\}$ and $\max\{D_\calL,0\}$, respectively, in Equation \eqref{eqn:DistanceDimensionEquivalence}.}
\begin{align}
\label{eqn:DistanceDimensionEquivalence}
D_d(A)=D_\calL=\overline{\dim}_BA=\overline{\dim}_B(\partial\Omega).
\end{align}

\subsection{Minkowski content and residue of the distance zeta function}
\label{sec:MinkowskiContentAndResidueOfTheDistanceZetaFunction}

A remarkable property of the distance zeta function is that its residue at $s=D_d$ is closely related to the $D_d$-dimensional Minkowski content of $A$; see \cite{LapRaZu}.

\begin{theorem}
\label{thm:DistanceZetaFunctionAndMinkowskiContent}
Let $A$ be a nonempty bounded set in $\R^m$.
Assuming that the distance zeta function can be meromorphically extended to a neighborhood of $s=D_d$ and $D_d<m$, then for its residue at $s=D_d$ we have that
\begin{align}
\label{inequality}
(m-D_d){\mink}_*^{D_d}\le \operatorname{res}(\zeta_d(s);D_d)\le(m-D_d){\mink}^{*D_d}.
\end{align}
If, in addition, $A$ is Minkowski measurable, it then follows that
\begin{align}
\label{eqn:MinkowskiContentAsResidue}
\operatorname{res}(\zeta_d(s);D_d)=(m-D_d){\mink}^{D_d}.
\end{align}
\end{theorem}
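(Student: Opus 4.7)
The plan is to use integration by parts (equivalently, a layer-cake argument and Fubini) to rewrite $\zeta_d$ as a Mellin-type integral against the tube function $t\mapsto \vol_m(A_t)$, and then to extract the residue at $s=D_d$ by an Abelian-type argument governed by the upper and lower $D_d$-dimensional Minkowski contents of $A$.

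First I would establish the identity
\begin{equation*}
\zeta_d(s) = \varepsilon^{s-m}\vol_m(A_\varepsilon) + (m-s)\int_0^\varepsilon t^{s-m-1}\vol_m(A_t)\,dt,
\end{equation*}
valid for $\textnormal{Re}(s)>D_d$, by applying the elementary pointwise identity $d(x,A)^{s-m} = \varepsilon^{s-m} + (m-s)\int_{d(x,A)}^\varepsilon u^{s-m-1}\,du$ (legitimate because $D_d<m$ by hypothesis) inside $\int_{A_\varepsilon} d(x,A)^{s-m}\,dx$ and then swapping the order of integration. The boundary term $\varepsilon^{s-m}\vol_m(A_\varepsilon)$ is entire in $s$, so it makes no contribution to the residue at $s=D_d$.

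Next I would factor $\vol_m(A_t) = t^{m-D_d}\phi(t)$, so that by the very definition of the upper and lower $D_d$-dimensional Minkowski contents one has
\begin{equation*}
\liminf_{t\to 0^+}\phi(t) = \mink_*^{D_d}, \qquad \limsup_{t\to 0^+}\phi(t) = \mink^{*D_d}.
\end{equation*}
The remaining integral becomes $\int_0^\varepsilon t^{s-D_d-1}\phi(t)\,dt$; its only singularity in $s$ near $D_d$ is produced by the behavior of $\phi$ at $0$. Indeed, for any $\delta\in(0,\varepsilon)$, the tail $\int_\delta^\varepsilon t^{s-D_d-1}\phi(t)\,dt$ is entire in $s$, while for real $s>D_d$ the piece $\int_0^\delta t^{s-D_d-1}\phi(t)\,dt$ is squeezed between $(\mink_*^{D_d}-\eta)\delta^{s-D_d}/(s-D_d)$ and $(\mink^{*D_d}+\eta)\delta^{s-D_d}/(s-D_d)$ once $\delta$ is small enough in terms of $\eta>0$. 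Multiplying by $(s-D_d)$, letting $s\searrow D_d$ along the reals, and then $\eta\to 0^+$ yields
\begin{align*}
\mink_*^{D_d} &\leq \liminf_{s\to D_d^+}(s-D_d)\int_0^\varepsilon t^{s-D_d-1}\phi(t)\,dt \\
&\leq \limsup_{s\to D_d^+}(s-D_d)\int_0^\varepsilon t^{s-D_d-1}\phi(t)\,dt \leq \mink^{*D_d}.
\end{align*}
Under the standing meromorphic extension hypothesis, the complex limit $\lim_{s\to D_d}(s-D_d)\zeta_d(s)$ exists and equals $\operatorname{res}(\zeta_d(s);D_d)$, hence it coincides with the real right-hand limit above; combining this with the factor $(m-s)\to (m-D_d)$ inherited from the integration-by-parts formula delivers the chain of inequalities \eqref{inequality}. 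If $A$ is Minkowski measurable, the sandwich collapses, since $\mink_*^{D_d}=\mink^{*D_d}=\mink^{D_d}$, giving \eqref{eqn:MinkowskiContentAsResidue}.

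The hard part will be the Abelian-type step, namely making sure that the real right-hand limit at $s=D_d$ truly captures the full complex residue; this implicitly uses the fact that, as soon as $\mink^{*D_d}<\infty$, the pole at $D_d$ must be simple, for otherwise $(s-D_d)\zeta_d(s)$ would blow up along the real axis and violate the upper sandwich bound. A final minor point to verify is that the whole argument is independent of the choice of $\varepsilon>0$, which is ensured by Remark \ref{rmk:IndependentOfVarepsilon}.
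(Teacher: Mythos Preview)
Your proposal is correct and follows essentially the approach indicated in the paper. The paper does not give a self-contained proof of this theorem (it is deferred to \cite{LapRaZu}), but the integration-by-parts identity you derive first is exactly Theorem~\ref{thm:DistanceAndTubeZetaFunctions}, which the paper states is proved via integration by parts and which relates $\zeta_d$ to the tube zeta function $\tilde\zeta_A(s)=\int_0^\varepsilon t^{s-m-1}|A_t|\,dt$; your Abelian step then corresponds to what the paper packages as Corollary~\ref{cor_res}. Your observation that finiteness of $\mink^{*D_d}$ forces the pole at $D_d$ to be simple (so that the real one-sided limit genuinely computes the residue) is the one point not made explicit in the paper's sketch, and it is handled correctly.
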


The last part of this result (namely, Equation \eqref{eqn:MinkowskiContentAsResidue}) generalizes the corresponding one obtained in \cite{LapvF6} in the context of ordinary fractal strings to the case of bounded sets in Euclidean spaces; see \cite{LapRaZu}.

\begin{example}
It can be shown that, in the case of the Cantor set $C$, we have strict inequalities in Equation \eqref{inequality}. Indeed, in this case $m=1$, $D_d=\log_{3}2$, and
$$
\operatorname{res}(\zeta_d(s);D_d)=\frac1{\log2}\,2^{-D_d},
$$
whereas the values of the lower and upper $D_d$-dimensional Minkowski contents have been computed in \cite[Theorem 2.16]{LapvF6} (as well as earlier in \cite{LapPo1}):
$$
{\mink}_*^{D_d}(A)=\frac1{D_d}\left(\frac{2D_d}{1-D_d} \right)^{1-D_d} ,\quad
{\mink}^{*D_d}(A)=2^{2-D_d}.
$$
This is a special case of an example in~\cite{LapRaZu} dealing with generalized Cantor sets. Generalized Cantor strings, which are a certain type of generalized fractal strings, and their (geometric and spectral) oscillations are studied in detail in \cite[Ch.~10]{LapvF6}.
\end{example}

\begin{remark}
An open problem is to determine whether there exists a set $A$ such that one of the inequalities in Equation \eqref{inequality} is strict and the other is an equality.
\end{remark}

\begin{remark}
\label{rmk:Resman}
According to a recent result due to Maja Resman in \cite{Res}, we know that if $A$ is Minkowski measurable, then the value of the \emph{normalized} $D_d$-\emph{dimensional Minkowski content} of a bounded set $A \subset \R^m$,\footnote{This choice of normalized Minkowski content is well known in the literature; see, e.g., \cite{Fdrr}.} defined by
\begin{align}
\label{eqn:DdMinkowskiContent}
\frac{{\mink}^{D_d}(A)}{\omega(m-D_d)},
\end{align}
is independent of the ambient dimension $m$. Here, for $t>0$, we let 
\[
\omega(t):=2\pi^{t/2}t^{-1}\Gamma(t/2)^{-1},
\]
where $\Gamma$ is the classic Gamma function. For any positive integer $k$, $\omega(k)$ is equal to the $k$-dimensional Lebesgue measure of the unit ball in $\mathbb{R}^k$. In other words, the value given in Equation \eqref{eqn:DdMinkowskiContent} is intrinsic to the set $A$ and hence independent of the embedding of $A$ in $\R^k$. Therefore, we may ask if the value of the normalized residue,
\begin{align*}
\frac{\mbox{res}(\zeta_d(s);D_d)}{(m-D_d)\,\omega(m-D_d)},
\end{align*}
is also independent of $m$. Combining the preceding two results (namely, Theorems \ref{thm:DistanceZetaFunctionUpperBoxDimension} and \ref{thm:DistanceZetaFunctionAndMinkowskiContent}), we immediately deduce that if $A$ is Minkowski measurable, then the answer is positive.
\end{remark}

\subsection{Tube zeta function}
\label{sec:TubeZetaFunction}

Given $\varepsilon>0$, it is also natural to introduce the following zeta function of a bounded set $A$ in $\R^m$, involving the tube around $A$ (which we view as the mapping $t\mapsto |A_t|$, for $0\leq t\leq\varepsilon$):\footnote{In the sequel, $|A_t|=|A_t|_m$ denotes the $m$-dimensional volume (Lebesgue measure) of $A_t$, the $t$-neighborhood of $A\subset\R^m$. In our earlier notation, we have $|A_t|=\vol_m(A_t)$.}
\begin{align}
\label{tube_zeta}
\tilde\zeta_A(s)	&=\int_0^\varepsilon t^{s-m-1}|A_t|\,dt,
\end{align}
for $\textnormal{Re}(s)$ sufficiently large, where $\varepsilon$ is a fixed positive number. Hence, $\tilde\zeta_A$ is called the \emph{tube zeta function} of $A$. Its abscissa of convergence is equal to $\overline{\dim}_{B}A$, which follows immediately from Theorems \ref{thm:DistanceZetaFunctionUpperBoxDimension} above and \ref{thm:DistanceAndTubeZetaFunctions} below. Tube zeta functions are closely related to distance zeta functions, as shown by the following result; see \cite{LapRaZu}.

\begin{theorem}
\label{thm:DistanceAndTubeZetaFunctions}
If $A\subset\R^m$ and $\operatorname{Re}(s)>\overline{\dim}_BA$, then for any $\varepsilon>0$,
\begin{equation}
\label{identity}
\zeta_d(s)	=\varepsilon^{s-m}|A_\varepsilon|+(m-s)\tilde\zeta_A(s).
\end{equation}
\end{theorem}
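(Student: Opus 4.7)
My plan is to establish the identity by rewriting the integrand of $\zeta_d(s)$ using a one-variable identity, then applying Fubini's theorem to convert the resulting double integral into $\tilde\zeta_A(s)$. The underlying idea is the ``layer cake'' viewpoint: since $A_t = \{x\in\R^m:d(x,A)<t\}$ by definition, a radial integrand $\phi(d(x,A))$ can be recast as an integral over the radii $t$, with $|A_t|$ playing the role of the distribution function.

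Concretely, for any $x \in A_\varepsilon$ we have $0 \le d(x,A) < \varepsilon$, so the fundamental theorem of calculus (applied to $t\mapsto t^{s-m}$ on $[d(x,A),\varepsilon]$) yields
\begin{equation*}
d(x,A)^{s-m} \;=\; \varepsilon^{s-m} + (m-s)\int_{d(x,A)}^{\varepsilon} t^{s-m-1}\,dt.
\end{equation*}
Integrating this identity over $A_\varepsilon$ produces a boundary term $\varepsilon^{s-m}|A_\varepsilon|$ together with a double integral, and Fubini's theorem (swap the order of $t$- and $x$-integration) turns the double integral into
\begin{equation*}
\int_0^\varepsilon t^{s-m-1}\,\vol_m\{x\in A_\varepsilon : d(x,A)<t\}\,dt \;=\; \int_0^\varepsilon t^{s-m-1}|A_t|\,dt \;=\; \tilde\zeta_A(s),
\end{equation*}
where the first equality uses $\{x\in A_\varepsilon:d(x,A)<t\} = A_t$ for $0<t\le\varepsilon$. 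This yields exactly \eqref{identity}.

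The one genuine obstacle is justifying Fubini, which requires absolute integrability of $(x,t)\mapsto |m-s|\,t^{\operatorname{Re}(s)-m-1}\mathbf 1_{\{d(x,A)<t\le\varepsilon\}}$ on $A_\varepsilon\times(0,\varepsilon)$. After integrating out $x$, this is $|m-s|\int_0^\varepsilon t^{\operatorname{Re}(s)-m-1}|A_t|\,dt$. I would use the hypothesis $\operatorname{Re}(s)>\overline{\dim}_B A$ combined with the equivalent characterization of $\overline{\dim}_B A$ via tubular volumes (Remark \ref{rmk:EquivalentUpperMinkoskiDimension}): pick $\alpha$ with $\overline{\dim}_B A < \alpha < \operatorname{Re}(s)$, so that $|A_t| = O(t^{m-\alpha})$ as $t\to 0^+$, and hence $t^{\operatorname{Re}(s)-m-1}|A_t| = O(t^{\operatorname{Re}(s)-\alpha-1})$, which is integrable near zero because $\operatorname{Re}(s)-\alpha>0$ (and is clearly bounded away from zero). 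This simultaneously verifies that $\tilde\zeta_A(s)$ converges absolutely in the half-plane $\operatorname{Re}(s)>\overline{\dim}_B A$, and, together with Theorem \ref{thm:DistanceZetaFunctionUpperBoxDimension}, confirms that all three terms in \eqref{identity} are finite on this half-plane.

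A minor bookkeeping point: one must check that the pointwise identity for $d(x,A)^{s-m}$ is valid for \emph{every} $x\in A_\varepsilon$, including the set $A\cap A_\varepsilon$ where $d(x,A)=0$. However, $A$ has $m$-dimensional Lebesgue measure zero unless $\overline{\dim}_B A = m$, and in any case the singular set is absorbed into the integral since $\zeta_d(s)$ converges absolutely; one may simply replace $A_\varepsilon$ with $A_\varepsilon\setminus A$ throughout the calculation without altering any of the integrals.
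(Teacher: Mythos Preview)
Your argument is correct and is essentially the approach the paper points to: the paper invokes an integration-by-parts identity from \cite{Zu0} for real $s$ and then extends by analytic continuation, and your Fubini/layer-cake computation is exactly how one justifies that identity, carried out directly for complex $s$ (so the separate analytic-continuation step is not needed). One small bookkeeping correction: the zero set of $d(\cdot,A)$ is $\overline A$ rather than $A$, but your measure-zero argument applies to $\overline A$ verbatim since $\overline{\dim}_B\overline A=\overline{\dim}_BA$.
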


The proof of this result when $s\in\R$ follows, for example, from \cite[Theorem~2.9(a)]{Zu0}, and the proof of the cited theorem is based on integration by parts. By analytic continuation and in light of Corollary \ref{cor:holomorphic}, the identity \eqref{identity} is then extended to complex values of $s$ such that $\textnormal{Re}(s)>\overline{\dim}_BA$.

It follows from \eqref{identity} that the abscissae of convergence of the zeta functions $\zeta_d$ and $\tilde\zeta_A$ are the same, and therefore also coincide with $\overline{\dim}_BA$. This identity, \eqref{identity}, extends to the  $m$-dimensional case \cite[identity (13.129) in Lemma 13.110, p.~442]{LapRaZu}, which has been formulated in the context of $p$-adic fractal strings and ordinary (real) fractal strings. (See also \cite{LapLuvF1}.) Using \eqref{identity} and Theorem \ref{thm:DistanceZetaFunctionAndMinkowskiContent}, it is easy to derive the following consequence; see \cite{LapRaZu}.

\begin{corollary} 
\label{cor_res}
If $D=\dim_BA$ exists, $D<m$, and there exists a meromorphic extension of $\tilde\zeta_A(s)$ to a neighborhood of $s=D$, then
\begin{align*}
&{\mink}_*^D \le\operatorname{res}(\tilde\zeta_A(s);D)\le {\mink}^{*D}.
\end{align*}
In particular, if $A$ is Minkowski measurable, then 
\begin{align*}
\operatorname{res}(\tilde\zeta_A(s);D)	&={\mink}^D.
\end{align*}
\end{corollary}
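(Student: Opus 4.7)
The plan is to reduce Corollary~\ref{cor_res} to the already-established Theorem~\ref{thm:DistanceZetaFunctionAndMinkowskiContent} via the key identity \eqref{identity} in Theorem~\ref{thm:DistanceAndTubeZetaFunctions}. First, since $D = \dim_B A$ is assumed to exist, the upper and lower box-counting dimensions of $A$ coincide with $D$; in particular $D = \overline{\dim}_B A$, which by Theorem~\ref{thm:DistanceZetaFunctionUpperBoxDimension} equals the abscissa of convergence $D_d$ of the distance zeta function. Hence Theorem~\ref{thm:DistanceZetaFunctionAndMinkowskiContent} is applicable at the point $s = D$ provided we can verify its hypothesis that $\zeta_d$ extends meromorphically to a neighborhood of $s = D$.

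Next I would exploit the identity
\begin{equation*}
\zeta_d(s) = \varepsilon^{s-m}|A_\varepsilon| + (m-s)\tilde\zeta_A(s)
\end{equation*}
from Theorem~\ref{thm:DistanceAndTubeZetaFunctions}. The function $s \mapsto \varepsilon^{s-m}|A_\varepsilon|$ is entire (it is a constant multiple of an exponential in $s$), and $s \mapsto (m-s)$ is a polynomial. Hence, under the standing assumption that $\tilde\zeta_A$ admits a meromorphic extension to a neighborhood of $s = D$, the identity shows that $\zeta_d$ also extends meromorphically to a neighborhood of $s = D$, so Theorem~\ref{thm:DistanceZetaFunctionAndMinkowskiContent} indeed applies. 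Taking residues on both sides at $s = D$, the entire term drops out, and I obtain
\begin{equation*}
\operatorname{res}(\zeta_d(s); D) = (m - D)\,\operatorname{res}(\tilde\zeta_A(s); D),
\end{equation*}
using that $s \mapsto (m - s)$ is holomorphic and nonzero at $s = D$ (here the hypothesis $D < m$ is essential so that the factor $(m - D)$ is not only nonzero but can be divided out).

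Substituting this identity into the two-sided bound of Theorem~\ref{thm:DistanceZetaFunctionAndMinkowskiContent}, namely
\begin{equation*}
(m - D)\,\mink_*^{D} \le \operatorname{res}(\zeta_d(s); D) \le (m - D)\,\mink^{*D},
\end{equation*}
and dividing through by the positive quantity $(m - D)$, yields the desired inequalities
\begin{equation*}
\mink_*^{D} \le \operatorname{res}(\tilde\zeta_A(s); D) \le \mink^{*D}.
\end{equation*}
The ``in particular'' clause then follows at once: if $A$ is Minkowski measurable, the upper and lower Minkowski contents agree with the common value $\mink^D$, and the sandwich forces $\operatorname{res}(\tilde\zeta_A(s); D) = \mink^D$.

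I expect the argument to be essentially bookkeeping once the identity \eqref{identity} is in hand; the only conceptual step to verify with care is that the meromorphic extendability of $\tilde\zeta_A$ near $D$ transfers to $\zeta_d$ (and, implicitly, that the residue computation commutes with the algebraic rearrangement), which is immediate because the two zeta functions differ by a holomorphic factor and an entire summand on the relevant neighborhood. No delicate analytic estimate is needed beyond what is already packaged in Theorem~\ref{thm:DistanceZetaFunctionAndMinkowskiContent}.
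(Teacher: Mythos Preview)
Your proposal is correct and follows exactly the approach indicated in the paper, which states that the corollary is derived ``using \eqref{identity} and Theorem~\ref{thm:DistanceZetaFunctionAndMinkowskiContent}.'' You have simply made explicit the bookkeeping the paper leaves implicit: transferring the meromorphic extension from $\tilde\zeta_A$ to $\zeta_d$ via the identity, computing $\operatorname{res}(\zeta_d;D)=(m-D)\operatorname{res}(\tilde\zeta_A;D)$, and dividing the inequality of Theorem~\ref{thm:DistanceZetaFunctionAndMinkowskiContent} by $m-D>0$.
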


As we can see, the tube zeta function is ideally suited to study the Minkowski content.

In Corollary~\ref{cor_res}, we have assumed the existence of a meromorphic extension of the tube zeta function to a neighborhood of $s=D$. This condition can be ensured under fairly general conditions. We provide a result from~\cite{LapRaZu} dealing with the case of Minkowski measurable sets. Non-Minkowski measurable sets can be treated as well; see \cite{LapRaZu}, along with Remark \ref{rmk:NonMinkowskiAnalogs} and Theorem \ref{thm:NonMinkowskiMeasurableCase}. 

\begin{theorem}[Minkowski measurable case]
\label{mer}
Let $A$ be a subset of $\R^m$ such that there exist  $\alpha>0$, $M\in(0,\infty)$ and $D\in[0,m]$,  satisfying
\begin{align}
\label{eqn:MinkowskiMeasurableCase}
|A_t| &= t^{m-D}\left(M+O(t^\alpha)\right)\quad\mbox{as $t\to 0^+$.}
\end{align}
Then $A$ is Minkowski measurable, $\operatorname{dim}_BA=D$, and $\mink^D(A)=M$. Furthermore, the tube zeta function $\tilde{\zeta}_A(s)$ has for abscissa of convergence  $D(\tilde\zeta_A)=D$, and it admits a \emph{(}necessarily unique\emph{)} meromorphic continuation {\rm({\it at least})} to the half-plane $\{\operatorname{Re}(s)>D-\alpha\}$. The only pole in this half-plane is $s=D$; it is simple, and $\operatorname{res}(\tilde\zeta_A;D)=M$.
\end{theorem}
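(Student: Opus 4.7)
The strategy is to split the defining integral of the tube zeta function into a closed-form ``main term'' that carries the pole, plus a smaller ``remainder term'' that is holomorphic in a larger half-plane. Everything else is extracted by inspection.

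First, the hypothesis $|A_t| = t^{m-D}(M+O(t^\alpha))$ as $t \to 0^+$ immediately implies $|A_t|/t^{m-D} \to M \in (0,\infty)$. By Definition~\ref{def:MinkowskiMeasurability}, this shows that $A$ is Minkowski measurable with $\dim_M A = D$ and $\mink^D(A) = M$; the identity $\dim_B A = \dim_M A$ (which holds in general on $\R^m$ when either exists) then yields $\dim_B A = D$.

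For the meromorphic extension, fix $\varepsilon>0$ small enough that the asymptotic expansion of $|A_t|$ is valid on $(0,\varepsilon]$, and write $|A_t| = M t^{m-D} + R(t)$, where $R(t)=O(t^{m-D+\alpha})$ as $t\to 0^+$. Then, for $\operatorname{Re}(s)$ large enough,
\begin{equation*}
\tilde\zeta_A(s) \;=\; M\int_0^\varepsilon t^{s-D-1}\,dt \;+\; \int_0^\varepsilon t^{s-m-1}\,R(t)\,dt
\;=\; \frac{M\,\varepsilon^{s-D}}{s-D} \;+\; I(s).
\end{equation*}
The first term is explicitly meromorphic on all of $\C$, with a single simple pole at $s=D$ whose residue is $M$. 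For the remainder $I(s)$, the bound on $R(t)$ gives $|t^{s-m-1}R(t)| \le C\, t^{\operatorname{Re}(s) - D + \alpha - 1}$ on $(0,\varepsilon]$, so the integral converges absolutely whenever $\operatorname{Re}(s) > D - \alpha$. To see that $I(s)$ is holomorphic on this half-plane, I would apply the standard Morera/Fubini argument: on any closed triangle $T \subset \{\operatorname{Re}(s) > D-\alpha\}$ one has a uniform local majorant, so $\oint_{\partial T} I(s)\,ds = \int_0^\varepsilon R(t)\,\bigl(\oint_{\partial T} t^{s-m-1}\,ds\bigr)\,dt = 0$, and Morera's theorem applies.

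Combining these two pieces produces a meromorphic extension of $\tilde\zeta_A$ to $\{\operatorname{Re}(s) > D-\alpha\}$ whose only pole there is the simple pole at $s=D$ with residue $M$; uniqueness of the extension is automatic because meromorphic continuations agree on their common domain of holomorphy. It remains to identify $D$ as the abscissa of convergence of $\tilde\zeta_A$. Convergence for $\operatorname{Re}(s)>D$ follows from the decomposition above. Conversely, for real $s \le D$ we have $t^{s-m-1}|A_t| \sim M t^{s-D-1}$ as $t\to 0^+$, and $\int_0 t^{s-D-1}\,dt$ diverges, so the original integral diverges; hence $D(\tilde\zeta_A)=D$.

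The only non-bookkeeping step is verifying holomorphy of $I(s)$, and this is entirely routine once the pointwise bound on $R(t)$ is in hand; I do not anticipate any genuine obstacle. The core of the argument is simply that the Mellin-type integral $\int_0^\varepsilon t^{s-D-1}\,dt$ is responsible for the pole structure, while the Minkowski-measurability error $R(t)$ contributes a term holomorphic on a strip of width $\alpha$ to the left of $\operatorname{Re}(s)=D$.
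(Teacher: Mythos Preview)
The paper does not actually prove this theorem; it is quoted without proof from the forthcoming work \cite{LapRaZu}. Your argument is correct and is precisely the natural one for a result of this type: split $|A_t|=Mt^{m-D}+R(t)$, evaluate the main Mellin integral $\int_0^\varepsilon t^{s-D-1}\,dt$ explicitly to exhibit the simple pole at $s=D$ with residue $M$, and use the $O(t^{m-D+\alpha})$ bound on $R(t)$ together with Morera/Fubini to see that the remainder $I(s)$ is holomorphic on $\{\operatorname{Re}(s)>D-\alpha\}$.

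One small point worth tightening: the $\varepsilon$ in the definition \eqref{tube_zeta} of $\tilde\zeta_A$ is fixed in advance, whereas you take it ``small enough'' for the asymptotic bound to hold globally on $(0,\varepsilon]$. This is harmless, since (just as in Remark~\ref{rmk:IndependentOfVarepsilon} for $\zeta_d$) replacing $\varepsilon$ by a smaller $\varepsilon'$ changes $\tilde\zeta_A$ by the integral $\int_{\varepsilon'}^{\varepsilon} t^{s-m-1}|A_t|\,dt$, which is an entire function of $s$; so the pole structure, the residue, and the abscissa of convergence are unaffected. You might say this explicitly.
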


An analogous result holds also for the distance function of $A$. Theorem~\ref{mer} shows that  the relevant information concerning the possible existence of a nontrivial meromorphic extension of the tube (or the distance) zeta function associated with~$A$, is encoded in the second term of the asymptotic expansion (as $t \to 0^+$) of the tube function $t\mapsto|A_t|$. Various extensions of this result and examples can be found in~\cite{LapRaZu}, as we next discuss. For example, in Theorem \ref{mer}, the conclusion for $\zeta_d$ would be the same except for the fact that $\operatorname{res}(\zeta_d(s);D)=(m-D)M$ (and $D=D_d(A)$, in our earlier notation from Equation \eqref{eqn:DistanceDimensionEquivalence}, for instance). 

\begin{remark}
\label{rmk:NonMinkowskiAnalogs}
In \cite{LapRaZu}, one can find suitable analogs of Theorem \ref{mer} for non-Minkowski measurable sets, both in the case where the underlying scaling behavior of $|A_t|$ is log-periodic (as for the Cantor set or the Sierpinski gasket and carpet, for example) and in more general, non-periodic situations. Furthermore, in that case, the (visible) complex dimensions of $A$ (i.e., the poles of $\tilde{\zeta}_A$ in the half-plane $\{\textnormal{Re}(s)>D-\alpha\}$, with $\alpha>0$, are also determined, and shown to consist of a vertical infinite arithmetic progression located on the `critical line' $\{\textnormal{Re}(s)=D\}$ (much as for the Cantor string and other lattice self-similar strings); see Theorem \ref{thm:NonMinkowskiMeasurableCase} below for a typical sample theorem. In \cite{LapRaZu}, the case of so-called `quasi-periodic' fractals is also considered in this and related contexts. In the latter situation, one appeals, in particular, to well-known (and rather sophisticated) number theoretic theorems asserting the transcendentality (and hence, the irrationality) of certain expressions. It is noteworthy that (in light of Theorem \ref{thm:DistanceAndTubeZetaFunctions}) all of these results concerning $\tilde{\zeta}_A$ have precise counterparts for the distance zeta function $\zeta_d$; see \cite{LapRaZu}.  
\end{remark}

In order to state more precisely a sample theorem in the non-Minkowski measurable case, we introduce the following hypothesis (LP) (log-periodic) and notation:

\begin{itemize} 
\item[(LP)] Let $A\subset\R^m$ be a bounded set such that there exists $D\geq 0, \alpha>0$ and a periodic function $G:\R\to [0,\infty)$ with minimal period $T>0$, satisfying
\end{itemize}
\begin{align}
\label{eqn:NonConstantPeriodic}
|A_t| &= t^{m-D}(G(\log{t^{-1}})+O(t^\alpha)), \quad \textnormal{as } t\to 0^+.
\end{align}
In the sequel, let
\begin{align}
\label{eqn:NonConstantPeriodicIntegral}
\hat{G}_0(t) &= \int_0^T e^{-2\pi it\tau}G(\tau)\,d\tau, \quad \textnormal{for } t\in\R.
\end{align}
Note that $G$ is nonconstant since we have assumed it to have a positive minimal period, and that $\hat{G}_0$ is (essentially) the Fourier transform of the cut-off function $G_0$ of $G$ to $[0,T]$.

\begin{theorem}[Non-Minkowski measurable case]
\label{thm:NonMinkowskiMeasurableCase}
Assume that the bounded set $A\subset\R^m$ satisfies the log-periodicity property \emph{(LP)} above.  Then $\dim_BA$ exists, $\dim_BA=D$, $G$ is continuous and\footnote{It follows that the restriction of $G$ to $[0,T]$ has for range $[\calM^D_*(A),\calM^{*D}(A)]$.}
\begin{align}
\label{eqn:UpperLowerMinkowskiContentNonMinkCase}
\calM^D_*(A)&=\min{G}, \quad \calM^{*D}(A)=\max{G}.
\end{align}
Hence, $A$ is not Minkowski measurable and \emph{(}provided $G>0$\emph{)} is non-degenerate, i.e., $0<\calM^D_*(A)<\calM^{*D}(A)<\infty$.

Furthermore, $\tilde{\zeta}_A$ admits a \emph{(}necessarily unique\emph{)} meromorphic continuation to the half-plane $\{\textnormal{Re}(s)>D-\alpha\}$ with only poles 
\begin{align}
\label{eqn:PolesNonMinkowskiCase}
\mathcal{P}(\tilde{\zeta}_A)	&=\left\lbrace s_k:=D+\frac{2\pi}{T}ik: \hat{G}_0\left(\frac{k}{T}\right)\neq 0, k\in\Z\right\rbrace,
\end{align}
where $\hat{G}_0$ is given by \eqref{eqn:NonConstantPeriodicIntegral}. These poles are all simple and\footnote{In addition, $|\textnormal{res}(\tilde{\zeta}_A;s_k)|\leq\frac{1}{T}\int_0^TG(\tau)\,d\tau$ for all $k\in\Z$, and $\textnormal{res}(\tilde{\zeta}_A;s_k)\to 0$ as $|k|\to\infty$.}
\begin{align}
\label{eqn:ResiduesNonMinkCase}
\textnormal{res}(\tilde{\zeta}_A;s_k)	&= \frac{1}{T}\hat{G}_0\left(\frac{k}{T}\right), \quad \textnormal{for all }k\in\Z.
\end{align}
In particular, for $k=0$, we have that 
\begin{align}
\label{eqn:ResidueAsIntegralNonMinkCase}
\textnormal{res}(\tilde{\zeta}_A;D)	&=\frac{1}{T}\int_0^TG(\tau)\,d\tau.
\end{align}

Finally, $A$ admits an average Minkowski content $\calM^D_{av}$ \emph{(}which lies in $(0,\infty)$ if $G>0$\emph{)} that is also given by \eqref{eqn:ResidueAsIntegralNonMinkCase}.\footnote{This average Minkowski content is defined exactly as in \cite[Definition~8.29,\S 8.4.3]{LapvF6}, except for the fact that $1-D$ is replaced with $m-D$.}   
\end{theorem}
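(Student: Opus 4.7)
The plan is to derive every assertion from hypothesis (LP) by means of the Mellin-type substitution $t = e^{-\tau}$, which converts $T$-periodicity of $G$ into translation invariance and reduces the problem to the analysis of an elementary geometric summation. First I would establish the statements about $\dim_B A$, the continuity of $G$, and the formulas in \eqref{eqn:UpperLowerMinkowskiContentNonMinkCase}. Dividing \eqref{eqn:NonConstantPeriodic} by $t^{m-D}$ yields $t^{D-m}|A_t| = G(\log t^{-1}) + O(t^\alpha)$; since $t \mapsto |A_t|$ is continuous, $G$ is exhibited on $[0,T]$ as the uniform limit of the continuous functions $\tau \mapsto e^{-(\tau+kT)(D-m)}|A_{e^{-(\tau+kT)}}|$ as $k \to \infty$, and is therefore continuous. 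Taking $\limsup$ and $\liminf$ as $t \to 0^+$ then gives $\mink^{*D}(A) = \max G$ and $\mink^{D}_*(A) = \min G$ (Definitions \ref{def:MinkowskiContent0} and \ref{def:MinkowskiMeasurability}); by minimality of the period $T$, $G$ is nonconstant, so $A$ is not Minkowski measurable, and the non-degeneracy statement under $G > 0$ is immediate.

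Next I would compute $\tilde\zeta_A$ and extract its pole structure. Substituting \eqref{eqn:NonConstantPeriodic} into \eqref{tube_zeta} and splitting off the error produces
\begin{equation*}
\tilde\zeta_A(s) = \int_0^\varepsilon t^{s-D-1} G(\log t^{-1})\,dt + E(s),
\end{equation*}
where $E(s)$ is holomorphic on $\{\operatorname{Re}(s) > D-\alpha\}$ by a direct domination argument. For the main term, the substitution $\tau = \log t^{-1}$ gives $\int_{\tau_0}^\infty e^{-(s-D)\tau} G(\tau)\,d\tau$ with $\tau_0 := \log \varepsilon^{-1}$. Writing this as $\int_0^\infty - \int_0^{\tau_0}$, the finite piece is entire in $s$, while for $\operatorname{Re}(s) > D$ the telescoping sum over full periods of $G$ yields the closed form
\begin{equation*}
\int_0^\infty e^{-(s-D)\tau} G(\tau)\,d\tau = \frac{1}{1 - e^{-T(s-D)}} \int_0^T e^{-(s-D)u} G(u)\,du.
\end{equation*}
The second factor is entire in $s$ and equals $\hat G_0(k/T)$ at $s = s_k := D + 2\pi i k / T$, while $(1 - e^{-T(s-D)})^{-1}$ is meromorphic on $\C$ with simple poles of residue $1/T$ exactly at those $s_k$. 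Together these yield the meromorphic continuation of $\tilde\zeta_A$ to $\{\operatorname{Re}(s) > D - \alpha\}$, the pole set \eqref{eqn:PolesNonMinkowskiCase}, and the residue formulas \eqref{eqn:ResiduesNonMinkCase} and \eqref{eqn:ResidueAsIntegralNonMinkCase}. The abscissa of convergence equals $D = \overline{\dim}_B A$ either directly or via Theorems \ref{thm:DistanceZetaFunctionUpperBoxDimension} and \ref{thm:DistanceAndTubeZetaFunctions}.

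For the average Minkowski content, I would unwind the definition as in \cite[Def.~8.29]{LapvF6} (with $1-D$ replaced by $m-D$) and apply again the substitution $\tau = \log \varepsilon^{-1}$; the log-periodic expansion then reduces the claim to the ergodic-type averaging
\begin{equation*}
\lim_{T' \to \infty} \frac{1}{T'} \int_0^{T'} G(\tau)\,d\tau = \frac{1}{T} \int_0^T G(\tau)\,d\tau,
\end{equation*}
which is a standard consequence of continuity and $T$-periodicity of $G$. The right-hand side matches $\operatorname{res}(\tilde\zeta_A; D) = \hat G_0(0)/T$. The chief obstacle, in my view, is producing a bona fide meromorphic extension of $\tilde\zeta_A$ to all of $\{\operatorname{Re}(s) > D-\alpha\}$ together with a correct identification of which $s_k$ are genuine poles; I would resist the temptation to expand $G$ as a Fourier series (which only yields a formal identity on $\{\operatorname{Re}(s) > D\}$ and forces delicate convergence questions) and rely instead on the closed-form geometric summation above, from which meromorphy on all of $\C$ and the criterion $\hat G_0(k/T) \neq 0$ for an actual pole both fall out by elementary complex analysis.
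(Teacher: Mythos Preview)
The paper does not prove this theorem; it is quoted as a result from \cite{LapRaZu}, so there is no in-paper argument to compare against. Your approach is correct and is essentially the standard one: the substitution $\tau=\log t^{-1}$ together with the geometric summation over full periods of $G$ yields the closed form
\[
\int_0^\infty e^{-(s-D)\tau}G(\tau)\,d\tau=\frac{1}{1-e^{-T(s-D)}}\int_0^T e^{-(s-D)u}G(u)\,du,
\]
from which the meromorphic continuation to $\{\operatorname{Re}(s)>D-\alpha\}$, the pole set \eqref{eqn:PolesNonMinkowskiCase}, and the residue formulas \eqref{eqn:ResiduesNonMinkCase}--\eqref{eqn:ResidueAsIntegralNonMinkCase} all drop out immediately. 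Your treatment of the $O(t^\alpha)$ error term and of the average Minkowski content via periodic averaging is also sound, and the footnote claims follow at once from $|\hat G_0(k/T)|\le\int_0^TG$ and the Riemann--Lebesgue lemma.

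Two points deserve one extra sentence each in a finished write-up. First, your continuity argument for $G$ tacitly uses that $t\mapsto|A_t|$ is continuous; this is true for any bounded $A\subset\R^m$ (a classical result of Kneser/Stach\'o type), but it is not completely trivial and a reference would help. Second, the assertion $\dim_BA=D$ should be stated explicitly: from \eqref{eqn:NonConstantPeriodic} one has $|A_t|/t^{m-D}$ bounded above (always) and bounded below away from zero (when $G>0$), so $\overline\dim_MA=\underline\dim_MA=D$ by Definition~\ref{def:MinkowskiDimension0}; you allude to this via Theorems~\ref{thm:DistanceZetaFunctionUpperBoxDimension} and~\ref{thm:DistanceAndTubeZetaFunctions}, but the direct two-line argument is cleaner here.
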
 

\begin{remark}
\label{rmk:CounterpartTheoremForDistanceZetaFunctions}
As was alluded to in Remark \ref{rmk:NonMinkowskiAnalogs}, Theorem \ref{thm:NonMinkowskiMeasurableCase} has a precise counterpart for the distance zeta function $\zeta_d$. In fact, under the same assumption (LP), the same conclusions as in Theorem \ref{thm:NonMinkowskiMeasurableCase} hold, except for the fact that the counterpart of \eqref{eqn:ResiduesNonMinkCase} and \eqref{eqn:ResidueAsIntegralNonMinkCase}, respectively, reads as follows (with $D=D_{d}(A)$, as in Equation \eqref{eqn:DistanceDimensionEquivalence} and Theorems \ref{thm:DistanceZetaFunctionAndMinkowskiContent} and \ref{thm:NonMinkowskiMeasurableCase}):
\begin{align}
\label{eqn:ResiduesNonMinkCaseDistance}
\textnormal{res}(\zeta_d;s_k)	&= \frac{m-s_k}{T}\hat{G}_0\left(\frac{k}{T}\right), \quad \textnormal{for all }k\in\Z,
\end{align}
and 
\begin{align}
\label{eqn:ResidueAsIntegralNonMinkCaseDistance}
\textnormal{res}(\zeta_d;D)	&=(m-D)\frac{1}{T}\int_0^TG(\tau)\,d\tau,
\end{align}
so that $\calM^D_{av}(A)$, the average Minkowski content of $A$, is now given by 
\begin{align}
\label{eqn:AverageMinkowskiContent}
\calM^D_{av}(A)	&=\frac{1}{T}\int_0^TG(\tau)\,d\tau=\frac{1}{m-D}\textnormal{res}(\zeta_d;D).
\end{align}
\end{remark}

\begin{example}
\label{eg:IllustratingNonMinkCase}
Theorem \ref{thm:NonMinkowskiMeasurableCase} (and its counterpart for $\zeta_d$ discussed in Remark \ref{rmk:CounterpartTheoremForDistanceZetaFunctions}) can be illustrated, for instance, by the Cantor set or string (see \cite{LapRaZu} and compare with \cite[\S 1.1.2]{LapvF6}, including Figures 1.4 and 1.5) and the Sierpinski carpet (as well as, similarly, by the Sierpinski gasket). For the Sierpinski carpet $A$, as discussed in \cite{LapRaZu}, we have $D=\log_{3}8$, $\alpha=D-1$, and $T=\log{3}$. Hence, both $\tilde{\zeta}_A$ and $\zeta_d$ have a meromorphic continuation to $\left\lbrace \textnormal{Re}(s)>1\right\rbrace$, with set of (simple) poles (the visible complex dimensions of $A$) given by
\begin{align}
\label{eqn:IllustrationNonMinkCaseComplexDimensions}
\mathcal{P}(\tilde{\zeta}_A)	&= \mathcal{P}(\zeta_d) = \left\lbrace D+\frac{2\pi}{T}ik: \hat{G}_0\left(\frac{k}{T}\right)\neq 0, k\in\Z\right\rbrace.
\end{align}
Actually, in this case, both $\tilde{\zeta}_A$ and $\zeta_d$ can be meromorphically extended to all of $\C$.
\end{example}

A number of other results concerning the existence of meromorphic continuation of $\tilde{\zeta}_A$ and $\zeta_d$ (as well as $\zeta_\calL$, in the case of fractal strings) and the resulting structure of the poles (the visible complex dimensions) can be found in \cite{LapRaZu}, under various assumptions on the bounded set $A\subset\R^m$ (or on the fractal string $\calL$). Moreover, a number of other applications of the distance and tube zeta functions are provided in \cite{LapRaZu}, in order to study various classes of fractals, including fractal chirps and `zigzagging' fractals.

\begin{remark}
The box-counting zeta function $\zeta_B$ of a set $A \subset \R^m$ given by Definition \ref{def:BoxCountingZetaFunction} is closely related to the tube zeta function $\tilde\zeta_A$. To see this, it suffices to perform the change of variables $x=t^{-1}$ in Equation \eqref{tube_zeta} and compare with Corollary~\ref{cor:EquivalentZetaFunctions}. Note that for $x>0$, we have  (under suitable hypotheses) $|A_{1/x}|\asymp x^{-m}N_B(A,x)$ as $x\to\infty$. Here, $N_B(A,x)$ is defined as the number of $x^{-1}$-mesh cubes that intersect $A$; see (iv) in Remark \ref{rmk:VariousCountingFunctions}. It is clear, however, that these two zeta functions are in general not equal to each other. Moreover, we do not know if the corresponding two sets of complex dimensions of $A$, associated with these two zeta functions, coincide.
\end{remark}

Various generalizations of the notion of distance zeta function are possible. One of them, which is especially interesting, deals with zeta functions associated to relative fractal drums. By a \emph{relative fractal drum}, introduced in \cite{LapRaZu}, we mean an ordered pair $(A,\Omega)$, where $A$ is an arbitrary nonempty subset of $\R^m$, and $\Omega$ is an open subset such that $A_\varepsilon$ contains $\Omega$ for some positive $\varepsilon$, and the $m$-dimensional Lebesgue measure of $\Omega$ is finite. (Note that both $A$ and $\Omega$ are now allowed to be unbounded.) The corresponding \emph{relative zeta function} (or the distance zeta function of the relative fractal drum), also introduced in \cite{LapRaZu}, is defined in much the same way as in Equation \eqref{eqn:DistanceZetaFunction}:
\begin{align*}
\zeta_d(s;A,\Omega)	&:=\int_{\Omega}d(x,A)^{s-m}dx.
\end{align*}

It is possible to show that the abscissa of convergence of the relative zeta function is equal to the relative box dimension $\overline\dim_B(A,\Omega)$; see \cite{LapRaZu} for details and illustrative examples.\footnote{We caution the reader that, in general, the relative upper box dimension (defined as an infimum over $\alpha \in \R$ instead of over $\alpha\geq 0$) may be negative and that there are even cases where it is equal to $-\infty$ (e.g., when $\textnormal{dist}(A,\Omega)>0)$; see \cite{LapRaZu}.}

\begin{remark}
\label{rmk:StandardFractalDrum}
For the standard notion of fractal drum, which corresponds to the choice $A=\partial\Omega$ with $\Omega$ bounded, we refer, e.g., to \cite{Lap1,Lap2} along with \cite[\S 12.5]{LapvF6} and the relevant references therein. 
\end{remark}

\begin{remark}
\label{rmk:RelativeFractalDrums}
It is easy to see that the notion of relative fractal drum $(A,\Omega)$ is a natural extension of the notion of fractal string ${\calL}=\{\ell_j\}$. Indeed, for a given (standard) fractal string ${\calL}=\{\ell_j\}$, it suffices to define $A=\{a_j\}$, where $a_j:=\sum_{k\ge j}\ell_k$ and $\Omega=\cup_{k\ge1} (a_{k+1},a_k)$. We caution the reader that the notion of generalized fractal string already exists but does not coincide with the notion of relative fractal drum. Specifically, in \cite[Ch.~4]{LapvF6}, a generalized fractal string is defined to be a locally positive or locally complex measure on $(0,\infty)$ supported on a subset of $(x_0,\infty)$, for some positive real number $x_0$.
\end{remark}

\section{Summary of results and open problems}
\label{sec:SummaryOfResultsAndOpenProblems}

For a bounded infinite set $A$, recall that $\overline{\dim}_{B}A$ denotes the upper box-counting dimension of $A$ given by Equation \eqref{eqn:UpperBoxDimension}, $\overline{\dim}_{M}A$ denotes the upper Minkowski dimension of $A$ given by Equation \eqref{eqn:UpperMinkowskiDimension}, $D_B$ denotes the abscissa of convergence of the box-counting zeta function $\zeta_B$ of $A$ given in Definition \ref{def:BoxCountingZetaFunction}, $\rho_\calL$ denotes the order of the geometric counting function $N_\calL$ given by Equation \eqref{eqn:OrderOfCountingFunction} where $\calL=\calL_B$, $D_N$ denotes the value corresponding to the (asymptotic) growth rate of $N_\calL$ given by Equation \eqref{eqn:CountingFunctionD}, and $D_d$ is the abscissa of convergence of the distance zeta function $\zeta_d$ given in Definition \ref{def:DistanceZetaFunction}. Furthermore, let $D_t$ denote the abscissa of convergence of the tube zeta defined in Equation \eqref{tube_zeta}.

The following theorem summarizes our main result (as stated in Theorem \ref{thm:Summary} of the introduction), which pertains to the determination of the box-counting dimension of a bounded infinite set. (Recall that the equalities $\overline{\dim}_{B}A=D_d=D_t$ are established in \cite{LapRaZu}; see Theorem \ref{thm:DistanceZetaFunctionUpperBoxDimension} and the comment following Theorem \ref{thm:DistanceAndTubeZetaFunctions} above.)

\begin{theorem}
\label{thm:MainResult}
Let $A$ be a bounded infinite subset of $\R^m$ and let $\calL=\calL_B$ be the corresponding box-counting fractal string. Then the following equalities hold\emph{:}
\begin{align*}
\overline{\dim}_{B}A &= \overline{\dim}_{M}A = D_B = \rho_\calL = D_N = D_d=D_t.
\end{align*} 
\end{theorem}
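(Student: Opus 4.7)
\medskip

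\noindent\textbf{Proof proposal.} The strategy is to assemble the stated chain of equalities from results already available in the paper, viewing the theorem as a consolidation of Theorems~\ref{thm:JohnnyDarko} and \ref{thm:DistanceZetaFunctionUpperBoxDimension}, Proposition~\ref{prop:CountingFunctionDFractalStringD}, Remark~\ref{rmk:DimensionEqualsOrder}, and Theorem~\ref{thm:DistanceAndTubeZetaFunctions}. The plan is to organize the six identities into three natural blocks: (a) the classical block $\overline{\dim}_B A=\overline{\dim}_M A$; (b) the fractal-string block $D_B=\rho_{\calL}=D_N$ together with $\overline{\dim}_B A=D_B$; and (c) the zeta-function block $\overline{\dim}_B A=D_d=D_t$.

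For block (a), I would use the equivalent description of $N_B(A,x)$ in terms of $x^{-1}$-mesh cubes intersecting $A$ (item (iv) of Remark~\ref{rmk:VariousCountingFunctions}) and compare with $\vol_m(A_\varepsilon)$ for $\varepsilon = x^{-1}$: each such intersecting mesh cube is contained in a uniformly thickened neighborhood of $A$, and conversely $A_\varepsilon$ is covered by a bounded multiple of the intersecting mesh cubes. This yields a two-sided inequality $N_B(A,x)\asymp x^m\,\vol_m(A_{x^{-1}})$ as $x\to\infty$, from which the equivalent forms in Remarks~\ref{rmk:EquivalentUpperBoxDimension} and \ref{rmk:EquivalentUpperMinkoskiDimension} immediately give $\overline{\dim}_B A = \overline{\dim}_M A$. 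For block (b), Theorem~\ref{thm:JohnnyDarko} already gives $\overline{\dim}_B A = D_B$; applying Proposition~\ref{prop:CountingFunctionDFractalStringD} with $\calL=\calL_B$ yields $D_B = D_{\calL_B} = D_N$; and the remaining equality $\rho_\calL = D_N$ is a direct manipulation: $D_N$ is the infimum of $\alpha\ge 0$ with $N_\calL(x)=O(x^\alpha)$, which on taking logarithms is exactly $\limsup_{x\to\infty}\log N_\calL(x)/\log x = \rho_\calL$ (this is the content of Remark~\ref{rmk:DimensionEqualsOrder}).

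For block (c), Theorem~\ref{thm:DistanceZetaFunctionUpperBoxDimension} supplies $\overline{\dim}_B A = D_d$; to deduce $D_d = D_t$ I would invoke the functional identity \eqref{identity} of Theorem~\ref{thm:DistanceAndTubeZetaFunctions}, namely $\zeta_d(s) = \varepsilon^{s-m}|A_\varepsilon| + (m-s)\tilde\zeta_A(s)$, valid initially for $\textnormal{Re}(s)>\overline{\dim}_B A$. Since $\varepsilon^{s-m}|A_\varepsilon|$ and $(m-s)$ are entire, solving for either zeta function in terms of the other shows that the half-planes of absolute convergence of $\zeta_d$ and $\tilde\zeta_A$ must coincide, so $D_d = D_t$. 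Splicing the three blocks together then yields the full chain of equalities.

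The only real subtlety — and the step I would flag as the main potential obstacle — is the bookkeeping in block (a): one must match the specific box-counting convention used in Definition~\ref{def:BoxCountingFractalString} (with $x^{-1}$ interpreted as the \emph{diameter} of covering sets as in option (i) of Remark~\ref{rmk:VariousCountingFunctions}) with the $m$-dimensional Lebesgue volume of $A_\varepsilon$, taking care that the proportionality constants appearing in the comparison $N_B(A,x)\asymp x^m\vol_m(A_{x^{-1}})$ are absorbed into the $O(\cdot)$ in the equivalent formulations of Remarks~\ref{rmk:EquivalentUpperBoxDimension} and \ref{rmk:EquivalentUpperMinkoskiDimension}; this is routine, but any mismatch of conventions (radius vs.\ diameter, open vs.\ closed cubes, mesh vs.\ free placement) would need to be handled uniformly. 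All other steps are direct appeals to the results cited above.
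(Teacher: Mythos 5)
Your proposal is correct and follows the paper's own decomposition: block (b) is assembled exactly as in the paper via Theorem~\ref{thm:JohnnyDarko}, Proposition~\ref{prop:CountingFunctionDFractalStringD} (applied to $\calL=\calL_B$), and Remark~\ref{rmk:DimensionEqualsOrder}, and block (c) via Theorem~\ref{thm:DistanceZetaFunctionUpperBoxDimension} together with the identity~\eqref{identity} from Theorem~\ref{thm:DistanceAndTubeZetaFunctions}, just as in the paper's comment following that theorem. The only cosmetic difference is in block (a), where the paper simply cites \cite{Falc} for the classical equality $\overline{\dim}_{B}A=\overline{\dim}_{M}A$ instead of reproducing the mesh-cube-versus-volume comparison that you sketch.
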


\begin{proof}
The classic equality $\overline{\dim}_{B}A = \overline{\dim}_{M}A$ is established in \cite{Falc}. The equality $\overline{\dim}_{M}A=D_B=D_N$ follows from Theorem \ref{thm:JohnnyDarko}. The equality $\rho_\calL=D_N$ follows from Remark \ref{rmk:DimensionEqualsOrder}. Finally, as was recalled just above, the equalities $\overline{\dim}_{M}A=D_d$ and $\overline{\dim}_{M}A=D_t$ are established in \cite{LapRaZu}; see Theorem \ref{thm:DistanceZetaFunctionUpperBoxDimension} and the comment following Theorem \ref{thm:DistanceAndTubeZetaFunctions}. These last two equalities are valid whether or not $A$ is infinite.
\end{proof}

Recall that, as stated in Definition \ref{def:BoxCountingFunction}, $N_B(A,x)$ denotes the maximum number of disjoint balls of radius $x^{-1}$ centered in $A$. In this setting and for $\varepsilon>0$ we have
\begin{align}
\label{eqn:LowerBoundForVolume}
B_m \varepsilon^m N_B(A,\varepsilon^{-1}) \leq |A_\varepsilon|,
\end{align}
where $A_\varepsilon$ is the $\varepsilon$-neighborhood of $A$, $|A_\varepsilon|=|A_\varepsilon|_m$ $B_m$ is the $m$-dimensional volume of a ball in $\R^m$ with unit radius, and $0<\varepsilon<x_1^{-1}$, where $x_1^{-1}$ is given by Proposition \ref{prop:BoxCountingJumps}. 

Motivated by Equation \eqref{eqn:LowerBoundForVolume} and Theorem \ref{thm:CountingFunctionAndComplexDimensions}, we propose the following open problem (which is stated rather roughly here).

\begin{openproblem}
\label{op:BoxCountingAndCountingFormulas}
Let $A$ be a bounded infinite subset of $\R^m$ with box-counting fractal string $\calL_B$. Assume suitable growth conditions on $\zeta_B$ \emph{(}such as the languidity of $\zeta_B$ on an appropriate window, see \emph{\cite[Chs.~5 \& 8]{LapvF6})} and assume for simplicity that all of the complex dimensions are simple \emph{(}i.e., are simple poles of $\zeta_B$\emph{)}. Then, as $\varepsilon\to 0^+$, compare the quantities
\begin{align}
\label{eqn:VolumeFormulaOpenProblem}
|A_\varepsilon|, \quad \varepsilon^m N_{\calL_B}(\varepsilon^{-1}), \quad \textnormal{and } \, \varepsilon^m \left(\sum_{\omega \in \calD_B}
\frac{\varepsilon^{-\omega}}{\omega}\textnormal{res}(\zeta_B(s);\omega) +  R(\varepsilon^{-1})\right), 
\end{align}
where $R(\varepsilon^{-1})$ is an error term of small order. 
\end{openproblem}

If one were to provide a more precise version of the above open problem and solve it, one might consider pursuing a generalization of Theorem \ref{thm:CriterionForMinkowskiMeasurability} in the spirit of the theory of complex dimensions of fractal strings, as described in \cite{LapvF6}, and of its higher-dimensional counterpart in \cite{LapPe3,LapPe2,LapPeWin}. Naturally, the clarified version of this open problem would consist of replacing the implicit `approximate equalities' in Equation \eqref{eqn:VolumeFormulaOpenProblem} with true equalities, modulo suitable modifications  and under appropriate hypotheses.

Analogously (but possibly more accurately), in light of the results from \cite{LapRaZu} discussed in Section \ref{sec:DistanceAndTubeZetaFunctions}, as well as from the results about fractal tube formulas obtained in \cite[Ch.~~8]{LapvF6} for fractal strings and in \cite{LapPe3,LapPe2} and especially \cite{LapPeWin} in the higher-dimensional case (for fractal sprays and self-similar tilings),\footnote{A survey of the results of \cite{LapPe3,LapPe2,LapPeWin} can be found in \cite[\S 13.1]{LapvF6}.} we propose the following open problem. (A similar problem can be posed for the tube zeta function $\tilde\zeta_A$ discussed in Section \ref{sec:TubeZetaFunction}.)

\begin{openproblem}
\label{op:VolumeTubularNeighborhood}
Let $A$ be a bounded subset of $\R^m$ with distance zeta function $\zeta_d$. Under suitable growth assumptions on $\zeta_d$ \emph{(}such as the languidity of $\zeta_B$ on an appropriate window, see \emph{\cite[Chs.~5 \& 8]{LapvF6})}, and assuming for simplicity that all of the corresponding complex dimensions are simple, calculate the volume of the tubular neighborhood of $A$ in terms of the complex dimensions of $A$ \emph{(}defined here as the poles of the meromorphic continuation of $\zeta_d$ union the `integer dimensions' \{0,1,\ldots,m\}\emph{)} and the associated residues. 

Moreover, even without assuming that the complex dimensions are simple, express the resulting fractal tube formula as a sum of residues of an appropriately defined `tubular zeta function' \emph{(}in the sense of \emph{\cite{LapPe3,LapPe2,LapPeWin,LapPeWi2})}.
\end{openproblem}

\section*{Acknowledgments}

The first author (M.~L.~Lapidus) would like to thank the Institut des Hautes Etudes Scientifiques (IHES) in Bures-sur-Yvette, France, for its hospitality while he was a visiting professor in the spring of 2012 and this paper was completed.

The authors would like to thank our anonymous referees. They provided helpful comments and suggestions in their very thorough reviews of a preliminary version of this paper.

In closing, the authors would like to thank the Department of Mathematics at the University of Messina and the organizers of the Permanent International Session of Research Seminars (PISRS), especially Dr.\ David Carfi, for their organization of and invitations to speak in the First International Meeting of PISRS, Conference 2011: Analysis, Fractal Geometry, Dynamical Systems, and Economics. The authors' participation in this conference led directly to collaboration on the development of the new results presented in the paper, namely the results pertaining to box-counting fractal strings and box-counting zeta functions.

\bibliographystyle{amsalpha}


\end{document}